\newcommand{\header}[1]{\subsubsection{#1}}%%\texorpdfstring{\colorbox{yellow}{\!#1\!\!}}{#1}}}
\newcommand{\ie}{i.e.,\xspace}
\newcommand{\eg}{e.g.\xspace}
\newcommand{\cf}{c.f.\xspace}
\newcommand{\arxiv}[2]{\ifthenelse{\boolean{arxiv}}{#1}{#2}}
\newcommand{\weaker}{\ensuremath{\preceq}}
\newcommand{\welltyped}{well-typed\xspace}
\begin{document}
\title{Assuming Just Enough Fairness to make \\ Session Types Complete for Lock-freedom}
\author{
\IEEEauthorblockN{Rob van Glabbeek}
\IEEEauthorblockA{
Data61, CSIRO and UNSW, Australia \\
Sydney, Australia 
\\ Email: rvg@cs.stanford.edu
}
\and
\IEEEauthorblockN{Peter H{\"o}fner}
\IEEEauthorblockA{
Australian National University\\
Canberra, Australia
\\ Email: peter.hoefner@anu.edu.au
}
\and
\IEEEauthorblockN{Ross Horne}
\IEEEauthorblockA{Computer Science,
University of Luxembourg \\
Esch-sur-Alzette, Luxembourg
\\ Email: ross.horne@uni.lu
}}

\IEEEoverridecommandlockouts
\arxiv{
\IEEEpubid{\makebox[\columnwidth]{\mbox{To appear in the Proceedings of LICS 2021.}\hfill} \hspace{\columnsep}\makebox[\columnwidth]{ }}
}{
\IEEEpubid{\makebox[\columnwidth]{978-1-6654-4895-6/21/\$31.00~
\copyright2021 IEEE \hfill} \hspace{\columnsep}\makebox[\columnwidth]{ }}
}
\maketitle

\begin{abstract}
We investigate how different fairness assumptions affect results concerning 
\textit{lock-freedom}, a typical liveness property targeted by session type systems.
We fix a minimal session calculus and systematically take into account all known fairness assumptions,
thereby identifying precisely three interesting and semantically distinct notions of lock-freedom, 
all of which having a sound session type system.
We then show that, by using a general merge operator in an otherwise standard approach to global session types, we obtain a session type system complete for the strongest amongst those notions of lock-freedom, which assumes only \emph{justness} of execution paths,
a minimal fairness assumption for concurrent systems.
\end{abstract}

\section{Introduction}
It has long been known that there is an intimate relationship between liveness properties and fairness assumptions.
Seminal work by Owicki and Lamport~\cite{Owicki1982} draws attention to the fact that liveness properties, such as ``each request will eventually be answered'' are indispensable to create correct concurrent programs.

Typically, a liveness property does not hold for all execution paths of a concurrent system: 
imagine two sellers and two buyers: 
\emph{buyer1} repeatedly requests product $A$ from \emph{seller1},
 who is able to sell the product.
Similarly, \emph{buyer2} and \emph{seller2} are able to exchange product $B$.
Assuming that both buyers try to request infinitely many products, there is an infinite execution path where product $A$ is always requested and bought, and $B$ is never sold. When taking all infinite execution paths into consideration, the rudimentary liveness property mentioned by Owicki and Lamport does not hold.
Ranging over all infinite or completed executions -- the default assumption for many model checkers -- 
essentially assumes only that the system as a whole progresses if there is some work to do and there is no deadlock.

When reasoning about starvation-sensitive liveness properties, i.e, properties that avoid situations where a component wants to do something but is denied forever, Owicki and Lamport state explicitly that such liveness properties depend on a fairness assumption.

Assuming that the parties in our example act independently, claiming that the aforementioned liveness property fails is unrealistic, for both sellers should be able to react on any request.
It is reasonable to make
some fairness assumption that ensures that the parties requesting and selling $A$ do not impair the parties involved with $B$. This simple example can be used as a litmus test that any realistic fairness assumption for a concurrent system should pass.

Thus, liveness properties have to be parametrised with a fairness assumption that rules out potential executions of a system.
As the fairness assumption becomes weaker (permitting more executions), the liveness property becomes stronger (systems can do more, so the liveness property is more likely to be rejected).

A reason why there exist different notions of fairness is that some notions are not realistic for some applications. 
For example, an implication of making the strongest of all fairness assumptions might be that
you will phone everyone in your phone book repeatedly, which is unlikely. The minimal assumption \textit{justness}~\cite{GH19} does not entail this, but it does imply that you will not be prevented from having a phone conversation due to unrelated calls between others.  
A recent survey~\cite{GH19} of fairness assumptions classifies dozens of semantically distinct notions by their strength in ruling out potential 
executions.
Thus, for every liveness property, there are dozens of incarnations of that property obtained simply by varying the underlying fairness assumption.

Not all liveness properties obtained by varying fairness assumptions are semantically distinct.
We identify two key reasons why liveness properties coincide: the (fixed) choice of process model and the choice of liveness property.

In this paper, we fix the process model to be a core synchronous session calculus featuring an internal and external choice~\cite{BHR84,DeNicola1987},
which is frequently studied in the context of session types. 
We also fix the liveness properties to follow a scheme for \textit{lock-freedom}~\cite{Padovani2014,Kobayashi2002}, which has emerged as one of the most important liveness properties for multiparty session calculi and related calculi, such as the linear $\pi$-calculus.
Lock-freedom is essentially the absence of starvation, as described above.
Clearly, the choice of the fairness assumption will influence whether a system is lock-free.

\begin{figure*}
\begin{gather*}
\begin{array}{c}
\left(\N_1 \pipar \N_2\right) \pipar \N_3
\equiv
\N_1 \pipar \left( \N_2 \pipar \N_3 \right)
\qquad
\M \pipar \N \equiv \N \pipar \M
\qquad
\M \pipar 0 \equiv \M
\end{array}
\\[12pt]
\begin{prooftree}
\N \equiv \N'
\quad
\N' \goesto{\alpha} \M'
\quad
\M' \equiv \M
\justifies
\N \goesto{\alpha} \M
\end{prooftree}
\qquad
\begin{prooftree}
k \in I
\justifies
\loc{p}{
\textstyle\bigoplus_{i \in I}\, \send{p_i}{\lambda_i}; \PP_i
}
\pipar
\N
\goesto{\tau}
\loc{p}{
\send{\chosen[3pt]{p_i}}{\lambda_i};\PP_k 
}
\pipar
\N
\end{prooftree}
\\[4pt]
\begin{prooftree}
\justifies
\loc{p}{\rec{X}\PP}
\pipar
\N
 \goesto{\tau}
\loc{p}{ \PP\sub{X}{\rec{X}\PP} }
\pipar
\N
\end{prooftree}
\qquad
\begin{prooftree}
k \in I
\justifies
\loc{p_k}{\chosen[3pt]{\send{q}{\lambda_k} ; \PQ} }
\pipar
\loc{q}{\textstyle{\sum_{i \in I}}\, \recv{p_i}{\lambda_i} ; \PP_i }
\pipar
\N
\goesto{\comm{p_k}{\lambda_k}{q}}
\loc{p_k}{\PQ}
\pipar
\loc{q}{\PP_k}
\pipar
\N
\end{prooftree}
\end{gather*}
\caption{The default semantics for networks that we fix for this study.}\label{fig:red}
\end{figure*}

The restriction to session calculi, for which \emph{session type systems} exist, allows us to answer the following question: 
\begin{quote}
For a given fairness assumption, does there exist a session type system that is sound and/or complete, in the sense that a 
network is lock-free if and/or only if it is \welltyped?
\end{quote}
Our systematic study yields the following main contributions.
\begin{enumerate}
\item 
We classify the notions of lock-freedom that arise by taking every notion of fairness in the survey~\cite{GH19} 
and using them to instantiate a parameter in a general scheme for lock-freedom.
The resulting classification includes classic notions of lock-freedom of session calculi found in the literature. Hence it relates these notions as well.
However, we discover that the notion of lock-freedom which arises from \textit{justness} is new to the literature.

\item We introduce a generalisation of the projection mechanism of global types onto threads, which uses the most general possible merge operator.
This solves the problem that session type systems employing global types
without an explicit parallel composition operator 
are incomplete, in the sense that there are lock-free networks that cannot be typed.
This leads to the following main result.
\item 
We prove that our session type system is 
complete for lock-freedom, when assuming justness.
To the best of our knowledge, this is the first completeness result of this kind.
We delineate the scope of our completeness result by showing that completeness does not hold for 
weaker notions of lock-freedom.

\item 
We prove that more notions of lock-freedom coincide when restricting to race-free networks.
Furthermore, race-free networks are sound for all notions of lock-freedom, whenever we assume at least $\textit{justness}$.
\end{enumerate}
Following \cite{Castellani2019b,Severi2019}, we employ session types that abstract from
  the {concrete }types (e.g.\ \texttt{Bool} or \texttt{Nat}) of messages, using labels $\lambda$ instead.
  As a result, systems and types have a fairly similar syntax. 
It is fairly trivial to move from our session type system with labels to one with data and data  types.
\vspace{1ex}

\paragraph*{Structure of the paper}
Section~\ref{sec:calculus} introduces our session calculus and
a spectrum of fairness assumptions, and then systematically classifies the resulting spectrum of lock-freedom properties. Section~\ref{sec:types} presents
our session type system featuring a general merge operator and guarded types, which we prove to be complete with respect to $\Live{\J}$ -- the notion of lock-freedom arising from the assumption of justness -- for all networks.
Section~\ref{sec:sound} considers race-free networks in order to explore the scope of soundness results.
Section~\ref{sec:related} situates our results with respect to notions of lock-freedom from the literature.

\section{The scope: A session calculus, its key fairness notions and liveness properties}\label{sec:calculus}

In this section,\pagebreak[3] we define the session calculus and a scheme for lock
freedom. We also explain various fairness assumptions and illustrate their differences through separating examples.

\subsection{Syntax and semantics for threads and networks} 

Our session calculus features finitely many recursive \textit{threads} that send and receive messages.
Threads, uniquely identified by location names, feature an internal choice $\bigoplus \send{p_i}{\lambda_i}$ between messages labelled $\lambda_i$ sent to locations  $p_i$ (a choice made at run-time entirely by the sending thread), and an external choice $\sum \recv{p_i}{\lambda_i}$ amongst messages received (meaning that the thread is ready to receive 
different messages $\lambda_i$ from $p_i$, but cannot influence which of them will eventually come through).
\[
\begin{array}{rl}
\PP \coloneqq & \End \\
       \mid & \bigoplus_{i \in I}\, \send{p_i}{\lambda_i}; \PP_i \\
       \mid & \sum_{i \in I}\, \recv{p_i}{\lambda_i}; \PP_i \\
       \mid & X \\
       \mid & \rec{X}\PP
\end{array}
\qquad\qquad
\begin{array}{rl}
\N \coloneqq & \loc{p}{ \PP } \\
        \mid & 0 \\
        \mid & \N \pipar \N
\end{array}
\]
The index sets $I$ are finite, and in the case of $\bigoplus_{i\in I}$ also non-empty.
We enforce guarded recursion by excluding threads of the form $\rec{X}X$ or $\rec{X}\rec{Y}\PP$.
If $\loc{p}\PP$ is a sub-expression of a network $\N$, then $p$ is called a 
\emph{location} of $\N$.\linebreak[3]
In a network $\N$, all locations are required to be distinct and all threads closed, meaning
that each occurrence of a variable $X$ is in the scope of a recursion $\rec{X}{\PP}$.
Moreover, in each sub-expression $\send{p_k}{\lambda_k}$ or $\recv{p_k}{\lambda_k}$, the $p_k$ must be a location of $\N$.
We may elide $\End$; we write
$\send{p_1}{\lambda_1}; \PP_1 \oplus \dots \oplus \send{p_n}{\lambda_n}; \PP_n$
for $\bigoplus_{i \in \{1,\dots,n\}}\send{p_i}{\lambda_i}; \PP_i$, and $\recv{p_1}{\lambda_1}; \PP_1 + \dots + \recv{p_n}{\lambda_n}; \PP_n$
for $\sum_{i \in \{1,\dots,n\}}\send{p_i}{\lambda_i}; \PP_i$.
In particular, we write $\recv{p}{\lambda}; \PP$ in case 
$I$ is a singleton set. We follow a recent trend allowing inputs in an external choice to listen to different locations~\cite{Castellani2019b,Jongmans2020}, which allows us to broaden the scope of our investigation.

\vspace{1ex} % The style does not include enough space between subsubsections

\subsubsection*{A reduction semantics for our session calculus}

The rules for our session calculus, presented in Figure~\ref{fig:red}, are fairly standard.
In this semantics, an output that a thread has committed to can interact synchronously with some input in an external choice. 
Also, recursion is unfolded by a $\tau$-transition and the standard associativity and commutativity of parallel composition can be applied to enable any transition.

A design decision, we will demonstrate to be significant, is that there is a $\tau$-transition for resolving all internal choices. 
To ensure that singleton internal choices perform only one $\tau$-transition (and not a diverging sequence of $\tau$--transitions),
the transition ends in a \emph{network state} that is not a syntactically valid network.
\emph{Network states} are comprised of located \emph{thread states}, which
due to the annotation $\chosensymb$,
are not necessarily threads themselves.

\subsection{Fairness notions for session calculi \label{sec:fairness}}

We now discuss three fairness assumptions for our session calculus.
A fairness assumption restricts the set of complete execution paths, here simply referred to as \textit{paths}.
\vspace{1ex} % The style does not include enough space between definitions

\begin{definition}{Preliminaries}
A path consists of a network state $\N_0$ and a maximal list of transitions $\N_i \goesto{\alpha_i} \N_{i+1}$, permitted by Figure~\ref{fig:red}.%
\vspace{1ex} % The style does not include enough space between definitions
\end{definition}

Maximality ensures that either the list is infinite or the final network state has no outgoing transition, that is, we restrict ourselves to \textit{complete} execution paths. 

A \emph{fairness notion} $\F$ characterises a subset of all paths as the \emph{fair} ones, modelling
executions that we assume can actually occur; we refer to such paths as \emph{$\F$-fair paths}. 
It is required to satisfy the condition of
\emph{feasibility}~\cite{Apt1988}, saying that each finite prefix of a path is also a prefix of a fair path.
One notion of fairness $\G$ is \emph{stronger} than another one $\F$ -- in symbols $\F\weaker \G$ --
if it rules out more paths as unfair.

A network $\N$ \emph{successfully terminates} under a fairness notion $\F$
iff all fair paths successfully terminate, \ie all components of $\N$ eventually take the form $\loc{p}{\End}$.

A {liveness property}, or more generally a \emph{linear-time property}, is formalised as a property $\varphi$ of paths.
It holds for network state $\N_0$ under a certain fairness assumption iff all fair paths
starting in $\N_0$ satisfy $\varphi$.

\vspace{1ex} % The style does not include enough space between subsubsections
\header{Strong and weak fairness}
In \cite{GH19}, the concepts of \emph{strong and weak fairness} are parametrised by the notion of a \emph{task}.
What a task is may differ from one notion of fairness to another, but for each task it should be
clear when it is \emph{enabled} in a network state, and when a path \emph{engages} in a task.
A task $T$ is said to be \emph{relentlessly} enabled on a path $\pi$ if each suffix of $\pi$ contains a
network state in which $T$ is enabled; it is \emph{perpetually} enabled if it is enabled in all
network states of $\pi$. A path $\pi$ is \emph{strongly fair} if, for each suffix $\pi'$ of
$\pi$, each task that is relentlessly enabled on $\pi'$ is engaged in by $\pi'$.
It is \emph{weakly fair} if, for each suffix $\pi'$ of $\pi$, each task that is perpetually enabled
on $\pi'$ is engaged in by $\pi'$.

\newcommand{\task}{\mathfrak t}
Given a notion of a task $\task$, the concept of strong fairness S$\task$ is always stronger than its weak counterpart W$\task$, \ie
$\mbox{W}\task \weaker \mbox{S}\task$.

In \cite{GH19},  several notions of fairness found in the literature are characterised through
formalising what constitutes a task.
\emph{Fairness of transitions} is obtained by taking the tasks to be the transitions.
Such a task is enabled in a network state $\N$ if $\N$ is the source state of that transition.
A path $\pi$ engages in a transition  if that transition occurs in $\pi$.
\vspace{1ex} % The style does not include enough space around facts

\begin{fact}{ST}
\textit{Strong fairness of transitions} ($\St\T$) characterises exactly those paths $\pi$ with the property
that whenever a transition is relentlessly enabled on $\pi$ then
the transition must be taken infinitely often on $\pi$;
it rules out all other paths.
\vspace{1ex} % The style does not include enough space around facts
\end{fact}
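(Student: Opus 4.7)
The plan is a direct unfolding of definitions on both sides of the claimed equivalence. Strong fairness of transitions on a path $\pi$ asserts that for every suffix $\pi'$ of $\pi$, every transition relentlessly enabled on $\pi'$ occurs in $\pi'$. The target characterisation asserts only that every transition relentlessly enabled on $\pi$ itself occurs infinitely often in $\pi$. I will show these two conditions are equivalent, from which the final clause (``rules out all other paths'') follows by contraposition.

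For the forward direction, assume $\pi$ is $\St\T$-fair and let $t$ be a transition relentlessly enabled on $\pi$. Consider an arbitrary suffix $\pi'$ of $\pi$. Every suffix of $\pi'$ is also a suffix of $\pi$, hence contains a state in which $t$ is enabled; so $t$ is relentlessly enabled on $\pi'$ too. Strong fairness applied to the suffix $\pi'$ then forces $t$ to occur in $\pi'$. Since $\pi'$ was arbitrary, $t$ occurs in every suffix of $\pi$, i.e.\ in infinitely many positions.

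For the converse, assume every transition relentlessly enabled on $\pi$ occurs infinitely often on $\pi$. Fix a suffix $\pi'$ of $\pi$ and a transition $t$ relentlessly enabled on $\pi'$. The crux is to lift relentless enabledness from $\pi'$ back to the whole of $\pi$: any suffix $\pi''$ of $\pi$ is either itself a suffix of $\pi'$, in which case it contains a state enabling $t$ by hypothesis on $\pi'$; or it strictly extends $\pi'$ backwards and therefore contains $\pi'$ entirely as a sub-path, which already contains states enabling $t$ (since $\pi'$ is a suffix of itself). Either way, $t$ is relentlessly enabled on $\pi$, hence by hypothesis occurs infinitely often on $\pi$, and in particular occurs somewhere in $\pi'$, so $\pi'$ engages in $t$.

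The argument is pure definition-chasing; the only place any care is needed is in the second direction, where one must distinguish suffixes of $\pi'$ from suffixes of $\pi$ that strictly extend $\pi'$ towards the start. I anticipate no further obstacles, and the feasibility condition on fairness notions plays no role in this particular equivalence.
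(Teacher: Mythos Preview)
Your argument is correct. The paper does not supply a proof of this Fact; it is stated as an immediate reformulation of the definition of strong fairness (with tasks instantiated to transitions), so your explicit definition-chasing simply spells out what the paper leaves implicit. Both directions of your equivalence are sound, and your case split in the converse direction (suffixes of $\pi'$ versus suffixes of $\pi$ that properly contain $\pi'$) is the right way to lift relentless enabledness from $\pi'$ to $\pi$.
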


In \cite{GH19}, it is shown that for finite-state systems strong fairness of transitions ($\St\T$) is
the strongest feasible notion of fairness.%
\vspace{1ex} % The style does not include enough space around examples

\begin{example}{binaryext:early}
Consider the following network where a $\buyer$ chooses to talk to or to buy a product from a $\seller$, after which the order is shipped.
\[
\renewcommand{\extend}{\texttt{talk}}%
\begin{array}{rl}
&
\loc{\buyer}{
 \rec{X}\left(
  \send{\seller}{\extend};X
  \oplus 
  \send{\seller}{\buy}
 \right)
}
\\
\pipar
&
\seller
\mbox{\Large\textlbrackdbl}
 \rec{Y}
  \begin{array}[t]{@{}l@{}}
  \left(\recv{\buyer}{\extend} ; Y \right.
  \\~+\left.\recv{\buyer}{\buy} ; \send{\shipper}{\order}
 \right) \mbox{\Large\textrbrackdbl}\end{array}
\\
\pipar
&
\loc{\shipper}{
  \recv{\seller}{\order} 
}
\end{array}
\]
The network successfully terminates when assuming $\St\T$, for in the only infinite execution the
$\tau$-transition belonging to instruction $\send{\seller}{\buy}$ is relentlessly enabled but
never taken.%
\vspace{1ex} % The style does not include enough space between definitions
\end{example}

A notion of task that figures prominently in the literature is that of a \emph{component}.
A component is one of the prime elements in a parallel composition -- in
a network expression it is completely determined by its location.
Each transition involves either one or two components.
A component is \emph{enabled} in a network state iff a transition involving that component is
enabled; a path \emph{engages} in a component iff it contains a transition that involves that component.

We define a function $\comp$ which returns for a transition the set of components participating in the transition.
Each transition labelled $\tau$ involves exactly one component (location) evident from the rule;
each transition labelled $\comm{p}{\lambda}{q}$ involves exactly two components, $p$ and $q$.
This defines strong and weak fairness of components.
\vspace{1ex} % The style does not include enough space around facts

\begin{fact}{SC}
\textit{Strong fairness of components} ($\St\C$) characterises the paths $\pi$ 
such that, for any location $p$, if there are transitions involving $p$ relentlessly enabled on $\pi$, then a transition that involves $p$ must be taken infinitely often on $\pi$.
\vspace{1ex} % The style does not include enough space around facts
\end{fact}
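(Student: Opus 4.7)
The plan is to specialise the general definition of strong fairness (``for each suffix $\pi'$ of $\pi$, each task relentlessly enabled on $\pi'$ is engaged in by $\pi'$'') to the component-task and simplify the outer ``for each suffix'' quantifier. I would carry this out in two steps.

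First, I would observe that being relentlessly enabled is a tail property: for any task $T$ and any path $\pi$, $T$ is relentlessly enabled on $\pi$ iff $T$ is relentlessly enabled on every suffix $\pi'$ of $\pi$. The direction $(\Leftarrow)$ is trivial, since $\pi$ is a suffix of itself. For $(\Rightarrow)$, note that every suffix of a suffix of $\pi$ is itself a suffix of $\pi$, so the requirement transfers. Dually, every suffix $\pi'$ of $\pi$ engages in $T$ iff $T$ is engaged in infinitely often on $\pi$, which is immediate from the definition of ``engages in'': engagements occur in arbitrarily late positions iff each tail $\pi'$ contains at least one. Combining these two observations collapses the general definition to: $\pi$ is strongly fair iff, for every task $T$ relentlessly enabled on $\pi$, $T$ is engaged in infinitely often on $\pi$.

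Second, I would instantiate $T$ as the task of being component $p$. Using the definitions given just above the fact (via $\comp$ on transitions), this task is enabled in $\N$ iff some transition involving $p$ is enabled in $\N$, and a path engages in this task iff it contains a transition involving $p$. Plugging these into the collapsed definition yields exactly the stated characterisation: for each $p$, if transitions involving $p$ are relentlessly enabled on $\pi$, then some transition involving $p$ is taken infinitely often on $\pi$. The only boundary case needing care is that of a finite maximal path, whose final state has no outgoing transitions; there no component can be enabled, hence no component-task can be relentlessly enabled on any suffix reaching that state, so both sides of the equivalence hold vacuously. I do not anticipate any deeper obstacle: the result is essentially a definitional unfolding together with one compact suffix-manipulation argument.
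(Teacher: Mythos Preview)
Your proposal is correct. The paper does not supply a proof of this statement at all: it is labelled a \emph{Fact} and is meant to be read as an immediate instantiation of the general strong-fairness template (tasks relentlessly enabled on each suffix must be engaged in) to the component-task, together with the standard collapse of the nested suffix quantifier that you spell out. Your two-step argument (relentless enabledness is suffix-closed; ``every suffix engages'' is equivalent to ``engages infinitely often'') is exactly the routine unfolding the paper takes for granted, so your approach and the paper's are the same in spirit.
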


\begin{fact}{WC}
A path $\pi$ satisfies \textit{weak fairness of components} ($\W\C$) whenever, 
for every location $p$, if some transition involving $p$ is, from some
state onwards, perpetually enabled,
then a transition that involves $p$ occurs infinitely often in $\pi$.
\vspace{1ex} % The style does not include enough space around facts
\end{fact}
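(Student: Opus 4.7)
The plan is to derive Fact WC directly by unfolding the general definition of weak fairness with components taken as tasks, showing both directions of the implicit equivalence between the two formulations.

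For the forward direction, suppose $\pi$ satisfies the general definition of weak fairness of components, and that some transition involving a component $p$ is, from some state $\N_k$ onwards, perpetually enabled, i.e.\ in every $\N_i$ with $i \ge k$ there is an enabled transition involving $p$. Then for every $j \ge k$ the suffix $\pi'$ of $\pi$ starting at $\N_j$ has $p$ enabled in each of its states, which is exactly the condition that the task $p$ is perpetually enabled on $\pi'$. The general definition then yields that $\pi'$ engages in $p$, i.e.\ some transition whose $\comp$ set contains $p$ occurs at a position $\ge j$. Since $j$ can be taken arbitrarily large, transitions involving $p$ occur infinitely often in $\pi$.

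For the reverse direction, assume the condition in the statement of Fact WC holds, and let $\pi'$ be any suffix of $\pi$ on which $p$ is perpetually enabled. Letting $k$ be the index of the first state of $\pi'$, the assumption gives that transitions involving $p$ are perpetually enabled in $\pi$ from $\N_k$ onwards. Hence, by hypothesis, transitions involving $p$ occur infinitely often in $\pi$, and in particular at some position $\ge k$, so $\pi'$ engages in $p$, as required.

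The only subtlety worth flagging is the treatment of finite paths. By maximality, a finite path terminates in a state with no outgoing transitions, so no component can be perpetually enabled on a suffix that includes the last state; thus the quantification over suffixes in the general definition and the ``from some state onwards'' phrasing in Fact WC both reduce to the infinite case, where they match up cleanly. Beyond that, the argument is a routine unfolding of the definitions of enabledness, engagement, and the function $\comp$, and I do not expect any technical obstacle.
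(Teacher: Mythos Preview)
Your proposal is correct. The paper does not give a proof of Fact~WC at all: it is presented as a ``Fact'' that follows immediately by instantiating the general definition of weak fairness (tasks perpetually enabled on a suffix must be engaged in) with components as tasks, using the function $\comp$ and the stated notions of enabledness and engagement for components. Your argument is exactly this unfolding, carried out in detail; the forward and reverse directions you give are the routine verification the paper leaves implicit, and your remark about finite paths is a harmless sanity check rather than a needed ingredient.
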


Under the fairness assumption $\St\C$, Example~\ref{ex:binaryext:early} does not successfully terminate,
for there is an infinite path where, alternately, the buyer performs a $\tau$-transition to select
the left branch of its choice and then the $\buyer$ and $\seller$ talk to each other. Along this
path there is never a transition enabled that involves the $\shipper$; hence that branch need never be taken.
This illustrates that $\SC$ allows strictly more paths than $\St\T$, \ie $\SC\precneqq\St\T$.
\vspace{1ex} % The style does not include enough space around examples

\begin{example}{sc-j:early}
To see that $\SC$ excludes some paths, consider the following network.
\[
\begin{array}{rl}
&
\loc{\seller}{
 \rec{X}\left(
  \recv{\buyer1}{\order1};X
  +
  \recv{\buyer2}{\order2}
 \right)
}
\\
\pipar
&
\loc{\buyer1}{
 \rec{Y}
  \send{\seller}{\order1} ; Y
}
\\
\pipar
&
\loc{\buyer2}{
  \send{\seller}{\order2}
}
\end{array}
\]
The above network terminates under $\SC$ (albeit in a state where $\buyer1$ has not successfully terminated), for, in any infinite execution, a
transition from $\buyer2$ is relentlessly enabled but never taken.
It does not need to terminate under weak fairness of components,
for no transition is enabled
perpetually due to the $\tau$-transitions  that  unfold the recursion after each communication.
\vspace{1ex} % The style does not include enough space between examples
\end{example}

Guaranteeing termination in this example seems wrong as the fairness assumption 
constrains the `free will' of the \seller\ in the sense that they have to sell items to $\buyer2$.
Therefore we will introduce a weaker fairness assumption.
\vspace{1ex} % The style does not include enough space between subsubsections

\header{Justness}
We consider a minimal notion of fairness that guarantees only that concurrent transitions cannot prevent each other from happening. Informally, two transitions are concurrent if no component is involved in both transitions.%
\vspace{1ex} % The style does not include enough space between definitions

\begin{definition}{concurrent}
Two transitions $t$ and $u$ are \emph{concurrent}, notation $t \conc u$, if
$\comp(t) \cap \comp(u)=\emptyset$.
\vspace{1ex} % The style does not include enough space between definitions
\end{definition}
Justness guarantees that once a transition is enabled that stems from a set of parallel components, one (or more) of
these components will eventually partake in a transition.
\vspace{1ex} % The style does not include enough space around definitions

\begin{definition}{just}
A path $\pi$ is \emph{just} whenever,
for every suffix of $\pi$ beginning with state $s$ and for every transition $t$ enabled in state $s$, some transition $u$ occurs in that suffix such that $t \not\conc u$. 
Equivalently, one might say that no enabled transition is denied forever only by concurrent transitions.
The corresponding fairness assumption, which only allows just paths, is called \emph{justness} (\J).
\vspace{1ex} % The style does not include enough space around definitions
\end{definition}

Example \ref{ex:sc-j:early} illustrates that $\J$ is strictly weaker than $\St\C$, \ie \J\ rules out fewer paths.
While this system terminates under $\St\C$, it does not necessarily terminate under \J, for it allows infinite communication between the \seller\ and $\buyer1$.
Although the transition involving $\buyer2$ is relentlessly enabled, it is not ruled out by justness 
since the $\seller$ is involved in both communications.

Justness is however enough to assume that in our leading example at the top of the introduction,
the two concurrent interactions cannot prevent each other from occurring.%
\vspace{1ex} % The style does not include enough separating space

\begin{example}{p-jt:early}
More formally, we can model the scenario described at the top of the introduction as follows.
\[
\begin{array}{rl}
&
\loc{\seller1}{
 \rec{X}
  \recv{\buyer1}{\order};X
}
\\
\pipar
&
\loc{\buyer1}{
 \rec{Y}
  \send{\seller1}{\order} ; Y
}
\\
\pipar
&
\loc{\seller2}{
 \rec{Z}
  \recv{\buyer2}{\order};Z
}
\\
\pipar
&
\loc{\buyer2}{
 \rec{W}
  \send{\seller2}{\order} ; W
}
\end{array}
\]
There is no just path where $\seller2$ and $\buyer2$ never act. Indeed, for any just path all components act infinitely often.
\vspace{1ex} % The style does not include enough space between definitions
\end{example}

In general, $\J \weaker \W\C$ holds~\cite{GH19}.
In addition, for our session calculus, justness coincides with weak fairness of components.%
\vspace{1ex} % The style does not include enough separating space

\begin{proposition}{WC}
WC coincides with J.
\end{proposition}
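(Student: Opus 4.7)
The plan is to prove the two inclusions separately. One direction, $\J \weaker \W\C$---that every weakly fair path is just---is the general result from~\cite{GH19} and may be cited directly. The new content is the reverse inclusion specific to our session calculus: that every just path is weakly fair for components.

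I would proceed by contradiction. Fix a just path $\pi$ and suppose some location $p$ is perpetually enabled on some suffix $\pi'$ yet participates in no transition of $\pi'$. Then the thread state $\PP_p$ at $p$ is constant throughout $\pi'$. A case split on the form of $\PP_p$ (which must admit at least one enabled transition) leaves three essentially distinct situations: either $\PP_p$ is an internal choice or of the form $\rec{X}\PP$; or $\PP_p$ is an external choice $\sum \recv{p_i}{\lambda_i}; \PP_i$; or $\PP_p$ is a committed output $\chosen{\send{q}{\lambda}}; \PQ$.

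The first two situations are easy. In the internal-choice-or-recursion case, $p$'s enabled transition $t$ is a $\tau$-step with $\comp(t) = \{p\}$, so the witness $u$ with $t \not\conc u$ supplied by justness at the first state of $\pi'$ must itself involve $p$, contradicting the assumption. In the external choice case, enablement at the first state forces some partner $p_j$ to sit in a committed output state $\chosen{\send{p}{\lambda_j}}; \PQ$; since the only transition available at such a location is the communication with $p$, the witness supplied by justness for $\comm{p_j}{\lambda_j}{p}$ must also involve $p$.

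The real obstacle is the committed-output case $\PP_p = \chosen{\send{q}{\lambda}}; \PQ$, because the justness witness for $\comm{p}{\lambda}{q}$ is only guaranteed to share a component with $\{p,q\}$ and so could in principle be discharged by a transition of $q$ alone. To close the argument I would observe that perpetual enablement of $p$ forces $q$, at every state of $\pi'$, to be in an external choice state containing $\recv{p}{\lambda}$---in particular, never in a recursion, internal choice, or committed-send state---so $q$ performs no $\tau$-transitions, and every post-state of a $q$-transition is itself such an external choice (never a $\rec{X}\PP$ awaiting unfolding). Each $q$-transition therefore replaces $q$'s state by a strict syntactic sub-term of itself, so only finitely many $q$-transitions can occur in $\pi'$. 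Reapplying justness at the state of $\pi'$ immediately after the last $q$-transition, to the transition $\comm{p}{\lambda}{q}$ which is still enabled there, yields a witness that must involve $p$, the desired contradiction.
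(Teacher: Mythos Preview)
Your proposal is correct and follows essentially the same approach as the paper's proof: both cite $\J\weaker\W\C$ from~\cite{GH19}, argue the converse by contraposition, and split on the form of the stuck thread at $p$, with the crux being the committed-output case where one shows the partner $q$ must eventually stop moving (your ``strict subterm descent'' is just a more explicit phrasing of the paper's ``if $q$ keeps moving it will reach a state $\rec{X}\PQ$''). The only minor difference is that the paper reduces the external-choice case to the committed-output case, whereas you dispatch it directly; both are fine.
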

\begin{proof}
Let $\pi$ be an infinite path in our network that is not WC-fair. So, on a suffix of $\pi$, a component $p$ is
perpetually enabled, but never taken. In case $p$ is stuck in a state where its next transition is a
$\tau$, then $\pi$ is not just.

In case $p$ is stuck in a state $\chosen[2pt]{\send{q}{\lambda}}; \PP$,
then, for component $p$ to be perpetually
enabled, $q$ must always be in a state $\sum_{i \in I} \recv{p_i}{\lambda_i}; \PP_i$ with $p=p_k$ and
$\lambda=\lambda_k$ for some $k\in I$. Location $q$ must get stuck in such a state, for if $q$
keeps moving, it will at some point reach a state $\rec{ X } \PQ$, which is not of the above form.
Consequently, $\pi$ is not just.

The remaining case is that $p$ is stuck in a state of the form $\sum_{i \in I} \recv{p_i}{\lambda_i}; \PP_i$.
For component $p$ to be enabled, a component $p_k$ with $k\in I$ must be in a state
$\send{p}{\lambda_k}; \PP$. Again it follows that $\pi$ is not just.
\end{proof}
As we will observe later, under a different choice of semantics of our session calculus,
  $\J$ and $\W\C$ do not coincide.

\begin{figure}[t]
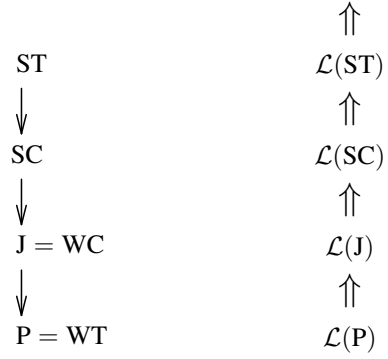

\vspace{-8ex}
\hfill\begin{minipage}{0.12\textwidth}
\input{lattice2b}
\centerline{\raisebox{-14ex}{\box\graph}}
\end{minipage}
\hfill\hfill
\begin{minipage}{0.12\textwidth}
\input{lattice3b}
\centerline{\raisebox{-8ex}{\box\graph}}
\end{minipage}
\hfill{}
\vspace{1ex}
\caption{A classification for our session calculus
of fairness assumptions and liveness properties.}
\vspace*{-11pt}
\label{reduced taxonomy}
\end{figure}
\vspace{1ex} % The style does not include enough separating space

\header{Further notions of fairness}
If we define \textit{weak fairness of transitions} ($\W\T$), where, as for $\St\T$, the tasks are the individual transitions, then $\W\T$ imposes no restrictions on the completed traces for our languages.
To see why, observe that in any infinite path, no transition is enabled
perpetually due to the $\tau$-transitions for unfolding recursion. This most liberal fairness
assumption, which we denote $\Pr$,\footnote{\textit{On terminology.}
In related work~\cite{GH19}, $\Pr$ stands for ``progress'', the assumption  that a system cannot
spontaneously halt as long as it is neither deadlocked nor successfully terminated. However, the
word ``progress'' is heavily overloaded,  meaning anything from
deadlock-freedom~\cite{Honda2016,Dezani2006} and
lock-freedom~\cite{Dezani2005,Padovani2014,Najm1999} to other liveness properties~\cite{Misra2001},
such as weak and strong normalisation. That means, it refers to some desired property rather than an assumption on paths.
Furthermore, there are related liveness properties such as \textit{global progress} that concerns delegation~\cite{Coppo2016}.
}
only guarantees that the system as a whole will progress if some transition is enabled.

The survey~\cite{GH19} classifies $21$ different notions of fairness, covering all common notions found in the literature.
In our session calculus, many of these notions coincide, so that only $7$ different notions of fairness remain; 
see Appendix\arxiv{~\ref{app:A}}{~A of \cite{GHH21}}.

Here we have presented those that we found to be the most important notions for session calculi -- summarised in 
Figure~\ref{reduced taxonomy}.
Notably, there are strong fairness assumptions strictly between $\St\C$ and $\St\T$.
However, 
every fairness assumption from \cite{GH19} leads to a notion of lock-freedom that coincides with
one based on a fairness assumption defined in this section\arxiv{ (see Appendix~\ref{app:B})}{\cite{GHH21}}.

\subsection{A scheme for lock-freedom\label{ssec:lockfreedom}}
As discussed, a fairness assumption rules out certain paths
for given systems. As lock-freedom considers only the paths of a system
  that can actually be taken,
it depends on the underlying fairness assumption.
Hence, a \emph{scheme for lock-freedom} reads as follows:
\begin{equation}\label{lock scheme}
\mbox{\begin{minipage}{3.1in}
Along any $\F$-fair path, if a component has not successfully terminated, then it must eventually do something.
\end{minipage}}
\end{equation}

We can now formally define our scheme for lock-freedom with respect to a fairness assumption $\F$.
\vspace{1ex} % The style does not include enough separating space

\begin{definition}{live}
Let $\F$ be a fairness assumption.
A network $\N$ satisfies liveness property $\Live{\F}$ (for short $\N\models\Live\F$) if, for each $\F$-fair
path $\pi$ starting in $\N$ and each location $p$ of $\N$,
\begin{itemize}
\item either $p$ successfully terminates on $\pi$, or
\item $\pi$ contains infinitely many transitions involving $p$.
\end{itemize}
\end{definition}
Remember that a location $p$ successfully terminates when it is of the form $\loc{p}{\End}$.
The letter $\mathcal{L}$ indicates ``liveness'' or ``lock-freedom''.

We say $\Live\F$ is \emph{stronger} than $\Live\G$, denoted by $\Live\F\Rightarrow\Live\G$, if
$\N\models\Live{\F}\Rightarrow\N\models\Live{\G}$, for all $\N$. It is \emph{strictly stronger}
if moreover $\Live{\G}\not\Rightarrow\Live{\F}$.
In case a fairness assumption $\G$ is stronger than $\F$, then $\Live{\G}$ is a weaker property than $\Live{\F}$.
\vspace{1ex} % The style does not include enough separating space

\begin{proposition}{prop2}
$\F\weaker\G$ implies $\Live{\F}\Rightarrow\Live{\G}$,
for fairness assumptions $\F$ and $\G$.
\vspace{1ex} % The style does not include enough separating space
\end{proposition}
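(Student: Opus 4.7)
The plan is to unpack the three definitions and observe that the statement is essentially an inclusion argument on sets of paths. The key observation is that $\F \weaker \G$, by the definition given in the text (``$\G$ is stronger than $\F$ if it rules out more paths as unfair''), means that every $\G$-fair path is also an $\F$-fair path. In other words, the set of $\G$-fair paths starting in any network state is a subset of the set of $\F$-fair paths starting in that state.

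With that observation in hand, the rest is routine. First I would assume $\N \models \Live{\F}$ and pick an arbitrary $\G$-fair path $\pi$ starting in $\N$. By the inclusion just noted, $\pi$ is also $\F$-fair. Then I would invoke the assumption $\N \models \Live{\F}$ applied to $\pi$: for each location $p$ of $\N$, either $p$ successfully terminates on $\pi$ or $\pi$ contains infinitely many transitions involving $p$. Since $\pi$ was an arbitrary $\G$-fair path, this is exactly the condition of Definition~\ref{def:live} instantiated with $\G$, so $\N \models \Live{\G}$. As $\N$ was arbitrary, $\Live{\F} \Rightarrow \Live{\G}$.

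There is no real obstacle here; the only subtlety worth flagging is that the direction of the implication in the conclusion is reversed relative to the direction in $\weaker$, because the liveness scheme quantifies universally over fair paths (so fewer fair paths makes the property easier to satisfy). The earlier remark in the excerpt, ``In case a fairness assumption $\G$ is stronger than $\F$, then $\Live{\G}$ is a weaker property than $\Live{\F}$,'' is exactly what we are formalising. No induction on path structure, no case analysis on transitions, and no appeal to the operational rules of Figure~\ref{fig:red} are needed; the proof is a one-line subset inclusion followed by a direct appeal to Definition~\ref{def:live}.
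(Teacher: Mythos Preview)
Your proposal is correct and is exactly the argument the paper has in mind: the paper itself merely states that since $\G$ rules out more paths than $\F$ and $\mathcal{L}$ is defined over paths, the proof is obvious. You have simply spelled out that obvious argument in full detail.
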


Intuitively, any path of $\N$ that is lock-free under $\F$ will also be lock-free under \G.
Since $\G$ rules out more paths than $\F$ and since $\mathcal L$ is defined over paths, 
the proof is obvious. 

A network has a \emph{deadlock} (state) if there exists a reachable network state without outgoing transitions that is not successfully
terminated; a network is \emph{deadlock-free} if it does not have a deadlock.

Clearly, $\Live{\St\T}$ implies deadlock-freedom, since every finite execution can be extended to some $\St\T$-fair path, 
using  feasibility.
In networks consisting of one or two parties only, deadlock-freedom coincides with all notions of lock-freedom.
Deadlock-freedom, however, is considered to be insufficient for networks with three or more locations, as those networks may experience starvation:
\emph{starvation} occurs when there is an execution path along which some component wants to perform a task but no task involving that component occurs.

Using the relationship between $\Live{\St\T}$ and deadlock-free\-dom, as well as Proposition~\ref{pr:prop2}, yields the classification of liveness properties on the right-hand side of Figure~\ref{reduced taxonomy}.
Since $\F\precneqq\G$ does not imply that $\Live{\F}$ is strictly stronger than $\Live{\G}$,
we provide separating examples to prove that 
the presented notions of lock-freedom are different.
\vspace{1ex} % The style does not include enough separating space

\header{\texorpdfstring{$\Live{\St\T}$}{L(ST)} is strictly stronger than deadlock-freedom} 
\mbox{}\vspace{1ex} % The style does not include enough separating space

\begin{example}{livelock}
Consider the following network, where a $\buyer$ purchases goods repeatedly from a $\seller$, 
while a \shipper\ is awaiting an order that is never placed.\pagebreak[3]
\[
\begin{array}{rl}
&
\loc{\buyer}{
 \rec{X}
  \send{\seller}{\buy};X
}
\\
\pipar
&
\loc{\seller}{
 \rec{Y}
  \recv{\buyer}{\buy} ; Y
}
\\
\pipar
&
\loc{\shipper}{
 \recv{\seller}{\order}
}
\end{array}
\]
This network is deadlock-free, for the buyer and seller can always interact;
it does not satisfy $\Live{\St\T}$ as $\shipper$ is not in state $\End$ and is never involved in a transition. 
\end{example}
\vspace{1ex} % The style does not include enough separating space

\header{\texorpdfstring{$\Live{\St\C}$}{L(SC)} is strictly stronger than \texorpdfstring{$\Live{\St\T}$}{L(ST)}}
 Consider Example \ref{ex:binaryext:early}. 
We have seen that all $\St\T$-fair paths successfully terminate. In particular, along all fair paths the $\shipper$ performs a transition.
In contrast, there is an infinite \St\C-fair path where the $\shipper$ neither makes a transition nor successfully terminates.

The following example separates $\St\T$ from $\St\C$ without considering termination.
\vspace{1ex} % The style does not include enough space between definitions

\begin{example}{Fu}
Consider the following network, where a $\buyer$ talks alternatingly to two \seller{s}, but talks to each seller for as long as they desire.
\[
\newcommand{\talk}{\texttt{talk}}
\begin{array}{rl}
&
\buyer
\mbox{\Large\textlbrackdbl}
\begin{array}[t]{@{}l}
 \rec{X} 
\begin{array}[t]{@{}l}
 \left(\recv{\seller1}{\talk}; X\right.
   \\
  ~+
\begin{array}[t]{@{}l}
\recv{\seller1}{\extend}; 
\\
  \rec{Z}    
\begin{array}[t]{@{}l}
   \left( \recv{\seller2}{\talk}; Z\right.
\\    
    +\left.
   \left.\recv{\seller2}{\extend}; X\right)\right)\mbox{\Large\textrbrackdbl}
\end{array}
\end{array}
\end{array}
\end{array}
\\
\pipar
&
\loc{\seller1}{
 \rec{V}(
  \send{\buyer}{\talk} ; V \oplus \send{\buyer}{\extend} ; V~)
}
\\
\pipar
&
\loc{\seller2}{
 \rec{W}(
  \send{\buyer}{\talk} ; W \oplus \send{\buyer}{\extend} ; W~)
}
\end{array}
\]
The above network satisfies $\Live{\St\T}$ but not $\Live{\St\C}$, since no location terminates and there are $\St\C$-fair paths on which one of $\seller1$ or $\seller2$ ceases to act, violating the condition that there must be infinitely many transitions stemming from them.
\end{example}
\vspace{1ex} % The style does not include enough separating space

\header{\texorpdfstring{\,$\Live{\J}$}{L(J)} is strictly stronger than \texorpdfstring{$\Live{\St\C}$}{L(SC)}}
We consider a variant of Example~\ref{ex:sc-j:early}. 
\vspace{1ex} % The style does not include enough separating space

\begin{example}{sc-j:2}\vspace{-1ex}
\[
\begin{array}{@{}rl@{}}
&
\loc{\seller}{
 \rec{X}\left(
  \recv{\buyer1}{\order1};X
  +
  \recv{\buyer2}{\order2};X
 \right)}
\\
\pipar
&
\loc{\buyer1}{
 \rec{Y}
  \send{\seller}{\order1} ; Y
}
\\
\pipar
&
\loc{\buyer2}{
 \rec{Z}
  \send{\seller}{\order2} ; Z
}
\end{array}
\]
The above network satisfies $\Live{\St\C}$, since each location $p$ has a relentlessly enabled communication transition. Hence, $p$ will engage in a communication transition infinitely often.
However, the system does not satisfy $\Live{\J}$,
since there is a $\J$-path where $\buyer1$ never acts. Namely, every communication of $\buyer1$ may be preempted by a communication of $\buyer2$, as both buyers communicate with the same $\seller$. 
\vspace{1ex} % The style does not include enough separating space
\end{example}

Although Example~\ref{ex:sc-j:early} separates $\J$ from $\St\C$, we cannot use it as separating 
example for $\Live\J$ and $\Live{\St\C}$. It does not even satisfy $\Live{\St\T}$, since if $\buyer2$ ever acts, then $\buyer1$ never successfully terminates nor engages in infinitely many transitions.
\vspace{1ex} % The style does not include enough separating space

\header{\texorpdfstring{$\;\Live{\Pr}$}{L(P)} is strictly stronger than \texorpdfstring{$\Live{\J}$}{L(J)}}
The network of Example \ref{ex:p-jt:early} -- the example from the introduction -- satisfies $\Live{\J}$, since on a just path there are infinitely many transitions stemming from each location.
However, it does not satisfy $\Live{\Pr}$, since there exists a path where two components talk forever, to the exclusion of the other two. 
This example indicates (again) that $\J$ is the minimal realistic fairness assumption.

\begin{figure*}
\begin{gather*}
\begin{prooftree}
\N \equiv \N'
\quad \N' \dgoesto{\alpha} \M'
\quad
\M' \equiv \M
\justifies
\N \dgoesto{\alpha} \M
\end{prooftree}
\qquad\qquad
\begin{prooftree}
\loc{p}{ \PP\sub{X}{\rec{X}\PP} }
\pipar
\N
 \dgoesto{\alpha}
\loc{p}{ \PQ }
\pipar
\N
\justifies
\loc{p}{\rec{X}\PP}
\pipar
\N
 \dgoesto{\alpha}
\loc{p}{ \PQ }
\pipar
\N
\end{prooftree}
\qquad
\\[2ex]
\begin{prooftree}
j \in H \quad k \in I \quad \lambda_k = \lambda_j
\justifies
\loc{p_k}{\textstyle{\bigoplus_{h \in H}}\, \send{q_h}{\lambda_h} ; \PQ_h }
\pipar
\loc{q_j}{\textstyle{\sum_{i \in I}}\, \recv{p_i}{\lambda_i} ; \PP_i }
\pipar
\N
\dgoesto{\comm{p_k}{\lambda_k}{q_j}~}
\loc{p_k}{\PQ_{j}}
\pipar
\loc{q_j}{\PP_k}
\pipar
\N
\end{prooftree}
\end{gather*}
\caption{A reactive semantics without $\tau$-transitions for internal choice or recursion. The definition of $\equiv$ is unchanged.}\label{fig:reactive}
\end{figure*}

\arxiv{In Appendix~\ref{app:B}}{In~\cite{GHH21}}, we analyse further notions of lock-freedom, 
based on other fairness assumptions.

\subsection{Lock-freedom in the literature}

There are two prevalent notions of lock-freedom in the literature, 
which we call Kobayashi lock-freedom and Padovani lock-freedom, acknowledging the authors of key papers where these properties are investigated.
We prove that these two notions relate to $\Live{\St\C}$ and $\Live{\St\T}$, respectively. 
We believe, however, that $\Live{\J}$ is a novel notion of lock-freedom. 
In Section~\ref{sec:related} we discuss further notions.
\vspace{1ex} % The style does not include enough space between definitions

\header{Kobayashi lock-freedom}\label{sec:Kobayashi}

Our scheme (\ref{lock scheme}) for lock-freedom is inspired by a scheme proposed by Kobayashi~\cite{Kobayashi2002} in the setting of the linear $\pi$-calculus, which does not feature operators for choice. 
Our scheme is more general, making it applicable to several calculi. 

Although Kobayashi argues that lock-freedom is para\-metrised by a fairness assumption, 
he settles for exactly one,  called \emph{strong fairness} and attributed to \cite{CS87,Emerson90},
with the stated intention that: ``every process that is able to participate in a communication infinitely often can eventually participate in a communication.''
The intended fairness assumption in~\cite{Kobayashi2002} coincides with $\St\C$.
Almost the same can be said for the formalisation of strong fairness in~\cite{Kobayashi2002}, although
literally speaking the latter is slightly weaker.\footnote{The reason is that Kobayashi's intended requirement that a
component must act is formalised by describing the states right before and right after that component
acts, and stipulating that one must go from the former to the latter. However, in
\cite{Kobayashi2002} there is no way to unambiguously project global states on individual components,
and one can make the prescribed transition without actually involving that component.}
\vspace{1ex} % The style does not include enough separating space

\header{\texorpdfstring{\!}{}Padovani lock-freedom coincides with \texorpdfstring{$\Live{\St\T}$}{L(ST)}}
Padovani \cite{Padovani2014} presents a notion of lock-freedom that does not refer explicitly to a fairness assumption. 
Below we use the abbreviation $\proc(p,\N)$ that denotes the unique thread state
$\PP$ such that $\N \equiv \loc{p}{\PP} \pipar \N'$, if $p$ is a location of a network state $\N$.
\vspace{1ex} % The style does not include enough separating space

\begin{definition}{lock-free}
$\N$ is \emph{Padovani lock-free} if for each reachable state $\M$ of $\N$, and for each location $p$ of $\M$
  such that $\proc(p,\M)\neq\End$, network $\M$ has an execution path that contains a
  transition involving $p$.
\vspace{1ex} % The style does not include enough separating space
\end{definition}
\begin{theorem}{lock-free}
A network is Padovani lock-free iff it satisfies $\Live{\St\T}$. [See Appendix\arxiv{~\ref{app:C}}{~C of \cite{GHH21} } for the proof.]
\end{theorem}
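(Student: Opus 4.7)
The plan is to prove the two directions separately, using feasibility of $\St\T$ for the ``if'' direction and a finite-state bottom-SCC argument for the ``only if'' direction.

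For the ``if'' direction, assume $\N \models \Live{\St\T}$, and consider a reachable state $\M$ together with a location $p$ of $\M$ such that $\proc(p,\M)\neq\End$. There is a finite execution witnessing reachability of $\M$ from $\N$; by feasibility of $\St\T$, it can be completed to an $\St\T$-fair path $\hat\pi$ from $\N$ passing through $\M$. Applying $\Live{\St\T}$ to $\hat\pi$ at $p$---which has not yet terminated at $\M$---forces $p$ either to eventually terminate (which requires at least one further transition involving $p$, since $\proc(p,\M)\neq\End$ and only transitions involving $p$ can change its thread state) or to participate in infinitely many transitions. Either way, a finite portion of $\hat\pi$ after $\M$ contains a transition involving $p$, providing the execution path required by the definition of Padovani lock-freedom.

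For the ``only if'' direction, assume Padovani lock-freedom and, for contradiction, an $\St\T$-fair path $\pi$ from $\N$ along which some location $p$ neither terminates nor engages in infinitely many transitions. Because recursion is guarded and networks are finite parallel compositions of closed threads, the reachability graph of $\N$ is finite. A standard argument then shows that the set $C$ of network states visited infinitely often by $\pi$ is a bottom strongly connected component of this graph, and every state of $C$ is visited infinitely often: any outgoing transition from a state of $C$ is relentlessly enabled, so by $\St\T$ it is taken infinitely often, which both propagates ``visited infinitely often'' through $C$ via strong connectivity and rules out transitions leaving $C$. Because $p$ takes no transition inside $C$, the thread state $\proc(p,s)$ is some fixed $R \neq \End$ for every $s \in C$. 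Padovani lock-freedom applied at any $s \in C$ then provides an execution path from $s$ containing a transition involving $p$; that path cannot leave $C$, so it passes through some $s^* \in C$ with an outgoing transition $t$ involving $p$. But $s^*$ is visited infinitely often by $\pi$, so $\St\T$ forces $t$ to be taken infinitely often on $\pi$, contradicting the assumption on $p$.

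The main subtlety will be justifying the finite-state premise and the bottom-SCC structure of $\St\T$-fair paths precisely in the paper's notation: verifying that the syntactic ban on $\rec{X}X$ and $\rec{X}\rec{Y}\PP$ together with guardedness really does bound the reachable thread states, and that the propagation of ``visited infinitely often'' through an SCC, together with the non-exiting of $\St\T$-fair paths from it, are careful consequences of the characterisation of $\St\T$. Once these finite-state and SCC facts are in place, the contradiction---Padovani's witness path trapped inside the bottom SCC---follows in a few lines.
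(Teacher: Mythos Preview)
Your proof is correct. The ``if'' direction is essentially identical to the paper's: both extend the finite reaching execution to an $\St\T$-fair path via feasibility and read off the required $p$-transition from $\Live{\St\T}$.

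For the ``only if'' direction you take a different, though closely related, route. The paper does not invoke bottom SCCs; instead it defines, for each state $\M$ on the $\St\T$-fair path $\pi$, the distance $d(\M)$ to the nearest $p$-transition (finite by Padovani lock-freedom), picks any $\M$ that recurs infinitely often on $\pi$, and uses $\St\T$-fairness to step along a $d$-decreasing transition (also taken infinitely often) until reaching an infinitely recurring state $\M''$ with $d(\M'')=1$, whose outgoing $p$-transition is then forced by $\St\T$. Your argument instead first shows that the infinitely-visited states form a bottom SCC $C$ (closed because every outgoing transition from a state of $C$ is relentlessly enabled and hence taken), and then observes that Padovani's witness path from any $s\in C$ is trapped in $C$, so its $p$-transition has an infinitely recurring source. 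Both arguments rest on exactly the same mechanism---$\St\T$-fairness propagates ``infinitely visited'' along transitions in a finite-state system---but the paper walks forward along the Padovani witness one step at a time, whereas you first establish the closure of $C$ and then use it wholesale. The paper's version is marginally more elementary (no SCC vocabulary) and sidesteps the finite-$\pi$ case automatically; in your formulation you should dispatch finite $\pi$ separately (the terminal state has $\proc(p,\cdot)\neq\End$ yet no outgoing transition, contradicting Padovani directly) before forming $C$.
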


\subsection{Lock-freedom for a reactive semantics}\label{sec:reactive}
This section demonstrates that differences between session calculi, which may appear to be merely stylistic, in fact impact the resulting notions of liveness. 
An alternative semantics, (\eg~\cite{Severi2019,Castellani2019b}), which we call \textit{reactive semantics}, is given in Figure~\ref{fig:reactive}. 
Here, neither unfolding recursion nor making a choice between various send actions induces a $\tau$-transition. 

In Definition~\ref{df:live}, we formally introduced liveness properties for a network, parametrised
by a fairness assumption. 
In fact, the definition also depends on the given semantics. In the remainder, we denote by $\Live{\F}$ a liveness 
property with regard to the semantics of Figure~\ref{fig:red}, and by $\React{$\F$}$ a liveness property 
with regard to the reactive semantics.%
\vspace{1ex} % The style does not include enough separating space

\begin{example}{R versus L}
The following network has a deadlock by the default semantics of Figure~\ref{fig:red}. Consequently,
it satisfies none of the properties $\Live{\F}$. Yet, it satisfies $\React{$\F$}$,  for each $\F$.
\[
\begin{array}{rl}
&
\loc{\buyer}{
  \send{\seller}{\buy} \oplus \send{\seller}{\order}
}
\\
\pipar
&
\loc{\seller}{
  \recv{\buyer}{\buy}
}
\end{array}
\]
\end{example}

A similar result to Proposition~\ref{pr:prop2} shows that the 
strength of a fairness assumption partially determines the strength of 
the corresponding liveness property.
\vspace{1ex} % The style does not include enough separating space

\begin{proposition}{prop2reactive}
$\F\weaker\G$ implies $\React{\F}\Rightarrow\React{\G}$,
for fairness assumptions $\F$ and $\G$.
\vspace{1ex} % The style does not include enough separating space
\end{proposition}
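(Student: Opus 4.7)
The plan is to reduce the claim to the same one-line argument that makes Proposition~\ref{pr:prop2} ``obvious''. First I would spell out what $\F\weaker\G$ means: that $\G$ rules out at least as many complete execution paths as $\F$, so every $\G$-fair path is also $\F$-fair. Crucially, this set-inclusion is a statement about the fair subsets of the set of complete paths in whatever transition relation has been fixed, so it applies equally well to the reactive semantics of Figure~\ref{fig:reactive}; in particular, for any network $\N$, the set of $\G$-fair reactive paths starting in $\N$ is contained in the set of $\F$-fair reactive paths starting in $\N$.

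Next, assume $\N\models\React{\F}$ and let $\pi$ be an arbitrary $\G$-fair reactive path starting in $\N$. By the inclusion above, $\pi$ is also $\F$-fair, so the hypothesis $\N\models\React{\F}$ applies to $\pi$: for each location $p$ of $\N$, either $p$ successfully terminates on $\pi$, or $\pi$ contains infinitely many transitions involving $p$. This is precisely the disjunction required by the definition of $\React{\G}$, which is obtained from Definition~\ref{df:live} simply by interpreting ``path'' over the reactive transition relation $\dgoesto{\alpha}$. Since $\pi$ was arbitrary, $\N\models\React{\G}$.

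There is essentially no obstacle; the only point worth flagging is that the argument relies on interpreting the fairness assumptions $\F$ and $\G$ uniformly over the reactive transition system when we write $\React{\F}$ and $\React{\G}$, as announced in Section~\ref{sec:reactive}. Under that convention, $\F\weaker\G$ again amounts to containment of the corresponding sets of fair (reactive) paths, and the conclusion follows immediately, exactly as in the informal justification given after Proposition~\ref{pr:prop2}.
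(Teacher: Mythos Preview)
Your proposal is correct and matches the paper's treatment: the paper does not give an explicit proof of this proposition at all, merely noting it is ``a similar result to Proposition~\ref{pr:prop2}'', whose justification is the same one-line observation you spell out (since $\G$ rules out more paths than $\F$ and $\React{\,\cdot\,}$ is defined over paths, the implication is immediate).
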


Consequently, a classification of the liveness properties $\React{\F}$, for $\F$ any of the
  fairness assumptions from \cite{GH19}, can be obtained from the classification of these  fairness
  properties (Figure \arxiv{\ref{full taxonomy} in Appendix~\ref{app:A}}{6 in \cite{GHH21}}) by collapsing certain entries,
  just as for the classification of liveness properties $\Live{\F}$ from Figure~\ref{reduced taxonomy}.
  Since the separating examples given for $\Live{\F}$ apply also to $\React{\F}$, we end up with at least
  four different notions $\React{\F}$. However, we expect a lattice that is quite a bit larger, with
  fewer notions coinciding.

As an instance of this,  $\React{\J}$ is strictly stronger than $\React{\W\C}$.
Strictness is shown by the following example.
\vspace{1ex} % The style does not include enough separating space

\begin{example}{reactive:early}
The following network presents a \buyer\ who negotiates with $\seller1$ up to a point 
and then decides to order a product with $\seller2$ and inform $\seller1$ about their decision.
\[
\begin{array}[t]{rl}
&
\buyer\mbox{\Large\textlbrackdbl}
 \rec{X} (
  \begin{array}[t]{@{}l}
  \send{\seller1}{\nego} ; X
  \\
  \oplus ~
  \send{\seller2}{\order}; \send{\seller1}{\done})\mbox{\Large\textrbrackdbl}
  \end{array}
\\
\pipar
&
\loc{\seller1}{
 \rec{Y} \left(
  \recv{\buyer}{\nego} ;  Y
  +
  \recv{\buyer}{\done} 
  \right)
}
\\
\pipar
&
\loc{\seller2}{
 \recv{\buyer}{\order}
}
\end{array}
\]
The network successfully terminates under $\React{\W\C}$,
for a transition involving $\seller2$ is perpetually enabled, when appealing to Figure~\ref{fig:reactive}.
It does not need to terminate under justness as the \buyer\ is involved in all transitions.
\vspace{1ex} % The style does not include enough separating space
\end{example}

Similar to Example~\ref{ex:sc-j:early}, guaranteeing termination in this example seems wrong as the fairness assumption 
constrains the \buyer's `free will'. Therefore,
the presented results suggest that $\J$ is a more realistic notion than $\W\C$.

\section{Session types and completeness}\label{sec:types}
We now focus on session type systems.
A suitably crafted session type system guarantees liveness properties for a network, 
if the network is \welltyped.
We devise a session type system that is complete for $\Live{\J}$, 
meaning that all lock-free networks can be typed.

\subsection{Global session types, projections and type judgements}

We build on a widely-adopted approach for multiparty session types.
It first defines a global type, describing the interacting behaviour of all parties involved.
In our syntax for global types, communications of the form $\comm{p}{\lambda}{q}$ describe the sending of a message labelled $\lambda$ from location $p$ to $q$, and $\bigboxplus$ indicates a choice over a finite, non-empty index set $I$.
\[
\begin{array}{rlr}
\G \coloneqq & \End & \mbox{(successful termination)} \\
        \mid & \bigboxplus_{i\in I}\ \comm{p}{\lambda_i}{q_i} ; \G_i  & \mbox{(choice of communication)} \\
        \mid & X & \mbox{(recursion variable)} \\
	\mid & \rec{X} \G  & \mbox{(recursion)}
\end{array}
\]
As for our session type calculus we 
exclude types of the form $\rec{X}X$ or $\rec{X}\rec{Y}\G$ to enforce guarded recursion.
Moreover, for $\bigboxplus_{i\in I}\, \comm{p}{\lambda_i}{q_i} ; \G_i$, we assume $p \not= q_i$ for
all $i \mathbin\in I$. That means locations cannot send messages to themselves.
A global type is \emph{closed} whenever it contains no free recursion variables.
The fact that $p$ is the same in every branch of a choice means there is a distinguished choice leader $p$, who makes that choice, but there may be different recipients, as in related work on flexible choices~\cite{Castellani2019b}.

A global session type can be projected to a local view for each location.
We call local types stemming from projections \emph{projection types}.
They are defined almost in the same way as threads of \sect{calculus}:
instead of the construct $\sum_{i \in I} \recv{p_i}{\lambda_i}; \PP_i$ they feature merely its unary case $\recv{p}{\lambda}; \PP$, as well as the \emph{merge} operators $\merge_{i\in I} \PP_i$ .

We define the set of \emph{{participants}}
 of a global type $\G$ recursively:\vspace{-2ex} 
\begin{align*}
\participants{\End} &= \participants{X} = \emptyset 
\\
\participants{\rec{ X } \G} &= \participants{\G} \\
\participants{\textstyle\bigboxplus_{i\in I} \comm{p}{\lambda_i}{q_i} ; \G_i} &=
  {\textstyle\bigcup_{i \in I}}\, \{p,q_i\} \cup \participants{\G_i}
\end{align*}
Given a global session type $\G$ and location $p$,
we define the projection $\proj{p}{\G}$ of $\G$ on $p$ as follows.
\label{def:proj}
\begin{align*}
\proj{p}{ \End } &= \End
\qquad
\proj{p}{ X } = X
\\[2pt]
\proj{p}{ (\rec{X} \G) } &= 
\left\{
\begin{array}{@{}l@{~~}l@{}}
 \End & \begin{array}[t]{@{}l@{}}\mbox{if $p \notin \participants{\G}$}\\
      \mbox{and $\rec{X}\G$ is closed}
      \end{array}
\\
 \rec{X} \left( \proj{p}{ \G }  \right) & \mbox{otherwise}
\\
\end{array}
 \right.
\\[2pt]
\proj{r}{ (\textstyle\bigboxplus_{i\in I}\, \comm{p}{\lambda_i}{q_i} ; \G_i) }
&=
\left\{
\begin{array}{@{}lr@{}}
 \bigoplus_{i \in I} \proj{r}{ (\comm{p}{\lambda_i}{q_i} ; \G_i) }
 & p\mathop= r
 \\[2pt]
 \merge_{i\in I} \proj{r}{ (\comm{p}{\lambda_i}{q_i} ;\G_i) }
 & p\mathop{\not=} r
\end{array}
\right.
\\[2pt]
\proj{r}{(\comm{p}{\lambda}{q} ; \G) }
&=
\left\{
\begin{array}{@{}ll@{}}
 \send{q}{\lambda} ; \left( \proj{r}{ \G } \right)
 & p \mathop= r \\
 \recv{p}{\lambda} ; \left( \proj{r}{ \G } \right)
 & q \mathop= r \\
 \proj{r}{\G}
 & r \mathop{\not\in} \left\{ p, q \right\}
\end{array}
\right.
\end{align*}
The merge operator is interpreted directly through
the judgement relation $\vdash$ between threads and projection types, coinductively defined in Figure~\ref{fig:types}. 
See~\cite{coinduction} for a formal definition of what it means to interpret such rules coinductively. 
Usually, the merge is defined independently from the type judgements;
it is simply an operation that builds a single type from several types, without using an explicit merge primitive.
In the standard approach~\cite{Yoshida2020}, the work of our judgement relation $\vdash$
is split between (a) the aforementioned merge operation, 
(b) a subtyping relation $\leq$ between types~\cite[Definition~6]{Yoshida2020}, 
and (c) a relation $\vdash$ between threads and local session types \cite[Figure~5]{Yoshida2020}. 
Our use of merge as a primitive construct for generating projection types, interpreted through $\vdash$, 
makes merging as general as possible.

\begin{figure}
\begin{gather*}
\begin{prooftree}
\PP\sub{X}{\rec{X}\PP} \vdash \PQ
\justifies
\rec{X}\PP \vdash \PQ
\end{prooftree}
\qquad
\begin{prooftree}
\PP \vdash \PQ\sub{X}{\rec{X}\PQ}
\justifies
\PP \vdash \rec{X}\PQ
\end{prooftree}
\\[8pt]
\begin{prooftree}
\justifies
\End \vdash \End
\end{prooftree}
\qquad
\begin{prooftree}
i \in I
\quad
\PP_i \vdash \PQ_i
\justifies
\textstyle{\sum_{i\in I}\, \recv{p_i}{\lambda_i} ; \PP_i}
\vdash
\recv{p_i}{\lambda_i} ; \PQ_i
\end{prooftree}
\\[8pt]
\begin{prooftree}
I \subseteq J
\quad
\forall i \in I
\quad
\PP_i \vdash \PQ_i
\justifies
\textstyle{\bigoplus_{i\in I}\, \send{p_i}{\lambda_i} ; \PP_i}
\vdash
\textstyle{\bigoplus_{i\in J}\, \send{p_i}{\lambda_i} ; \PQ_i}
\end{prooftree}
\qquad
\begin{prooftree}
\forall i \in I
\quad
\PP \vdash \PQ_i
\justifies
\textstyle{\PP}
\vdash
\textstyle{\merge_{i\in I}\, \PQ_i}
\end{prooftree}
\end{gather*}
\caption{Typing judgements relating threads to projection types.
\label{fig:types}}
\end{figure}

\subsection{Well-typed networks}

The following definition plays the role of a type rule assigning a global type to a network in related systems, 
e.g.,~\cite{Yoshida2020,Severi2019,Denielou2012,Castellani2019b}.
\vspace{1ex} % The style does not include enough separating space

\begin{definition}{typing}
\mbox{A network $\N\mathbin=\loc{p_1}{\!\PP_1\!}\mathop\| \loc{p_2}{\!\PP_2\!} \mathop\| \!\dots\! \mathop\| \loc{p_n}{\!\PP_n\!}$}
is \welltyped with respect to a global type $\G$, denoted $\N\vdash\G$,  if $\G$ is closed, 
$\participants{\G} \mathop\subseteq \left\{ p_1, p_2, \hdots, p_n \right\}$,
and $\PP_i \vdash \proj{p_i}{\G}$ for all $i$.

A network $\N$ is \emph{\welltyped} if $\N \mathbin\vdash \G$ for some global type $\G$\hspace{-1pt}.%
\vspace{1ex} % The style does not include enough separating space
\end{definition}

\begin{example}{typing}
\renewcommand{\extend}{\texttt{talk}}%
A global type for the network of Example~\ref{ex:binaryext:early}~is
\[
\G = 
\rec{ X } ( 
\begin{array}[t]{@{}l}
\comm{\buyer}{\extend}{\seller}; X
\\
\boxplus
\begin{array}[t]{l}
\comm{\buyer}{\buy}{\seller}; 
\\
\comm{\seller}{\order}{\shipper}; \End).
\end{array}
\end{array}
\]
We have
\begin{align*}
\proj{\buyer}{\G} &= \rec{ X } (\send{\seller}{\extend}; X \oplus \send{\seller}{\buy}; \End) \\
\proj{\seller}{\G} &=  \rec{ X } ( 
 \begin{array}[t]{@{}l}
  \recv{\buyer}{\extend}; X \\
  \sqcap~\recv{\buyer}{\buy}; \send{\shipper}{\order}; \End) 
 \end{array}
\\
\proj{\shipper}{\G} &= \rec{ X } (X \sqcap\recv{\seller}{\order}; \End). 
\end{align*}
With the help of the rules of Figure~\ref{fig:types}, we can derive the following facts, 
using proofs that are not well-founded.
\[\begin{array}{@{}r@{}l@{\ }l@{}}
\rec{ X } (&\send{\seller}{\extend}; X \oplus \send{\seller}{\buy}; \End)
&\vdash
\proj{\buyer}{\G}\\
\rec{Y} (&%
  \recv{\buyer}{\extend} ; Y&\\
  &+~
  \recv{\buyer}{\buy} ; \send{\shipper}{\order};\End
 ) &\vdash \proj{\seller}{\G}\\
\multicolumn{2}{@{}l}{\ensuremath{\recv{\seller}{\order}; \End}} 
&\vdash \proj{\shipper}{\G} 
\end{array}\]
\end{example}
This network is \welltyped.
However, in the literature, it is commonly regarded as not \welltyped, which 
may be due to
the unguarded recursion in $\proj{\shipper}{\G}$
\vspace{1ex} % The style does not include enough separating space

\begin{example}{unsound}
\renewcommand{\extend}{\texttt{talk}}%
This network is a restriction of the previous example, where no message is sent to the $\shipper$ on any path.
\[
\begin{array}{rl}
&
\loc{\buyer}{
 \rec{X}\left(
  \send{\seller}{\extend};X
 \right)}
\\
\pipar
&
\seller
\mbox{\Large\textlbrackdbl}
 \rec{Y}(
  \begin{array}[t]{@{}l@{}}
  \recv{\buyer}{\extend} ; Y 
  \\ + ~
  \recv{\buyer}{\buy} ; \send{\shipper}{\order}; \End
 )\mbox{\Large\textrbrackdbl}\end{array}
\\
\pipar
&
\loc{\shipper}{
  \recv{\seller}{\order} ; \End
}
\end{array}
\]
By using the same global type $\G$, we can type the network. 
The following judgement makes use of the rule for internal choice in Figure~\ref{fig:types}, which permits deleting branches, as for most session subtype relations in the literature~\cite{Gay2005,Demangeon2011}.
\\[2mm]
\centerline{$
\rec{X}
  \send{\seller}{\extend};X \vdash \proj{\buyer}{\G}
$}\\[2mm]
The projections to the other locations are the same as in Example~\ref{ex:typing}.

This network is \welltyped and deadlock-free, 
but is not lock-free under any fairness assumption.
That is, this type system is unsound for any notion of lock-freedom we have discussed.
\vspace{1ex} % The style does not include enough separating space
\end{example}

Any type system targeting some notion of lock-freedom presented must reject Example~\ref{ex:unsound}.
In this paper, our design decision to ensure soundness is to require that recursion
has to be guarded for projections.\footnote{An alternative design decision for strengthening the
  type system, that we do not pursue here, could be to restrict the type rule for internal
  choice (Figure~\ref{fig:types}) to prevent branches from being deleted (\cf \cite{Castellani2019b}),
  which, combined with our general merge, would allow Example~\ref{ex:typing} to stay in the fold 
  for $\Live{\St\T}$.
}
We thereby disallow the projection $\proj{\shipper}{\G}$ in the above examples, 
thereby rejecting the networks in Examples~\ref{ex:typing} and~\ref{ex:unsound}.
Since Example~\ref{ex:typing} satisfies $\Live{\St\T}$, we 
have to aim for a stronger notion of lock-freedom.
It will be lock-freedom under justness and we will 
prove that our session type system is 
complete for that type of lock-freedom.

\subsection{Guarded type judgements}

We define a variant of well-typedness (Definition~\ref{df:typing}) that enforces each projection type to be guarded.
A projection type $\PP$ is \emph{guarded} iff each occurrence of a variable $X$ within a
subexpression $\rec{X} \PQ$ of $\PP$ occurs within a subexpression  
$\send{p}{\lambda}; \PP$ or $\recv{p}{\lambda}; \PP$.
\vspace{1ex} % The style does not include enough separating space

\begin{definition}{guarded typing}
A network $\N$
is \emph{guardedly well-typed} with respect to a global type $\G$, denoted $\N\vdashg\G$,  if 
$\N  \vdash \G$ and all projections $\proj{p}{\G}$ are guarded.
\vspace{1ex} % The style does not include enough separating space
\end{definition}

Note that $\G$ is guarded by definition, but this is not sufficient to ensure that $\proj{p}{\G}$ is guarded. 
Examples~\ref{ex:typing} and~\ref{ex:unsound} are \welltyped, but not guardedly well-typed.
\vspace{1ex} % The style does not include enough separating space

\begin{example}{SC-typed}
Example~\ref{ex:sc-j:2}, which features a competition between two buyers,
is guardedly well-typed with respect to the following global type.
\[
\G = 
 \rec{X}(
\begin{array}[t]{@{}l@{}}
 \comm{buyer1}{\order1}{\seller} ; 
\\
 \comm{buyer2}{\order2}{\seller} ; X)
\end{array}
\]
All projections are guarded. 
Indeed, any global type without a choice will lead to guarded projections.
The interesting projection relates the thread for the $\seller$ to the projection of the $\seller$.
\begin{eqnarray*}
& \proj{\seller}{\G}
=
 \rec{X}\left(
  \recv{\buyer1}{\order1};
  \recv{\buyer2}{\order2};X
 \right)
\\
 &\rec{X}\left(
  \recv{\buyer1}{\order1};X
  {+}
  \recv{\buyer2}{\order2};X
 \right)
\mathbin\vdash
 \proj{\seller}{\G}
\end{eqnarray*}
The above judgement holds by unfolding the recursions so as to appeal twice to the rule for $\sum$ in Figure~\ref{fig:types}.
\vspace{1ex} % The style does not include enough separating space
\end{example}

The following example illustrates that our type system cannot be complete for $\Live{\St\C}$.
\vspace{1ex} % The style does not include enough separating space

{
\newcommand{\lambdai}{{a}}
\newcommand{\lambdaii}{{b}}
\newcommand{\lambdaiii}{{c}}
\newcommand{\lambdaiv}{{d}}
\begin{example}{not well typed}
The next network satisfies $\Live{\St\C}$, but not $\Live{\J}$.
\[
\begin{array}{rl}
& \loc{p}{ \rec{X} \left( \send{q}{\lambdai}; X \oplus \send{q}{\lambdaii} ; X \right) } \\
\pipar &
\loc{q}{ \rec{Y} \big( \recv{p}{\lambdai} ; Y + \recv{r}{\lambdaiii} ; \left( \recv{r}{\lambdaiv} ; Y
  + \recv{p}{\lambdaii} ; \recv{r}{\lambdaiv} ; Y \right) \big) }  \\
\pipar &
\loc{r}{ \rec{Z}  \send{q}{\lambdaiii}; \send{q}{\lambdaiv} ; Z}
\end{array}
\]
The network does not satisfies $\Live{\J}$ for there is an infinite just path in which locations $p$ and $q$ constantly communicate via $\comm{p}{\lambdai}{q}$ and $r$ never engages in a communication.
That just path is not a $\St\C$-fair, since the communication $\comm{r}{\lambdaiii}{q}$ involving location $r$ is relentlessly enabled yet never taken. 

The network is not \welltyped, let alone guardedly well-typed,
for each global type must have a subexpression
$\comm{p}{\lambdai}{q}; \G_1 \boxplus \comm{p}{\lambdaii}{q}; \G_2$\;,
and hence must have a reachable state $\M$ in which both transitions $\M \dgoesto{\comm{p}{\lambdai}{q}}$
and $\M \dgoesto{\comm{p}{\lambdaii}{q}~}$ are enabled. Yet there is no such reachable state.%
\vspace{1ex} % The style does not include enough separating space
\end{example}
}

\begin{figure*}
\[
\gt(h,\M) \mathbin= \left\{ \begin{array}{@{}l@{~~}l@{}} 
  \gt(h,\M') & \mbox{if $\M \goesto\tau \M'$ for a network $\M'$,  }\\
  \End & \mbox{if $\proc(p,\M) = \End$ for each location $p$ of $\M$,}\\
  \mbox{{\sc deadlock}} & \mbox{if no location is ready in $\M$,}\\
  X_\M & \mbox{if $\M$ occurs in $h$ and $h\upharpoonright \M$ is complete for $\M$,}\\
  \bigboxplus_{i\in I}\, \comm{p}{\lambda_i}{q_i} ; \gt(h_i,\M^p_i)     & 
     \mbox{if $\M$ occurs in $h$, }
     \mbox{$p \mathbin= \ch(h,\M)$ and $\proc(p,\M) = \bigoplus_{i\in I}\send{q_i}{\lambda_i} ; \PP_i$,}\\
  \rec{ X_\M } \bigboxplus_{i\in I} \comm{p}{\lambda_i}{q_i} ; \gt(h_i,\M^p_i)     & 
     \mbox{if $p \mathbin= \ch(\varepsilon,\M)$ and $\proc(p,\M) = \bigoplus_{i\in I}\send{q_i}{\lambda_i} ; \PP_i$.}
  \end{array}\right.
\]
\hfill
where $h_i := h(\M,p,q_i)$, \ie the sequence obtained from $h$ by appending the triple $(\M,p,q_i)$.
\caption{Algorithm for synthesising a global type for a network.
}\label{fig:algorithm}
\end{figure*}

Example~\ref{ex:not well typed} shows that
the strongest completeness result possible is completeness with respect to $\Live{\J}$.
Before turning to our completeness proof in the next section, we demonstrate the power of our general merge operator. 
\vspace{1ex} % The style does not include enough separating space

\begin{example}{guarded typing positive}
This network consists of two independent pairs of threads, both of which make a choice repeatedly.
\[
\begin{array}{rl}
& \loc{\buyer1}{ \rec{X} \left( \send{\seller1}{\extend} ; X \oplus \send{\seller1}{\order} \right) } \\
\pipar &
\loc{\seller1}{ \rec{Y} \left( \recv{\buyer1}{\extend} ; Y + \recv{\buyer1}{\order} \right) }  \\
\pipar &
\loc{\buyer2}{ \rec{X} \left( \send{\seller2}{\extend} ; X \oplus \send{\seller2}{\order} \right)} \\
\pipar &
\loc{\seller2}{ \rec{Y} \left( \recv{\buyer2}{\extend} ; Y + \recv{\buyer2}{\order} \right) }
\end{array}
\]
The following is a global type for this example.
\[
\G = 
\rec{X}
\Big( 
\begin{array}[t]{@{}l@{}}
\big(
\begin{array}[t]{@{}l@{}}
 \comm{\buyer1}{\extend}{\seller1} ; 
\\
(
\begin{array}[t]{@{\,}l@{}}
\comm{\buyer2}{\extend}{\seller2} ;
X
\\
\boxplus~
\comm{\buyer2}{\order}{\seller2} ;
\G_Y)\big)
\end{array}
\end{array}
\\
\boxplus~
\comm{\buyer1}{\order}{\seller1} ;
\G_Z
\Big)
\end{array}
\vspace{-1ex}
\]
$
\mbox{with}\quad
\begin{array}[t]{r@{~=~}l}
\G_Y & \rec{Y}
\big(
\begin{array}[t]{l}
\comm{\buyer1}{\extend}{\seller1}; Y 
\\
\boxplus~
\comm{\buyer1}{\order}{\seller1}; \End
~\big) 
\end{array}
\\
\G_Z & 
\rec{Z}
\big(
\begin{array}[t]{l}
\comm{\buyer2}{\extend}{\seller2} ;
Z 
\\
\boxplus ~
\comm{\buyer2}{\order}{\seller2} ; \End
~\big).
\end{array}
\end{array}$\\[1ex]
We can show that $\N \vdashg \G$. 
For example, we have 
\[
\begin{array}{r@{~=~}l}
\proj{\buyer1}\G & \rec{X} \big(
 \begin{array}[t]{l}
  \send{\seller1}{\extend}; \\
  \big(X \sqcap
\rec{Y} (
\begin{array}[t]{l}
\send{\seller1}{\extend}; Y 
\\
\oplus~
 \send{\seller1}{\order}; \End )\big)
\end{array}
 \\
 \oplus~\send{\seller1}{\order}; \End ~\big)\text{ and}
 \end{array} 
\\[13mm]
\multicolumn{2}{l}{
\rec{X} \left( 
  \send{\seller1}{\extend}; X \oplus 
  \send{\seller1}{\order}
   \right) \vdash  \proj{\buyer1} \G
}
\end{array}
\]
where the projection $\proj{\buyer1}{\G}$ on $\buyer1$ is guarded. 
\vspace{1ex} % The style does not include enough separating space
\end{example}

The above example is out of scope of most systems for global session types that do not feature an explicit parallel composition
 operator. These systems are incomplete, in the sense that there are lock-free networks that cannot be typed.
 Our session type system overcomes the incompleteness, due our general treatment of merge.
A similar example, which also can be typed by our methodology, is given in~\cite{Scalas2019}. It is used there to demonstrate that there are networks that cannot be typed using established notions of global type without parallel composition~\cite{Lange2015}.
An alternative approach using coinductive projections has been proposed in~\cite{Barbanera2021}.

\subsection{Completeness for lock-freedom under justness}

We now show one of our main results, namely that, for our session type calculus, all lock-free networks can be typed,
when assuming justness. To the best of our knowledge, this is the first completeness result of this kind.
\vspace{1ex} % The style does not include enough separating space

\begin{theorem}{guarded typing completeness}
If $\N\models\Live{\J}$, then $\N$ is guardedly well-typed.
\vspace{1ex} % The style does not include enough separating space
\end{theorem}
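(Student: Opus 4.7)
My plan is to construct a witnessing global type directly by applying the synthesis algorithm of Figure~\ref{fig:algorithm} to $\N$ with empty history, setting $\G \coloneqq \gt(\varepsilon, \N)$. I then need to verify three things: (i) the algorithm terminates without ever taking the $\textsc{deadlock}$ branch, so $\G$ is a well-formed closed global type; (ii) every projection $\proj{p}{\G}$ is guarded; and (iii) for each location $p$, the thread at $p$ satisfies $\PP_p \vdash \proj{p}{\G}$, giving $\N \vdashg \G$.

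For (i), termination follows from the finiteness of the reachable state space of $\N$ — threads are built from finitely many labels under guarded recursion over finitely many locations — so some ancestor state must recur, closing the synthesis via the repeated-state clause. The crux is ruling out the $\textsc{deadlock}$ branch: if the algorithm encountered such an $\M$, then by feasibility any finite execution into $\M$ extends to a just path trapped at $\M$, on which some non-$\End$ location neither terminates nor acts infinitely often, contradicting $\N \models \Live{\J}$. I make the choice-leader function $\ch$ total by a canonical tie-breaker, arranged so that leadership rotates through all locations ever ready to commit to a send (so no ``available'' sender is permanently starved of leadership).

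For (ii), suppose toward contradiction that $\proj{p}{\G}$ contains an unguarded occurrence of some $X_\M$. Tracing the synthesis, this witnesses a cycle of reachable states of $\N$ on which $p$ never sends or receives. Iterating this cycle yields an infinite execution $\pi$ of $\N$ in which $p$ neither terminates nor acts infinitely often. The main obstacle is showing $\pi$ is just: I split on the shape of $\proc(p,\cdot)$ along the cycle. When $p$ is at an input whose matching sender never appears, no $p$-transition is enabled on $\pi$ and justness is vacuous. When $p$ has an enabled output or internal choice, $p$ is perpetually enabled while all transitions along $\pi$ are concurrent with it — so $\pi$ is again just (equivalently, by Proposition~\ref{pr:WC}, not weakly fair for $p$), yet the round-robin design of $\ch$ forces $p$ to have been picked as leader infinitely often, contradicting the assumption that the cycle avoids $p$. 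Either way $\Live{\J}$ is violated. The side-condition $p \notin \participants{\G}$ is handled identically and forces $\proc(p,\N)=\End$, making $\proj{p}{\G}=\End$ and a fortiori guarded.

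For (iii), I exhibit the coinductive witness
\[
R \coloneqq \bigl\{\,(\proc(p,\M),\,\proj{p}{\gt(h,\M)}) \,\bigm|\, (h,\M)\text{ arises during synthesis}\,\bigr\}
\]
and verify by case analysis on the clauses of $\gt$ that $R$ is closed under the rules of Figure~\ref{fig:types}. When $p$ is the chosen leader, the internal-choice rule applies with a singleton $I$ (using that it permits deleting branches); when $p$ is one of the recipients, the input rule applies; when $p$ is a bystander, the merge of per-branch projections is absorbed via the merge rule — this is precisely where our primitive general merge, missing from standard formulations, is essential for completeness. The two recursion rules of Figure~\ref{fig:types} handle the opening and closing of $\rec{X_\M}$. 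Finiteness of the synthesis keeps $R$ finite, so bookkeeping of histories across nested unfoldings remains tractable, and I expect the genuinely hard work to be concentrated in the justness argument of step (ii).
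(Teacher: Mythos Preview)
Your overall architecture matches the paper's, but the step you rightly flag as the hard one---(ii), building a \emph{just} path from the synthesised loop---has a real gap, and the same missing idea already bites in (i). An unfolded $\M$ with no ready location need not be stuck: a location $p$ with $\proc(p,\M)=\bigoplus_{i} \send{q_i}{\lambda_i};\PP_i$ is \emph{ready} only when \emph{every} branch has a willing receiver, so $\M$ may still admit $\tau$-transitions and communications. The paper's Claim~3 therefore picks, for each active-but-not-ready $p$, a blocked branch $i_p$ and fires the $\tau$ committing $p$ to $\chosen{\send{q_{i_p}}{\lambda_{i_p}};\PP_{i_p}}$; only the resulting state $\M'$ is an actual deadlock. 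Your ``just path trapped at $\M$'' does not exist without this construction.

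The same commitment trick is the crux of (ii). Suppose the unguarded $X_\M$ yields a loop $\pi'$ from $\M$ back to $\M$ on which $r$ (your ``$p$'') never sends or receives. Justness is a condition on $\pi'$ with respect to \emph{all} locations, not only $r$: any other location $p$ parked at an internal choice throughout $\pi'$ has a perpetually enabled, fully concurrent $\tau$, so $\pi'$ as written is not just. The paper (Claims~5 and~6) uses that the completeness requirement on $h\!\upharpoonright\!\M$ forces every \emph{ready} location to act on $\pi'$; hence each stuck $p$ has a branch $k$ whose receiver $q_k$ is either unwilling or busy on $\pi'$, and one transforms $\pi'$ by replacing $\proc(p,\cdot)$ by $\chosen{\send{q_k}{\lambda_k};\PP_k}$ throughout. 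Now the only $p$-transition is a communication with $q_k$, blocked either outright or by a non-concurrent step of $q_k$, and the transformed path is just. Your round-robin argument cannot replace this: $\ch$ selects only ready locations, so an active-but-not-ready sender is never chosen, and your clause ``$\pi$ is again just (equivalently \dots\ not weakly fair for $p$)'' is internally contradictory. Two smaller points on (iii): when $p$ is the chosen leader the $\oplus$-rule fires with $I=J$, not a singleton; and you need the closed variant $\gt^*(h,\M)$ to handle free $X_\M$, together with the separate fact (the paper's Claim~5) that $r\notin\participants{\gt^*(h,\M)}$ forces $\proc(r,\M)=\End$, which is what makes the $\End$-branch of the projection of $\rec{X}\G$ go through.
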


The proof \arxiv{(Appendix \ref{app:E}) }{\cite[Appendix~D]{GHH21} } makes use of an algorithm for synthesising a global type from a network, along with a proof that the algorithm terminates with the correct guarded type.

To express the algorithm we require the following concepts.
A \emph{reachable network}, from a given network $\N$, is a reachable network state $\M$
  that happens to be a network, in the sense that $\proc(p,\N) \neq \chosen{\PP}$
  for all locations $p$ of $\N$. A network $\M$ is \emph{unfolded}
 \label{pg:netunfolded} 
  if there is no network $\M'$ with $\M \goesto\tau \M'$
  (although there may be network states $\M'$ with $\M \goesto\tau \M'$). The unfolding of a network $\M$ is the unique network $\M'$  such that $\M \mathrel{(\goesto\tau)^*} \M'$ and $\M'$ is unfolded.
A location $p$ is \emph{ready} in a network $\N$ if $\proc(p,\N) = \bigoplus_{i\in I}\send{q_i}{\lambda_i} ; \PP_i$,
and for each $i\mathop\in I$ there exists a transition $\N \dgoesto{\comm{p}{\lambda_i}{q_i}~} \N_i$,
using the transition relation of Figure~\ref{fig:reactive}.
Define a \emph{history} as a sequence of triples $(\N,p,q)$ with $\N$ a network and $p,q$ locations of ${\N}$.
A history $h$ is \emph{complete} for a network $\M$ if each location that is ready in $\M$ occurs in $h$.
If $h$ is a history and $\M$ a network expression that occurs in $h$,
then $h \upharpoonright \M$ denotes the suffix of $h$ that starts with the first occurrence of $\M$ in $h$.
Moreover, $h \upharpoonleft \M$ denotes the prefix of $h$ prior to the first occurrence of $\M$ in $h$,
so that $h = (h \upharpoonleft \M)(h \upharpoonright \M)$.
Call a location $p$ \emph{eligible} in a network state $\M$ w.r.t.\ a history $h$ if (a) $p$ is ready in $\M$,
and (b) either $\M$ does not occur in $h$, or $p$ does not occur in $h \upharpoonright \M$.
Finally, {\sc deadlock} is a constant, temporarily added to the syntax of session types.

Our algorithm requires several choices. 
First, we select a fresh variable $X_\M$ for each unfolded network $\M$ that is reachable from $\N$.
We then pick a total order on the finite set of locations of $\N$, referred to as \emph{age},
so that each nonempty set of locations has an oldest element.
Finally, for each reachable network $\M$  and each location $p$ that is ready in $\M$, say
with $\proc(p,\M) = \bigoplus_{i\in I}\send{q_i}{\lambda_i} ; \PP_i$, and for each $i\mathbin\in I$,
pick a net- work $\M^p_i$ such that $\M \dgoesto{\comm{p}{\lambda_i}{q_i}~} \M^p_i$ and $\proc(p,\M^p_i)=\PP_i$.

Our algorithm employs the routine $\gt(h,\M)$, parametrised by the choice of a network $\M$ and a history $h$, as defined in Figure~\ref{fig:algorithm}.
Here, $\ch(h,\M)$ is a partial function that selects, for a given history $h$ and reachable network $\M$,
the oldest location that is eligible in $\M$ w.r.t.\ $h$.
It is defined only when such a location exists. 
The case distinction in the figure is meant to be prioritised, in the sense that
a later-listed option is taken only if none of the higher-listed options apply.
Our algorithm is then defined to yield the global type $\gt(\varepsilon,\N)$, with $\N$ the given network and $\varepsilon$ the empty 
history (sequence).

The intuition for our algorithm, which attempts to construct a global session type $\G$ out of a given network \N,  is as follows.
Since it is essential that $\G$ induces ongoing progress of all unterminated locations in the
network, we keep track of the history $h$ of communications incorporated in $\G$ until the ``construction front'' at network state $\M$.
Here the routine $\gt(h,\M)$ specifies the next communication-choice that will be incorporated in $\G$. 
The first clause in Figure~\ref{fig:algorithm}, where the $\tau$-transition must unfold recursion, says
that all recursions should be unfolded before attempting the remaining case distinctions. 
The second clause says that we can safely terminate upon reaching a state in which the threads of
all locations have terminated.
The last two clauses specify a choice leader $p$ and extend $\G$ with the send actions of $p$ in state $\M$.
Here $p$ must be a location that is ready in $\M$; if such a $p$ does not exist 
the failed attempt is reported by including the constant {\sc deadlock} in the attempted session type $\G$ (Clause~3).
Since the syntactic expression $\G$ must be finite, 
each branch that does not reach $\End$ needs to loop back to a previous stage in the construction of $\G$, at some point.
To facilitate looping back, we attach a recursion variable $X_\M$ to each stage we might want to
loop back to, namely to each first occurrence of a network state $\M$ in our history. This explains the difference
between Clauses 5 and 6. Clause 4 says that we can safely loop back to a previous stage if it
involves the same current network state $\M$, and between that previous stage and the present all
locations that are ready in $\M$ already had a turn. If Clause 4 does not apply, then $\M$ must have
eligible locations w.r.t.\ $h$, and Clause 5 or 6 picks the oldest such location, to make
sure that in the end all eligible locations get a turn.
\vspace{1ex} % The style does not include enough separating space

\begin{example}{algo}
  Applying this algorithm to the network of \ex{binaryext:early} yields the type of \ex{typing},
  but with a spurious recursive anchor $\mu Z$ right before $\comm{\seller}{\order}{\shipper}$.
  Applied to \ex{sc-j:early} it fails with possible output
  \[\begin{array}{l}
  \rec{ X } \comm{\buyer1}{\order1}{\seller};\\ \comm{\buyer2}{\order2}{\seller}; \mbox{\sc deadlock}.
  \end{array}\]
  For \ex{p-jt:early} it yields the following correct type.\vspace{-2pt}
  \[\rec{ X }( \comm{\buyer1\!}{\order}{\seller1}; \comm{\buyer2}{\order}{\seller2}; X)\]
For \ex{livelock} it yields the type
  $\rec{ X }\comm{\buyer}{\buy}{\seller}; X$;
  this type is incorrect for that network $\N$. This does not contradict the proof of \thm{guarded typing  completeness}, since $\N\not\models\Live{\J}$.

For \ex{sc-j:2} the algorithm yields the type of \ex{SC-typed}.
For \ex{R versus L} it fails with output \mbox{\sc deadlock}.\\
For \ex{reactive:early} it yields the following correct type.
  \[\begin{array}{l}
  \rec{ X } (\comm{\buyer}{\nego}{\seller1}; X \\
  \boxplus\, \comm{\buyer}{\order}{\seller2}; \rec{ Z } \comm{\buyer}{\done}{\seller1}; \End)
  \end{array}\]
  For \ex{guarded typing positive} it yields the type of \ex{guarded typing positive}.
\vspace{1ex} % The style does not include enough separating space
\end{example}

\begin{observation}{WC complete}
An immediate consequence of Proposition~\ref{pr:WC} and Theorem~\ref{thm:guarded typing completeness},
is that guardedly well-typed networks are complete for $\Live{\W\C}$, suggesting that a
carefully selected notion of weak fairness 
%can be
is suitable for some session calculi.
\vspace{1ex} % The style does not include enough separating space

\begin{corollary}{WC complete}
If $\N\models\Live{\W\C}$ then $\N$ is guardedly well-typed.\vspace{1ex}
\end{corollary}

In contrast, recall that Example~\ref{ex:R versus L} satisfies $\React{\Pr}$;
yet it is not (guardedly) well-typed. This shows that there is no corresponding
completeness result for any notion of lock-freedom $\React{$\mathcal{F}$}$,
where $\mathcal{F}$ is some notion of fairness.
This is an argument for why we emphasise the semantics in Figure~\ref{fig:red} rather than the one in Figure~\ref{fig:reactive}.
\end{observation}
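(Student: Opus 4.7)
The plan is to derive the corollary directly by chaining the two cited results. The statement of Proposition~\ref{pr:WC} says that $\W\C$ coincides with $\J$ in our session calculus, i.e.\ the two fairness notions declare exactly the same set of paths to be fair. Since the liveness property $\Live{\F}$ from Definition~\ref{df:live} is defined purely in terms of which paths are $\F$-fair, equality of the two fairness notions lifts verbatim to equality of the induced liveness properties. Consequently $\N\models\Live{\W\C}$ if and only if $\N\models\Live{\J}$.

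Concretely, starting from a network $\N$ with $\N\models\Live{\W\C}$, I would first observe that, by Proposition~\ref{pr:WC}, we have both $\W\C\weaker\J$ and $\J\weaker\W\C$. Applying Proposition~\ref{pr:prop2} in each direction yields $\Live{\W\C}\Leftrightarrow\Live{\J}$, and in particular $\N\models\Live{\J}$. Invoking Theorem~\ref{thm:guarded typing completeness} then delivers a global type $\G$ such that $\N\vdashg\G$, which is the required conclusion.

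There is really no obstacle to overcome here beyond citing the two earlier results: the entire content of the corollary lies in Proposition~\ref{pr:WC} (the semantic coincidence of weak fairness of components and justness on our calculus) and in the completeness theorem for $\Live{\J}$. If anything warrants care in the writeup, it is only the brief remark that coincidence of fairness notions at the level of paths automatically yields coincidence of the corresponding $\Live{\cdot}$ properties; but this is transparent from Definition~\ref{df:live}, and can be justified in a single sentence.
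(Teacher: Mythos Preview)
Your proposal is correct and matches the paper's approach exactly: the paper simply states that the corollary is an immediate consequence of Proposition~\ref{pr:WC} (coincidence of $\W\C$ and $\J$) and Theorem~\ref{thm:guarded typing completeness}, which is precisely the chain you spell out. Your additional remark about lifting the coincidence of fairness notions to $\Live{\cdot}$ via Definition~\ref{df:live} (or equivalently via Proposition~\ref{pr:prop2} in both directions) is a welcome clarification of what ``immediate'' means here.
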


\section{Race-freedom and soundness\label{sec:sound}}

We have established that completeness holds, with respect to $\Live{\J}$.
Hence, if we can model-check $\Live{\J}$, we know we can always synthesise a global type for a network.
In this section, we consider soundness, meaning that a network is lock-free if it is (guardedly) well-typed.
To complement our completeness result, we target $\Live{\J}$ and prove soundness for 
guardedly well-typed networks that are additionally race-free.
The insight of this section is that soundness can be achieved when making the minimal fairness assumption justness.

\vspace{1ex} % The style does not include enough separating space

\begin{definition}{race-free}
A network state $\N$ has a \emph{race} whenever 
 $\N \goesto{\comm{p}{\lambda}{r}~} \N'$ and $\N \goesto{\comm{q}{\mu}{r}~} \N''$ with either $p\neq q$ or
 $\N' \mathbin{\neq} \N''$.
A network is \emph{race-free} if it has no reachable network state with a race.
\vspace{1ex} % The style does not include enough separating space
\end{definition}

Figure~\ref{reduced taxonomy} implies that $\Live{\J}$ is the strongest lock-freedom property we can get, 
with the exception of $\Live{\Pr}$.
Our guarded type system cannot be sound for the latter notion of lock-freedom, 
for Example~\ref{ex:p-jt:early} is guardedly well-typed and race-free, but does not satisfy $\Live{\Pr}$.

Our example to distinguish $\J$ and $\St\C$ features races.
Indeed, there is no race-free network separating $\J$ from $\St\C$,  as confirmed by the following proposition.
\vspace{1ex} % The style does not include enough separating space

\begin{proposition}{race-free collapse}
On race-free networks,  $\J$ coincides with $\St\C$, for our session calculus in Figure~\ref{fig:red}.
\end{proposition}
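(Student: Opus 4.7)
The direction $\J \weaker \St\C$ holds in general, and by Proposition~\ref{pr:WC} justness coincides with $\W\C$ for our calculus. So it suffices to prove that, when $\N$ is race-free, every $\W\C$-fair path $\pi$ starting in $\N$ is also $\St\C$-fair. I argue by contradiction: assume $\pi$ is $\W\C$-fair but not $\St\C$-fair. Then there exist a location $p$ and a suffix $\pi''$ of $\pi$ along which $p$ takes no transition, yet some transition involving $p$ is enabled in infinitely many states on $\pi''$. Since $p$ does not act on $\pi''$, the thread state $\PP := \proc(p,\N_i)$ is constant along $\pi''$, and the argument reduces to a case analysis on the shape of $\PP$.

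The ``internal'' cases do not need race-freedom. If $\PP=\End$ no transition involves $p$, contradicting relentless enablement. If $\PP$ is $\rec{X}\PP'$ or $\bigoplus_{i}\send{p_i}{\lambda_i};\PP_i$, then a $\tau$-transition involving $p$ is \emph{perpetually} enabled along $\pi''$, so $\W\C$ already forces $p$ to act. If $\PP = \sum_{i\in I} \recv{p_i}{\lambda_i};\PP_i$, then relentless enablement together with a pigeonhole argument on the finite index set $I$ produces some $k\in I$ such that $p_k$ reaches a state $\chosen[2pt]{\send{p}{\lambda_k};\PQ_k}$; from this state the only available transition involving $p_k$ is the send to $p$, which cannot fire because $p$ is stuck, so $p_k$ is trapped too, and $\comm{p_k}{\lambda_k}{p}$ becomes perpetually enabled, contradicting $\W\C$.

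The case that genuinely needs race-freedom is $\PP = \chosen[2pt]{\send{q}{\lambda};\PQ}$, in which the only transition involving $p$ is $\comm{p}{\lambda}{q}$ and its enablement requires $q$ to sit in a receptive state offering $\recv{p}{\lambda}$. Infinitely many states on $\pi''$ must present $q$ in such a state, and in any of these, race-freedom forbids every other communication that has $q$ as receiver. Since a receptive state $\sum_i \recv{p'_i}{\lambda'_i};\PP'_i$ admits neither $\tau$ nor send, and the only sender race-freedom leaves available to $q$ is $p$ itself, $q$ is also trapped; hence $\comm{p}{\lambda}{q}$ becomes perpetually enabled from then on, again contradicting $\W\C$. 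The main obstacle is precisely this case: relentless enablement on its own does not trap $q$, and race-freedom is exactly what rules out the alternative communications that would otherwise let $q$ escape the receptive state and break perpetual enablement.
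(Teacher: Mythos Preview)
Your proof is correct and follows essentially the same approach as the paper. The paper argues directly that a non-$\St\C$-fair path is not just, while you route through $\W\C$ via Proposition~\ref{pr:WC}; since $\J=\W\C$ here, this is the same argument, and your case analysis (including the use of race-freedom in the committed-send case to trap the receiver $q$) matches the paper's reasoning, just unfolded a bit more explicitly.
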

\begin{proof}
Let $\pi$ be an infinite path in a network that is not SC-fair. So, on $\pi$, a component $p$ is
infinitely often enabled, but never taken. 

In case $p$ is stuck in a state where its next transition is a
$\tau$, then $\pi$ is not just.

In case $p$ is stuck in a state $\chosen[3pt]{\send{q}{\lambda}; P}$, then, in the first state on $\pi$ on which
$p$ is enabled, $q$ must be in a state $\sum_{i \in I} \recv{p_i}{\lambda_i}; \PP_i$ with $p=p_k$ and
$\lambda=\lambda_k$ for some $k\in I$. If $q$ remains in this state throughout $\pi$, then $\pi$ is
not just. If $q$ leaves this state via a transition of $\pi$ that does not involve $p$, then the
state where this happens must be a race, and the network is not race-free.

In the remaining case, $p$ is stuck in a state $\sum_{i \in I} \recv{p_i}{\lambda_i}; \PP_i$.
For $p$ to be enabled, a component $p_k$ with $k\in I$ must be in a state
$\send{p}{\lambda_k}; P$, which reduces this case to the previous one.
\end{proof}
\begin{observation}{reactive race}
In contrast to Proposition~\ref{pr:race-free collapse}, for
 the session calculus in Figure~\ref{fig:reactive}, 
$\J$ and $\St\C$ do not coincide, even for race-free networks (race-freedom also needs to be reformulated for that semantics).
Example~\ref{ex:reactive:early} illustrates this fact.
\vspace{1ex} % The style does not include enough separating space
\end{observation}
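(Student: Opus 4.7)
The plan is to verify the claim of the observation by a direct analysis of Example~\ref{ex:reactive:early} under the reactive semantics of Figure~\ref{fig:reactive}. The first step is to reformulate race-freedom for this semantics: a reachable network state $\N$ has a race if there are two transitions $\N\dgoesto{\comm{p}{\lambda}{r}} \N'$ and $\N\dgoesto{\comm{q}{\mu}{r}} \N''$ with either $p\neq q$ or $\N' \neq \N''$, and a network is race-free if no reachable state has a race. Everything else in Definitions~\ref{df:just} and in the definition of strong fairness of components carries over mutatis mutandis to the reactive transition relation $\dgoesto{}$.

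Next I would enumerate the reachable network states of Example~\ref{ex:reactive:early}. Under the reactive semantics, recursion is unfolded implicitly in the premises, so from the initial state the only communications available are $\comm{\buyer}{\nego}{\seller1}$ (which returns to the initial state) and $\comm{\buyer}{\order}{\seller2}$ (which leaves $\buyer$ ready to send $\done$ and $\seller2$ terminated). From that second state the only communication is $\comm{\buyer}{\done}{\seller1}$, which leads to a terminated state. Inspecting these four states, each receiver admits at most one possible communication in each state, so no race exists; the network is race-free in the reformulated sense.

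Then I would exhibit a separating path. Consider the infinite path $\pi$ in which $\buyer$ and $\seller1$ repeatedly perform $\comm{\buyer}{\nego}{\seller1}$ and never fire $\comm{\buyer}{\order}{\seller2}$. Along $\pi$ the transition $\comm{\buyer}{\order}{\seller2}$ is (perpetually, hence relentlessly) enabled, since in the reactive semantics $\buyer$'s top-level internal choice need never be committed to; however, $\seller2$ never participates in any transition, so $\pi$ violates $\St\C$. On the other hand, $\pi$ is just: the only transitions relentlessly enabled but not taken involve $\buyer$ (either $\comm{\buyer}{\order}{\seller2}$, or in the state after it $\comm{\buyer}{\done}{\seller1}$, which anyway is not reached on $\pi$), and each actually-taken transition on $\pi$ also involves $\buyer$, so every denied transition shares a component with some transition that occurs, meeting Definition~\ref{df:just}. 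Hence $\pi$ is a $\J$-fair path that is not $\St\C$-fair, which witnesses $\J \neq \St\C$ on this race-free network under the reactive semantics.

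The main subtlety — not an obstacle so much as a point requiring care — is justifying that $\comm{\buyer}{\order}{\seller2}$ remains enabled at every state on $\pi$ in the reactive semantics. This is crucial because in the default semantics of Figure~\ref{fig:red} the analogous transition becomes unavailable once $\buyer$ commits to the $\nego$ branch via a $\tau$, and that is exactly why Proposition~\ref{pr:race-free collapse} goes through there; under Figure~\ref{fig:reactive} no such commitment occurs, which is what breaks the collapse. Emphasising this contrast completes the argument.
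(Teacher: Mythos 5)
Your proposal is correct and takes essentially the same approach as the paper: the paper's entire justification is the appeal to Example~\ref{ex:reactive:early}, whose accompanying text exhibits exactly your separating path (the infinite negotiation loop), which is just because $\buyer$ participates in every transition, yet fails fairness of components because a transition involving $\seller2$ is perpetually enabled -- the paper phrases this via $\React{\W\C}$, from which the $\St\C$ case follows since every $\St\C$-fair path is $\W\C$-fair, whereas you establish the $\St\C$ violation directly. Your explicit reformulation of race-freedom and verification that the example satisfies it fill in details the paper leaves implicit; the only slip is cosmetic (the example has three reachable network states, not four).
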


Since Examples \ref{ex:binaryext:early}, \ref{ex:p-jt:early} and~\ref{ex:livelock} are race-free, the remaining four notions
$\Live{\Pr}$, $\Live{\St\C}$, $\Live{\St\T}$ and deadlock-freedom from Figure~\ref{reduced taxonomy} are all different for race-free networks. Consequently,
 for race-free networks, the only collapse of Figure~\ref{reduced taxonomy} is $\Live{\J} \Leftrightarrow \Live{\St\C}$.
\vspace{1ex} % The style does not include enough separating space

\begin{example}{reactive}
The following network is race-free, guardedly well-typed and satisfies $\Live{\J}$.
It is a variant of Example~\ref{ex:reactive:early}, which is also race-free, but does not satisfy $\Live{\J}$.
\[
\begin{array}[t]{rl}
&
\buyer\mbox{\Large\textlbrackdbl}
 \rec{X} \big(
  \begin{array}[t]{@{}l}
  ( \send{\seller1}{\order1} ; \send{\seller2}{\extend} ; X )
  \\
  \oplus ~
  (\send{\seller2}{\order2}; \send{\seller1}{\done})\big)\ \mbox{\Large\textrbrackdbl}
  \end{array}
\\
\pipar
&
\loc{\seller1}{
 \rec{Y} (
  \recv{\buyer}{\order1} ;  Y
  +
  \recv{\buyer}{\done})\ 
}
\\
\pipar
&
\seller2\mbox{\Large\textlbrackdbl}
 \rec{Z}(
  \recv{\buyer}{\extend} ; Z
  + 
  \recv{\buyer}{\order2}) ~ \mbox{\Large\textrbrackdbl}
\end{array}
\]
This network is \welltyped, for we can use the following global type.\vspace{-1ex}
\[
\G =
\rec{X} \big( 
 \begin{array}[t]{@{}l@{}}
 (
 \begin{array}[t]{@{}l}
 \comm{\buyer}{\order1}{\seller1} ;
 \\
 \comm{\buyer}{\extend}{\seller2} ; X)
 \end{array}
 \\
 \boxplus 
     \begin{array}[t]{l}
     (\comm{\buyer}{\order2}{\seller2} ; \\
     \phantom{(}\comm{\buyer}{\done}{\seller1} )\big)
     \end{array}
 \end{array}
\]
\end{example}

This example illustrates that it is possible to send messages to several locations via an internal
choice in a race-free way. 
\arxiv{A simple way to prevent races, adopted by many session type systems, \eg~\cite{Denielou2012,Ghilezan2019,Yoshida2020}, is to ensure that each location listens to only one other location at any time. 
To achieve this we can impose the following syntactic restriction.
\vspace{1ex} % The style does not include enough separating space

\begin{definition}{inputs}
A network $\N$ is \textit{syntactically race-free}, 
if, for every sub-expression of the form
$\sum_{i \in I} \recv{p_i}{\lambda_i}; \PP_i$, we have\linebreak[4] $p_i \mathbin= p_j$
for all $i, j \mathbin\in I$, and $\lambda_i \mathbin{\neq} \lambda_j$ for all $i, j \mathbin\in I$ with $i \mathbin{\neq} j$.
\vspace{1ex} % The style does not include enough separating space
\end{definition}

Restricting external choices this way was never intended to be a complete criteria for race-freedom. 
However, it is a cheap linear syntactic property to check, whereas checking for race-freedom is as complex as checking for deadlock-freedom.
\vspace{1ex} % The style does not include enough separating space

\begin{example}{syntax}
The following example is race-free, but not syntactically race-free.
\[
\begin{array}[t]{rl}
&
\buyer\mbox{\Large\textlbrackdbl}
 \rec{X} (
  \send{\seller1}{\order1}  ; X
  \oplus 
  \send{\seller2}{\order2})\mbox{\Large\textrbrackdbl}
\\
\pipar
&
\seller1 \mbox{\Large\textlbrackdbl}
 \rec{Y}(
  \begin{array}[t]{@{}l@{}}
  \recv{\buyer}{\order1} ; \send{\seller2}{\extend} ; Y
  \\
  +~
  \recv{\seller2}{\done})\mbox{\Large\textrbrackdbl}
  \end{array}
\\
\pipar
&
\seller2\mbox{\Large\textlbrackdbl}
 \rec{Z} (
  \begin{array}[t]{@{}l@{}}
  \recv{\seller1}{\extend} ; Z \\
  +~
  \recv{\buyer}{\order2} ; \send{\seller1}{\done} ) \mbox{\Large\textrbrackdbl}
  \end{array}
\end{array}
\]
\end{example}
It is also guardedly well-typed and satisfies $\Live{\J}$. 
Soundness results are stronger if race-free networks are considered, rather than syntactically race-free networks.%
}{
Many session type systems, \eg~\cite{Denielou2012,Ghilezan2019,Yoshida2020},
ensure that, for every sub-expression of the form
$\sum_{i \in I} \recv{p_i}{\lambda_i}; \PP_i$, we have\linebreak[4] $p_i \mathbin= p_j$
for all $i, j \mathbin\in I$, and $\lambda_i \mathbin{\neq} \lambda_j$ for all $i, j \mathbin\in I$ with $i \mathbin{\neq} j$.
This syntactically guarantees race-freedom, but excludes the above example.
\vspace{1ex} % The style does not include enough separating space
}

The reverse of \thm{guarded typing completeness} does not hold.
Hence we cannot expect a soundness result for all networks.
It is not even the case that guardedly well-typed networks are deadlock-free.
\vspace{1ex} % The style does not include enough separating space
\begin{example}{mini}
The following network is guardedly well-typed, but neither race-free nor deadlock-free, 
and hence certainly not $\Live{\J}$.\pagebreak[3]
\[
\begin{array}{rl}
&
\loc{\buyer1}{
 \send{\seller}{\buy1}; \End
}
\\
\pipar
&
\loc{\buyer2}{
 \send{\seller}{\buy2}; \End
}
\\
\pipar
&
\seller\mbox{\Large\textlbrackdbl}
 \begin{array}[t]{@{}l@{}}
( \recv{\buyer1}{\buy1} ;
 \recv{\buyer2}{\buy2};\End)
 \\
 +~
 ( \begin{array}[t]{@{}l@{}}
 \recv{\buyer2}{\buy2} ;
 \recv{\buyer1}{\buy1} ;\\
 \send{\buyer1}{\order};\End) \mbox{\Large\textrbrackdbl}
 \end{array}
\end{array}
\end{array}
\]
The global type for this example is the following.
\[
\G = \comm{\buyer1}{\buy1}{\seller};
\comm{\buyer2}{\buy2}{\seller}; \End
\]
In particular,
$
 \proj{\seller}{\G} = \recv{\buyer1}{\buy1} ; \recv{\buyer2}{\buy2}
$ and
\[
 \begin{array}[t]{l}
 \recv{\buyer1}{\buy1} ;
 \recv{\buyer2}{\buy2}
 \\
 +~
 \recv{\buyer2}{\buy2} ;
 \recv{\buyer1}{\buy1} ;
 \send{\buyer1}{\order} \vdashg \proj{\seller}{\G}
\end{array}
\]
which holds due to the rule for $\sum$ in Figure~\ref{fig:types}, permitting branches of an external choice to be removed.
\vspace{1ex} % The style does not include enough separating space

\end{example}
This example may suggest that the culprit preventing soundness is the flexible external choice, \ie the subtype relation $\vdashg$.
However, even if $\vdashg$ would be almost the identity relation,
with each merge on types corresponding to an external choice of
the corresponding threads, 
there would be guardedly well-typed networks that are not deadlock-free.
\vspace{1ex} % The style does not include enough separating space

\begin{example}{not deadlock-free}
Consider the following network.
\[
\begin{array}{rl}
&
\loc{p}{
 (\send{s}{a}; \send{t}{a}; \send{r}{d}) \oplus (\send{s}{b}; \send{t}{b})
}
\\
\pipar
&
\loc{r}{
 (\recv{s}{c}; \recv{t}{e}; \recv{p}{d}) + (\recv{t}{e}; \recv{s}{c})
}
\\
\pipar
&
\loc{s}{
 \recv{p}{a}; \send{r}{c} + \recv{p}{b}; \send{r}{c} 
}
\\
\pipar
&
\loc{t}{
 \recv{p}{a}; \send{r}{e} + \recv{p}{b}; \send{r}{e} 
}
\end{array}
\]
Using the global type
\[
\G = \begin{array}[t]{l}
     (\comm{p}{a}{s}; \comm{p}{a}{t}; \comm{s}{c}{r}; \comm{t}{e}{r}; \comm{p}{d}{r}; \End) \\
     \boxplus ~
      \comm{p}{b}{s}; \comm{p}{b}{t}; \comm{t}{e}{r}; \comm{s}{c}{r}; \End
     \end{array}
\]
yields projections that are identical to the network threads,
\eg
\[\proj {p} \G  = (\send{s}{a}; \send{t}{a}; \send{r}{d}) \oplus \send{s}{b}; \send{t}{b}.\]
So, $\N \vdashg \G$ follows by using the identity as subtyping relation.
Yet, this network is not deadlock-free, for the execution
 $\comm{p}{b}{s}; \comm{s}{c}{r}; \comm{p}{b}{t}; \comm{t}{e}{r}$
 reaches a deadlock with hanging input $\recv{p}{d}$ in location $r$.%
\footnote{
This is a counterexample to subject reduction in previous work that allows multiple recipients in an external choice~\cite{Castellani2019b,Castellani2020}.
Those soundness results can be restored by restricting to race-free networks.}
\vspace{1ex} % The style does not include enough separating space
\end{example}
Examples~\ref{ex:mini} and~\ref{ex:not deadlock-free} are both excluded by
 race-freedom.
We now prove another main result, namely that the converse of \thm{guarded typing completeness} holds for race-free networks.
\vspace{1ex} % The style does not include enough separating space

\begin{theorem}{guarded typing}
If $\N$ is guardedly well-typed and race-free, then $\N\models\Live{\J}$.
\vspace{1ex} % The style does not include enough separating space
\end{theorem}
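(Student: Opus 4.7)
The plan is to combine a subject-reduction argument with a progress argument, exploiting guardedness of projections to rule out silent recursive loops and race-freedom to restore determinacy of residual types. First I would establish subject reduction: whenever $\N \vdashg \G$ and $\N \goesto{\alpha} \N'$, there exists $\G'$ with $\N' \vdashg \G'$. For a $\tau$-transition that unfolds recursion or commits to a branch of an internal choice, one can simply take $\G' = \G$, using that the coinductive typing judgement is closed under unfolding on either side and that the $\bigoplus$-rule of Figure~\ref{fig:types} accommodates the annotated state $\chosen{\send{q}{\lambda}}; \PP$. For a communication transition, $\G'$ is a canonical residual of $\G$ after the communication just taken; race-freedom is essential here to ensure the communication actually performed matches the one prescribed by the residual rather than an alternative that the flexible subtyping would also allow.

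Second, I would prove that guardedly well-typed networks never deadlock. If some location has not terminated, then $\G$ is not $\End$, and by guardedness of each $\proj{p}{\G}$ a finite number of recursion unfoldings exposes a top-level communication choice at some leader $p$. The thread at $p$ must then be typeable by this projection, which forces it to offer a send to one of the prescribed recipients; the recipient's external choice, typed against its own projection, must contain the matching input. Hence a $\tau$-commitment followed by a matching communication is always enabled.

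To lift no-deadlock to $\Live{\J}$, I would fix a just path $\pi$ from $\N$ and a location $p$ that never reaches $\End$, and argue by contradiction that $p$ participates in only finitely many transitions. By Proposition~\ref{pr:WC}, $\J = \W\C$, so $p$ would be perpetually enabled from some state on. Using subject reduction to extract a sequence $\G_0, \G_1, \dots$ of residual global types along $\pi$, guardedness of each $\proj{p}{\G_i}$ bounds how long the residual can prescribe communications not involving $p$: after finitely many further steps the residual must select $p$ either as choice leader or as recipient. Race-freedom then ensures that the prescribed and enabled transitions coincide, so $p$ is eventually engaged, contradicting the assumption.

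The main obstacle I expect is defining the canonical residual of $\G$ after a network reduction in a way compatible with the general merge operator and the liberal subtyping for $\bigoplus$. Because the $\bigoplus$-rule permits branch deletion on the type side, the naive recipe of ``strip the matching prefix of $\G$'' is not well defined, and one must pick among several candidate residuals while still typing the successor network and preserving guardedness of every projection. Example~\ref{ex:not deadlock-free} shows that without race-freedom no choice of residual can rescue subject reduction; hence race-freedom must be built into the very construction of the residual. Verifying that this choice simultaneously restores subject reduction, preserves the guardedness invariant, and meshes with the coinductive unfoldings on both sides of $\vdash$ is likely to be the technically delicate part of the proof.
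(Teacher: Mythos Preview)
Your outline matches the paper's approach at the coarse level: extend typing to network states, prove session fidelity (the paper even defines an explicit transition relation $\dgoesto{}$ on global types for this), exploit guardedness of projections via a finite depth measure $\psize{\G}{p}$, and derive a contradiction with justness along any path on which some non-terminated $p$ stops acting. You also correctly anticipate that race-freedom is consumed in the session-fidelity lemma, not elsewhere.

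The gap is in your endgame. You claim that ``after finitely many further steps the residual must select $p$ either as choice leader or as recipient,'' and then invoke $\J = \W\C$ to conclude that $p$ is perpetually enabled. This is not what the depth measure buys you. Session fidelity for a communication $\comm{t}{\lambda}{q}$ uses a relation on global types that can \emph{commute past} the top-level choice of $\G$ (the paper's second rule for $\dgoesto{}$), because a race-free network may perform concurrent communications in a different order than $\G$ lists them. Consequently, along the induced sequence $\G_0,\G_1,\dots$ the measure $\psize{\G_i}{p}$ is non-increasing but need only be \emph{strictly} decreasing when the top-level choice is actually consumed; it can stabilise at a value $>1$, with $p$ never reaching the top. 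A concrete witness: take three independent communicating pairs $(a,b),(c,d),(e,p)$ typed by $\G=\rec{X}\comm{a}{\lambda}{b};\comm{c}{\mu}{d};\comm{e}{\nu}{p};X$. On the (unjust) path that only performs $\comm{a}{\lambda}{b}$, every residual keeps $\comm{c}{\mu}{d}$ at the top and $p$ at depth~$2$.

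The paper's fix is to drop the insistence on $p$. Once $\psize{\G_i}{p}$ stabilises, only the commuting rule is used, so the top-level prefix $\comm{r}{\mu_j}{s_j}$ of the residual is preserved and, by that rule's side condition, \emph{neither $r$ nor $s_j$} participates in any later transition of $\pi$. Typing then forces $r$ to be (after finitely many $\tau$'s) in a committed-send state and $s_j$ in a matching receive state, so the transition $\comm{r}{\mu_j}{s_j}$ is enabled and concurrent with every remaining step of $\pi$: that is the justness violation, and it need not involve $p$ at all. Your detour through $\J=\W\C$ is therefore unnecessary; the paper argues justness directly. The separate deadlock-freedom lemma you propose is harmless but redundant---the finite-path case falls out of the same argument.
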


The proof [\arxiv{see Appendix~\ref{app:F}}{{Appendix F of \cite{GHH21}}}]
hinges on the following \emph{session fidelity} result,
%\footnote{
%\rpH{
%Session fidelity implies \emph{subject reduction} [A,B,C,D,E]
%}{Session fidelity strengthens subject reduction }.
%}\commH{subject reduction follows immediately -- the only difference is whether we insist on the type being of a particular for rather than the existence of some type.}
for which we appeal to race-freedom:

For race-free network states $\N$, if $\N\goesto{\comm{p}{\lambda}{q}\,}\M$
and $\N \vdashg \G$ then there exists $\G'$ such that
$\G \dgoesto{\comm{p}{\lambda}{q}} \G'$ and $\M \vdashg \G'$.
Here, $\dgoesto\alpha$ is a transition relation on global types, defined for this purpose.
Session fidelity strengthens subject reduction, by insisting that the form of $\G'$ reflects the transition.

From this statement we conclude that
each reachable network state $\N'$ along any just path
$\pi$ is guardedly well-typed. For every location $p$, either there are infinitely many transitions along $\pi$ involving $p$,
or there exists a suffix $\pi'$ of $\pi$ stemming from network state $\N'$, such that location $p$ has no further transition involving $p$. 
Using justness and the fact that $\N'$ is guardedly well-typed one can show that
in the latter case $p$ has successfully terminated.

Using Theorems~\ref{thm:guarded typing completeness} and~\ref{thm:guarded typing}, and Proposition~\ref{pr:race-free collapse}
leads to a soundness and a completeness result for our type system.
\vspace{1ex} % The style does not include enough separating space

\begin{corollary}{characterise SC}
For race-free network $\N$, 
$\N$ is guardedly well-typed iff $\N$ satisfies $\Live{\St\C}$.
\end{corollary}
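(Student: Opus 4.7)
The plan is to chain together \thm{guarded typing completeness}, \thm{guarded typing}, and \pr{race-free collapse}. The key observation is that \pr{race-free collapse} states that on race-free networks the fairness notions $\J$ and $\St\C$ coincide, meaning they admit precisely the same set of complete paths. Since $\Live{\F}$ is defined as a universal quantification over $\F$-fair paths starting in a network state (see \df{live}), if the two fairness notions admit the same paths, the resulting liveness properties coincide pointwise on that network. Hence for a race-free $\N$, we have $\N\models\Live{\J}$ if and only if $\N\models\Live{\St\C}$.

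For the forward direction, assume $\N$ is race-free and guardedly well-typed. Apply \thm{guarded typing} to obtain $\N\models\Live{\J}$. Then, by the coincidence of $\J$ and $\St\C$ on race-free networks (\pr{race-free collapse}), we conclude $\N\models\Live{\St\C}$.

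For the backward direction, assume $\N$ is race-free and $\N\models\Live{\St\C}$. By \pr{race-free collapse} applied in the other direction, $\N\models\Live{\J}$. Then \thm{guarded typing completeness} -- which holds unconditionally, without any race-freedom assumption -- yields that $\N$ is guardedly well-typed.

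The only substantive step that is not already explicit in the cited results is the passage from a coincidence of fairness notions to a coincidence of liveness properties on the network in question. This is immediate once one unfolds \df{live}: both $\Live{\J}$ and $\Live{\St\C}$ are linear-time properties parametrised by a set of paths, and if those sets agree on all states reachable from $\N$, the properties agree on $\N$. No new argument about race-freedom, session types, or the merge operator is required beyond what was already developed; the corollary is simply a packaging of the two soundness and completeness theorems using \pr{race-free collapse} to replace $\Live{\J}$ by $\Live{\St\C}$.
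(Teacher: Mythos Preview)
Your proposal is correct and follows exactly the approach indicated in the paper, which simply cites Theorems~\ref{thm:guarded typing completeness} and~\ref{thm:guarded typing} together with Proposition~\ref{pr:race-free collapse}. Your explicit unpacking of why a coincidence of fairness notions entails a coincidence of the corresponding liveness properties is a welcome clarification, but not a departure from the paper's argument.
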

\vspace{1ex} % The style does not include enough separating space

{
\newcommand{\lambdai}{a}
\newcommand{\lambdaii}{b}
\begin{observation}{guarded}
Projections are defined such that recursion maps to $\End$ whenever $p \not\in \participants{\G}$ and $\mu X.\G$ is closed (see Page \pageref{def:proj}).
The latter condition plays an essential role for soundness. To see why, consider the following global type.
\[
\rec{X} \comm{p}{\lambdai}{q} ; \rec{Y} \comm{r}{\lambdaii}{q} ; X
\]
The anchor with variable $Y$ should, intuitively, be useless.
However, if we were to exclude condition ``$\mu X.\G$ is closed'', the above global type would type the following network.
\[
\begin{array}{rl}
 &\loc{p}{   \rec{X} \send{q}{\lambdai} ; \End }
\\
\pipar
&
\loc{q}{   \rec{X} \recv{p}{\lambdai} ; \rec{Y} \recv{r}{\lambdaii} ; X }
\\
\pipar
&
\loc{r}{   \rec{X} \rec{Y} \send{q}{\lambdaii} ; X }
\end{array}
\]
This race-free network reaches a deadlock right after communications
$\comm{p}{\lambdai}{q} ; \comm{r}{\lambdaii}{q}$. 
The above closedness-condition
resolves this issue, and can be used to correct papers on global types featuring recursion binders.
\end{observation}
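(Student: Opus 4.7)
The plan is to verify, one step at a time, the two claims implicit in the displayed counterexample: that under the weakened projection rule (obtained by dropping the ``$\rec{X}\G$ closed'' side condition) the displayed network is guardedly well-typed by $\G = \rec{X} \comm{p}{a}{q} ; \rec{Y} \comm{r}{b}{q} ; X$, and that reducing this network by the execution $\comm{p}{a}{q} ; \comm{r}{b}{q}$ reaches a deadlock.

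First, I compute the three projections of $\G$ under the weakened rule. For location $p$: the inner recursion body $\comm{r}{b}{q}; X$ does not mention $p$, so without the closedness side condition the inner $\rec{Y}\dots$ collapses to $\End$, producing $\proj{p}{\G} = \rec{X} \send{q}{a}; \End$. For $q$, which is a participant of every inner body, both recursion binders are preserved and I obtain $\proj{q}{\G} = \rec{X} \recv{p}{a}; \rec{Y} \recv{r}{b}; X$. For $r$, which participates in the inner body, I obtain $\proj{r}{\G} = \rec{X} \rec{Y} \send{q}{b}; X$. Crucially, each of these three projections is guarded in the sense of Definition~\ref{df:guarded typing}, so the weakened rule would classify the network as guardedly well-typed.

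Next, each of the three threads in the displayed network is syntactically identical to its corresponding projection. Hence the judgement $\vdash$ is witnessed in each case by a trivial coinductive derivation using only the rules of Figure~\ref{fig:types} for $\rec{X}\PP$, $\End$, and the singleton cases of $\sum$ and $\bigoplus$. Together with the previous paragraph, this establishes the network as guardedly well-typed under the weakened projection rule.

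Finally, I trace the reduction $\comm{p}{a}{q}; \comm{r}{b}{q}$, interleaved with the $\tau$-transitions that unfold recursions. After the first communication, $p$'s state is $\End$ and $q$'s state unfolds to $\rec{Y} \recv{r}{b}; \rec{X}\recv{p}{a}; \rec{Y} \recv{r}{b}; X$. After the second communication, $q$'s state unfolds further to $\rec{X}\recv{p}{a}; \rec{Y} \recv{r}{b}; X$, whose only enabled external action after unfolding $\rec{X}$ is $\recv{p}{a}$, while $r$ cycles back to being ready to send $b$ to $q$. Since $p$ has terminated, no sender for $a$ is available, and $r$'s pending output cannot match $q$'s input pattern; no transition is enabled and $q$ is not in state $\End$, so the state is a genuine deadlock. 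The only care required is keeping the two nested recursion unfoldings straight during the second communication; no deeper argument is needed, which is precisely why the closedness side condition is indispensable for soundness.
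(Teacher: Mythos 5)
Your computation of the three projections under the weakened rule, the observation that each coincides with the corresponding thread and is guarded, and the step-by-step trace of $\comm{p}{a}{q};\comm{r}{b}{q}$ (with the interleaved $\tau$-unfoldings) to a state where $q$ can only input $a$ from the terminated $p$ while $r$ can only offer $b$ are all correct, and this is exactly the route the paper takes: the observation is justified inline by precisely these two verifications.

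There is, however, one claim of the statement you never verify, and it is not decorative: \emph{race-freedom} of the displayed network. The deadlock by itself does not show that the closedness condition is essential for soundness, because Theorem~\ref{thm:guarded typing} only promises $\Live{\J}$ for networks that are both guardedly well-typed \emph{and} race-free; indeed, Examples~\ref{ex:mini} and~\ref{ex:not deadlock-free} are guardedly well-typed networks (under the actual projection rule) that deadlock without contradicting anything, precisely because they have races. So to conclude that dropping the closedness condition would break the soundness theorem, you must also check that this network has no reachable state with a race in the sense of Definition~\ref{df:race-free}. The check is immediate: $q$ is the only location that ever receives, and every reachable thread state of $q$ is (up to recursion binders and unfolding) a \emph{singleton} external choice, either $\recv{p}{a};\dots$ or $\recv{r}{b};\dots$, so no reachable network state admits two distinct communication transitions into the same receiver. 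With that one additional sentence, your argument is complete and matches the paper's.
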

}

\section{Related and future work on lock-freedom}\label{sec:related}

While our completeness result is the first of its kind, 
there are several soundness results for type systems with respect to some notion of lock-freedom, e.g.,~\cite{Padovani2014b,Dezani2005,Padovani2014,Severi2019,Carbone2014},
the most closely related of which we draw attention to in this section.
We also situate related work on lock-freedom with regard to our classification and point to future challenges.
\vspace{1ex} % The style does not include enough separating space

\paragraph{Strong lock-freedom}
Severi and Dezani-Ciancaglini\linebreak[4] propose a notion of \textit{strong lock-freedom}~\cite{Severi2019}
that coincides with $\React{\J}$ for race-free networks.
They employ a reactive semantics.
The authors impose a restriction on paths, ensuring that all concurrent transitions proceed in lockstep. Their assumption is not a fairness assumption, 
as defined here, as it does not satisfy feasibility.
However, it does have the effect of assuming justness, up to permutations of transitions, for race-free networks.
By Observation~\ref{obs:WC complete}, a completeness result along the lines of Theorem~\ref{thm:guarded typing completeness} cannot hold for strong lock-freedom. 
Strong lock-freedom cannot be lifted directly to networks with races.
A minimal change to their definitions requiring a maximal number of enabled locations to act in every step would extend their definition to networks with races; we did not analyse this extension.
Their use of coinductive syntax, rather than binders, is an alternative for avoiding the soundness problem in Observation~\ref{obs:guarded} that is common in the literature.
\vspace{1ex} % The style does not include enough separating space

\paragraph{Further lock-freedom schemes}
Carbone, Dardha and Montesi translate Kobayashi's scheme for lock-freedom
 to a session calculus where both internal and external choices are with a single location~\cite{Carbone2014}.
Their scheme is instantiated with $\SC$\,\footnote{The authors do not provide a definition of fairness, but cite Kobayashi~\cite{Kobayashi2002} instead. Kobayashi's definition does not lift immediately to the calculus of~\cite{Carbone2014}. However, the authors appear to intend $\SC$.}
and coincides with $\Live{\SC}$, restricted to their calculus.
Their scheme inherits the ambiguity discussed in Section~\ref{sec:Kobayashi}.
It assumes a semantics intermediate to those we study in this work, where internal choice is like in Figure~\ref{fig:red}, but recursion and singleton internal choice are reactive as in Figure~\ref{fig:reactive}.
This makes their approach weaker than ours, if lifted directly to our calculus with flexible choices:
Example~\ref{ex:sc-j:2} is lock-free under their scheme instantiated with assumption $\J$ (or even $\Pr$), but does not satisfy $\Live{\J}$ (or even $\React{\J}$) in our scheme.
Note that their work concerns binary session types with delegation, which we do not  consider.

Scalas and Yoshida propose the notions of $\textsc{Live}$, $\textsc{Live+}$, and $\textsc{Live++}$~\cite{Scalas2019}.
The first, $\textsc{Live}$, follows the scheme of Padovani, hence coincides with $\Live{\St\T}$.
The second, $\textsc{Live+}$, is essentially another formulation of $\Live{\SC}$. 
The third, \textsc{Live++}, coincides with $\Live{\Pr}$, hence is unsound for session calculi since it rejects key examples such as Example~\ref{ex:p-jt:early}.
The definitions of \cite{Scalas2019} are arguably less portable than Definition~\ref{df:lock-free}, since their definitions refer to specific language features.
\vspace{1ex} % The style does not include enough separating space

\paragraph{Asynchronous session calculi}
An evaluation of lock-freedom for asynchronous calculi, where queues are inserted between communicating threads, requires separate attention.
The asynchronous analogue to our session calculus is an infinite-state system. 
Therefore, $\St\T$ is no longer the strongest fairness assumption; this is now \emph{full fairness} ($\Fu$)~\cite{GH19}.
At the other end of the spectrum, there are also complications when defining concurrency of transitions (Definition~\ref{df:concurrent}).
It is a design decision whether a thread is treated as a single component along with its queues,
and whether enqueue and dequeue events for the same queue are dependent or concurrent. 
Consequently, synchrony/asynchrony and the spectrum of fairness assumptions are not entirely perpendicular dimensions when defining notions of lock-freedom. 
Fairness plays an essential role in related work on preciseness of subtyping for asynchronous calculi~\cite{Ghilezan2021}, which is further evidence that fairness assumptions require scrutiny here.
\vspace{1ex} % The style does not include enough separating space

\paragraph{Synthesis and multiparty compatibility}
The body of literature on synthesising global types from multiparty compatible local
types~\cite{Lange2012,Malo2013,Lange2015} plays  a complementary role to our synthesis results, used to establish completeness.
Usually, it is immediate that networks inhabiting a global type are multiparty compatible. 
Hence, we expect that a corollary of Theorem~\ref{thm:guarded typing completeness} is that $\Live{\J}$ implies multiparty compatibility, for some notion of multiparty compatibility.
If we further assume a synthesis result showing that multiparty compatible networks are guardedly well-typed -- under conditions such as race-freedom -- then that notion of multiparty compatibility coincides with $\Live{\J}$.
Such a result for our global type system, does not quite follow immediately from synthesis results in the literature, since Example~\ref{ex:guarded typing positive} would require parallel composition in related work. 
The formal development of multiparty compatibility is left as future work.
\vspace{1ex} % The style does not include enough separating space
 
\paragraph{Fair subtyping and weak normalisation}
A fair subtyping relation has been defined for session types as the largest relation over threads that preserves weak normalisation~\cite{Padovani16b}. Weak normalisation is the property that, at any point during an execution, it is not inevitable that a network will not successfully terminate.
Weak normalisation is strictly stronger than Padovani's notion of lock-freedom (\df{lock-free}) -- since the possibility of all components to successfully terminate entails the possibility of all components performing some enabled action -- but is incomparable to liveness properties stronger than $\Live{\St\C}$, including $\Live{\J}$ --
Example~\ref{ex:binaryext:early} is weakly normalising, but does not satisfy $\Live{\St\C}$. 
Consequently, the proposed notion of fair subtyping does not quite fit lock-freedom. Investigating a notion of fair subtyping that is adequate for $\Live{\St\T}$ rather than weak normalisation, and also identifying a session type system complete for $\Live{\St\T}$, as hinted at in the discussion surrounding Example~\ref{ex:unsound}, is future work. In particular, we do not claim that $\Live{\J}$ is the only notion of lock-freedom that can be characterised by some session type system.

\section{Conclusion}
In this paper, we have systematically classified the notions of lock-freedom that arise by taking
every fairness assumption listed in a recent survey~\cite{GH19}.
Based on our comprehensive analysis, we are compelled to put forward a notion of fairness suitable for session calculi:
\textit{justness} (Definition~\ref{df:just}), and its resulting notion of lock-freedom $\Live{\J}$, which we propose to call \textit{``just lock-freedom''}.
Through a generalisation of the classical merge operation on local session types, we have devised a session type system that is complete for just lock-freedom. 
Moreover, race-free networks are sound for just lock-freedom.
Justness is always reasonable to assume, since it does not constrain the `free will' of participants (\cf Examples~\ref{ex:sc-j:early} and~\ref{ex:reactive:early}), while ensuring that concurrent transitions do not constrain each other (\cf Examples~\ref{ex:p-jt:early} and~\ref{ex:guarded typing positive}).

A strength of our results is that completeness (Theorem~\ref{thm:guarded typing completeness}) holds for networks with flexible choice, in which branches of the same choice operator may involve different locations.
Completeness suggests a methodology for session calculi that allows us to pass straight from any network satisfying our realistic notion of lock-freedom $\Live{\J}$ to a global session type.
The methodology would be to directly model check that a network satisfies $\Live{\J}$, and then use the algorithm in Figure~\ref{fig:algorithm} to synthesise a global type for that network.
This methodology works even for networks featuring races. 

Interestingly, there are no previous results synthesising global types directly from lock-freedom. Indeed, Example~\ref{ex:guarded typing positive}, which satisfies almost all notions of lock-freedom in the literature, is known to be out of scope of related session type systems based on global types without explicit parallel composition.
While this incompleteness issue in related work is partly due to the less general merge operator employed in those systems, 
another reason that enables us to obtain the completeness result in
Theorem~\ref{thm:guarded typing completeness} is our
  scrutiny of the role of fairness assumptions. 
  In fact, Example~\ref{ex:not well typed} shows that completeness of our session type system cannot be attained when assuming strong fairness of components.
Furthermore, even small variations in the choice of semantics for the transition system can affect
fairness assumptions significantly, weakening corresponding notions of lock-freedom (see Observation~\ref{obs:WC complete}).
Indeed, amongst all notions of lock-freedom considered in this paper, only $\Live{\J}$ yields both completeness for all networks and soundness  for race-free networks (Theorem~\ref{thm:guarded typing}).
\vspace{1ex plus 2pt}

\paragraph*{Acknowledgement}
The key question addressed in this paper arose in conversation with Ilaria Castellani, Mariangiola Dezani-Ciancaglini, and Paola Giannini, to whom we are grateful for their generous feedback on this work.

% trigger a \newpage just before the given reference
% number - used to balance the columns on the last page
% adjust value as needed - may need to be readjusted if
% the document is modified later
%\IEEEtriggeratref{8}
% The "triggered" command can be changed if desired:
%\IEEEtriggercmd{\enlargethispage{-5in}}
\bibliographystyle{IEEEtran}
\bibliography{sessions}

\arxiv{
\cleardoublepage
\appendix
% !TEX root = main.tex
\subsection{Classifying Fairness Notions for our Session Calculus\label{app:A}}

In Section~\ref{sec:fairness}, we have presented strong and weak fairness 
of transitions, and of components. We have also introduced the concepts of progress and 
justness. In this appendix, we discuss further fairness assumptions and relate them 
to each other. The notions and our classifications are based on the survey~\cite{GH19}.

The classification from \cite{GH19} defines $3 \times 6 = 18$ fairness assumptions
$xy$ with $x\mathop\in \{\St,\W,\J\}$ and $y\mathop\in \{\C,\Gr,\In,\Sy,\A,\T\}$.
This is a shorthand for ``$x$ fairness of $y$'', with $x\mathop\in \{$strong, weak, $\J$-$\}$
and $y\mathop\in \{$components, groups of components, instructions,
synchronisations, actions, transitions$\}$.

Strong and weak fairness were defined in \sect{fairness}; these notions are parametrised by the
concept of a task. The parameter $y$ above governs the choice of tasks.
\emph{Fairness of transitions}, in which each transition constitutes a task,
was already defined in \sect{fairness}.

Remember that a \emph{component} of a network expression is one of its locations, 
and that ${\it comp}$ is a function that associates with each transition the set of
one or two components that are involved in that transition.
Two transitions $t$ and $u$ are \emph{concurrent}, notation $t \conc u$, iff
$\comp(t) \cap \comp(u)=\emptyset$.

In \emph{fairness of components}, the components constitute the tasks.
Component $p$ is \emph{enabled} in a network state iff a transition $t$ with $p \in \comp(t)$ is
enabled; a path \emph{engages} in a component $p$ iff it contains a transition $t$ with $p \in \comp(t)$.

In \emph{fairness of groups of components}, the tasks are the \emph{sets} (or \emph{groups}) of components.
A group $G$ is \emph{enabled} in a network state iff a transition $t$ with $G = \comp(t)$ is
enabled; a path \emph{engages} in $G$ iff it contains a transition $t$ with $G = \comp(t)$.

Next to transitions and components, there is the concept of \emph{instructions}.
Let $\cal{I}$ be the set of all occurrences of subexpressions $\lambda_k!p_k$, $\lambda_k?p_k$  or
$\mu X$ in a network expression $\N$.\linebreak These subexpressions are called \emph{instructions}.
Each transition labelled $\tau$ stems from exactly one
instruction, and each transition labelled $\comm{p}{\lambda}{q}$ stems from exactly two instructions.
This yields the function $\instr$, which associates with each transition the set of
one or two instructions that gave rise to~it.

In \emph{fairness of instructions}, the instructions constitute the tasks.
  Instruction $I$ is \emph{enabled} in a network state iff a transition $t$ with $I \in \instr(t)$ is
  enabled; a path \emph{engages} in an instruction $I$ iff it contains a transition $t$ with $I \in \instr(t)$.

In \emph{fairness of synchronisations}, the tasks are the \emph{sets} of instructions, called
\emph{synchronisations}.
A synchronisation $Z$ is \emph{enabled} in a network state iff a transition $t$ with $Z = \instr(t)$ is
enabled; a path \emph{engages} in a synchronisation $Z$ iff it contains a transition $t$ with $Z = \instr(t)$.

  In \emph{fairness of actions}, the tasks are the \emph{actions}, or transition labels.
  An action $a$ is \emph{enabled} in a network state iff a transition labelled $a$ is
  enabled; a path \emph{engages} in an action $a$ iff it contains a transition $t$ labelled $a$.

For each of these notions of a task, \cite{GH19} also defines \emph{J-fairness}.
A task $T$ is said to be enabled \emph{during} a transition $u$ from network state $\N$ to $\N'$ if
$T$ is enabled in $\N$ through a transition $t$ that is concurrent with $u$ (\ie $t \conc u$).
Task $T$ is said to be \emph{continuously} enabled if it is enabled in all
network states of $\pi$ and during all transitions of $\pi$. Now a path $\pi$ is \emph{J-fair} if,
for each suffix $\pi'$ of $\pi$, each task that is continuously enabled on $\pi'$ is engaged in by $\pi'$.

Besides the 18 fairness assumptions defined above, the authors of~\cite{GH19} also consider progress (P) and
  justness (J), already defined in \sect{fairness}, as well as \emph{full fairness} (Fu),
  \emph{extreme fairness} (Ext), \emph{probabilistic fairness} (Pr), and
  \emph{strong weak fairness of instructions} (SWI). For finite-state systems (which include the
  networks in our session calculus) Fu, Ext and Pr coincide with ST\@\cite{GH19}.
  For this reason, there is no need to define these concepts here. Regarding SWI,
  say that an instruction $I$ is \emph{requested} in a network state $\N$ if it is enabled in one of
  the treads in $\N$, even if it not enabled by $\N$ itself, due to lack of a synchronisation partner.
  Now a path $\pi$ is \emph{SWI-fair} if, for each suffix $\pi'$ of $\pi$, each instruction that is
  perpetually requested and relentlessly enabled on $\pi'$ is engaged in by $\pi'$.

The following properties from \cite{GH19} trivially hold (for a given network $\N$).

\begin{enumerate}[(1)]
\item For each synchronisation $Z\subseteq \I$, and for each network state $\N$,
  there is at most one transition $t$ with $\instr(t)\mathbin=Z$ that is enabled in~$\N$.%
  \label{unique synchronisation}
\item $\I$ is finite.
  \label{finite}
\item There is a function $\cmp\!:\I\rightarrow\Ce$, where $\Ce$ is the set of components or locations
  in the network, such that
  $\comp(t)\mathbin=\{\cmp(I)\mathbin{\mid} I\mathop\in \instr(t)\}$ for all transitions $t$.
  \label{cmp}
\item\label{swi1} If an instruction $I$ is enabled in a state $\N$, it is also requested.
\item\label{swi2} If instruction $I$ is requested in network state $\N$ and $u$ is a
transition from $\N$ to $\N'$ such that $\cmp(I) \notin \comp(u)$, then $I$ is still requested in $\N'$.
\item
  If $t\mathbin{\conc} u$ with $\source(t)\mathbin=\source(u)$, then $\exists v\mathbin\in\Tr$
  with $\source(v)\mathbin=\target(u)$ and $\instr(v)\mathbin=\instr(t)$.\label{conc}%
\end{enumerate}
Given this, the classification of fairness assumptions from \cite{GH19} applies to the current setting as well,
although some of these assumptions could coincide.
The resulting lattice is shown in Figure~\ref{full taxonomy}.
Here the numbers on the edges refer to the above conditions, when these are needed for the
  indicated comparison in strength.

When using labelling in the style of
CCS, all our transitions would be labelled $\tau$, and as a consequence, $\J\A$,  $\W\A$ and $\St\A$
would collapse with P\@. But here the labelling is quite different.%
\vspace{1ex}

\begin{figure}[th]
\input{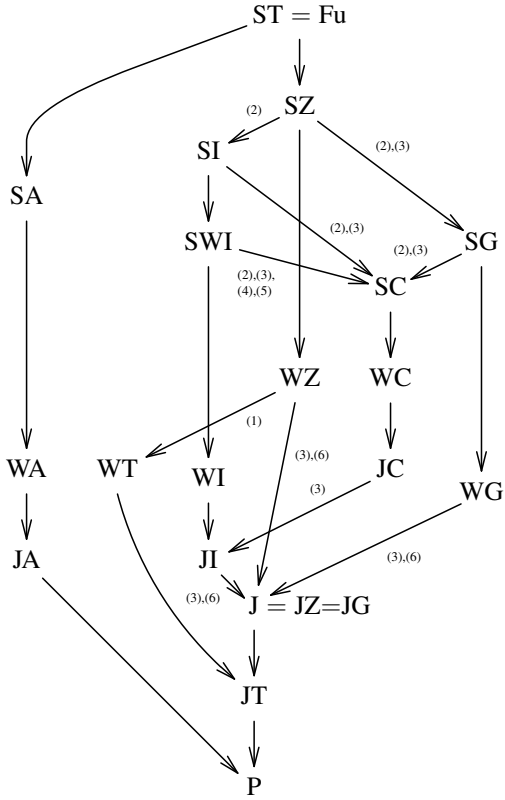}
\centerline{\raisebox{1ex}{\box\graph}}
\vspace{2ex}
\caption{A classification of progress, justness and fairness assumptions for finite-state systems \cite{GH19}}
\label{full taxonomy}
\end{figure}

\header{SC, SA, WZ and SWI are not as strong as SI}\mbox{}\\
The following network terminates when assuming SI, for in the only infinite execution the
$\tau$-transition belonging to instruction $\send{\seller}{\buy}$ is infinitely often enabled but
never taken. Termination is not guaranteed when assuming $\St\C$, $\St\Gr$, $\W\Sy$, SWI or $\St\A$.
\vspace{-2ex}

\begin{equation*}
\begin{array}{rl}
&
\loc{\buyer}{
 \rec{X}\left(
  \send{\seller}{\extend};X
  \oplus 
  \send{\seller}{\buy}
 \right)
}
\\
\pipar
&
\loc{\seller}{
 \rec{Y}\left(
  \recv{\buyer}{\extend} ; Y\right.\\
  &\hspace{10.5ex}\left.+\
  \recv{\buyer}{\buy} ; \send{\shipper}{\order}
 \right)
}
\\
\pipar
&
\loc{\shipper}{
  \recv{\seller}{\order} 
}
\end{array}
\end{equation*}
\vspace{1ex} % The style does not include enough separating space

\header{WC, WG, WI, WZ and SA are not as strong as SC}\mbox{}\\
The following network terminates when assuming SC, for in any infinite execution a
transition from $\buyer2$ is infinitely often enabled but never taken.
It does not surely terminate when assuming $\W\C$, for this transition is not perpetually enabled, due to
the $\tau$-transitions of $\seller$.
Neither is termination guaranteed when assuming $\W\Gr$, $\W\In$ or $\W\Sy$.
Furthermore, it does not surely terminate when assuming $\St\A$, because $\buyer2$ may be stuck before doing its
initial $\tau$-transition.\vspace{-2ex}

\begin{equation}\label{eg:sc-j}
\begin{array}{@{}r@{\ \,}l@{}}
&
\loc{\seller}{
 \rec{X}\left(
  \recv{\buyer1}{\order1};X
  +
  \recv{\buyer2}{\order2}
 \right)
}\hspace*{-.7em}
\\
\pipar
&
\loc{\buyer1}{
 \rec{Y}
  \send{\seller}{\order1} ; Y
}
\\
\pipar
&
\loc{\buyer2}{
  \send{\seller}{\order2}
}
\end{array}
\end{equation}
\vspace{1ex} % The style does not include enough separating space

\header{Collapsing Fairness Assumptions}\mbox{}
\vspace{1ex} % The style does not include enough separating space

\begin{proposition}{SG}
SG and SC coincide.
WG is weaker than~WC.
\end{proposition}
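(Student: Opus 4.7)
The proposition states two claims. For $\St\Gr = \St\C$ I would show both inclusions between the corresponding sets of fair paths, whereas $\W\Gr \preceq \W\C$ only requires one. The easy half of the first claim, $\St\C \preceq \St\Gr$, follows by pigeonhole: if some component $p$ is relentlessly enabled on a suffix $\pi'$ of a path but never engaged, then for infinitely many states of $\pi'$ some transition $t$ with $p \in \comp(t)$ is enabled. Since the network has finitely many locations, only finitely many groups contain $p$, so some fixed group $G \ni p$ serves as $\comp(t)$ of an enabled transition in infinitely many of those states, and is therefore relentlessly enabled. Every $G$-transition involves $p$, and $p$ is never engaged, so $G$ is never engaged either; this violates $\St\Gr$.

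The other inclusion and the $\W\Gr \preceq \W\C$ claim both hinge on a \emph{stuckness} observation about our calculus: once a component $p$ enters a state of the form $\bigoplus_{i\in I} \send{q_i}{\lambda_i}; \PP_i$ or $\rec{X}\PP$ (in which the only available action is a $\tau$) or an annotated send state with the $\chosensymb$ marker, the only transitions available to $p$ change its state, so if $p$ never acts then $p$'s state is frozen from that point on. With this in hand I case-split on a group $G$ that is relentlessly (resp.\ perpetually) enabled on $\pi'$ and never engaged. For the singleton case $G = \{p\}$, enablement forces $p$ into a $\tau$-capable state, and stuckness gives that from its first occurrence on $\pi'$ $p$ is frozen, hence perpetually enabled and never engaged. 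For $G = \{p,q\}$, pick any state $s$ of $\pi'$ where $G$ is enabled, by $\comm{p}{\lambda}{q}$ say (the symmetric enablement by $\comm{q}{\mu}{p}$ is handled by exchanging $p$ and $q$). This forces $p$ into an annotated send state at $s$, whose only action is precisely the communication that would engage $G$, so $p$ is frozen for the remainder of $\pi'$. Every subsequent enabling of $G$ must therefore still be $\comm{p}{\lambda}{q}$, and this transition enables component $p$ too; hence $p$ is relentlessly (resp.\ perpetually) enabled but never acts, violating $\St\C$ (resp.\ $\W\C$).

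The main obstacle is the binary case: having pinned $p$ in an annotated send state, I must rule out that later enablings of $G$ are supplied by the other kind of communication $\comm{q}{\mu}{p}$, which would require $p$ to be in a $\sum_{i\in I} \recv{p_i}{\lambda_i}; \PP_i$ state and thus contradict the frozen shape of $p$. Once this is made explicit, the same argument yields both $\St\Gr \preceq \St\C$ — completing $\St\Gr = \St\C$ — and, replacing ``relentlessly'' by ``perpetually'' throughout, the inclusion $\W\Gr \preceq \W\C$ required by the second claim.
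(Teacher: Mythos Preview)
Your proof is correct and follows essentially the same approach as the paper: case-split on whether the neglected group is a singleton $\{p\}$ (a $\tau$-task) or a pair $\{p,q\}$, and in the binary case use that the sending party is pinned in a committed send state, so that component $p$ is itself relentlessly/perpetually enabled yet never engaged. The paper omits the easy direction $\St\C \preceq \St\Gr$, taking it from the general classification in~\cite{GH19}; your explicit pigeonhole argument for it is fine.

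One minor simplification: what you identify as the ``main obstacle''---ruling out that later enablings of $G=\{p,q\}$ come from $\comm{q}{\mu}{p}$---is not actually needed. Component $p$ is enabled whenever group $G$ is, simply because $p\in G$; the direction of the enabling communication is irrelevant for that conclusion. The stuckness of $p$ in its committed send state is used only to conclude that $p$ never acts (its sole transition would engage $G$), not to constrain the shape of later enablings. The paper's proof is accordingly shorter on this point.
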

\begin{proof}
Let $\pi$ be an infinite path in our network that is not SG-fair.  One case is that an interaction
between $p$ and $q$ is infinitely often enabled, but never taken. By taking a suffix, we may assume
this interaction is enabled in the first state of $\pi$. W.l.o.g., let $p$ be the sending
party. Then process $p$ must be in a state of the form $\send{q}{\lambda}; P$, and it remains in that
state for the rest of $\pi$.  It follows that also component $p$ is infinitely often enabled, but
never taken.  Hence $\pi$ is not SC-fair.

The other case is that a single-component task (thus consisting of $\tau$-transitions)
is infinitely often enabled, but never taken. Also in this case it follows that $\pi$ is not SC-fair.

The other statement is obtained in the same way.
\end{proof}

\begin{proposition}{SZ}
SZ and SI coincide.
WZ is weaker than WI.
\end{proposition}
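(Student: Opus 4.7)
The plan is to mirror the structure of the proof of the preceding proposition (SG coinciding with SC): both SZ $\Leftrightarrow$ SI and WI $\Rightarrow$ WZ will be shown by contraposition, taking an infinite path witnessing unfairness under the weaker notion and exhibiting a task witnessing unfairness under the stronger one. The key structural facts are finiteness of $\mathcal{I}$ (property~(\ref{finite})) and that in a chosen send state a component's only outgoing transition is the matching communication.

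For SZ $\Rightarrow$ SI (which already holds in the generic classification via property~(\ref{unique synchronisation})) I would argue by pigeonhole. Assume an instruction $I$ is relentlessly enabled on a suffix $\pi'$ but never engaged. At each of the infinitely many states where $I$ is enabled, some transition $t$ with $I \in \instr(t)$ is enabled, and its synchronisation belongs to the finite set $\{Z \subseteq \mathcal{I} \mid I \in Z\}$. By pigeonhole some $Z^* \ni I$ is enabled in infinitely many such states, hence relentlessly enabled; its non-engagement is immediate from the non-engagement of $I$, giving a violation of SZ.

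For SI $\Rightarrow$ SZ I would split on the form of $Z$. Assume $Z$ is relentlessly enabled on $\pi'$ but never engaged, so each $I \in Z$ is likewise relentlessly enabled. If $Z$ is a singleton whose instruction is a recursion $\rec{X}$, then the unique transition involving $\rec{X}$ has synchronisation $Z$, so $\rec{X}$ is never engaged, violating SI. If $Z$ is a singleton whose instruction $I$ is a send in an internal choice, the only transitions engaging $I$ are the $\tau$-resolution of that choice (with synchronisation $Z$) and subsequent communications from the resulting chosen state, itself reachable only via that same $\tau$; past any initial chosen prefix, $I$ is therefore never engaged. The delicate case is $Z = \{I, I'\}$ with $I$ a send at $p$ to $q$ and $I'$ a matching receive at $q$: adapting the argument of the preceding proposition, on a suffix starting at a state where $Z$ is enabled, $p$ is in a chosen send state committed to $q$, and each transition of $p$ from there is a communication whose synchronisation has the form $\{I, J\}$ with $J$ a matching receive offered by $q$ at that moment. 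Provided the matching receive at $q$ is uniquely $I'$, necessarily $J = I'$ and the transition engages $Z$, contradicting the hypothesis; hence $p$ is stuck, making $I$ relentlessly enabled yet never engaged, violating SI. The main obstacle is precisely this uniqueness-of-matching-receive step, which I would write up with particular care to make explicit exactly what it rests upon.

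For WZ $\weaker$ WI it suffices to show WI $\Rightarrow$ WZ; this is easier because perpetual enabledness pins the relevant components down. If $Z$ is perpetually enabled on $\pi'$ but never engaged, every component involved in $Z$ must remain in its current state throughout $\pi'$: any transition of such a component would move it out of the local configuration witnessing $Z$'s enabledness, contradicting perpetuity. Consequently each $I \in Z$ is perpetually enabled and never engaged on $\pi'$, so WI fails.
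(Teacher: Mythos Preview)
Your plan mirrors the paper's: both establish the nontrivial direction by contraposition, case-split on the shape of the neglected synchronisation $Z$, and in the two-element case exploit that the sender $p$ sits in a committed-send state. Your treatment of the singleton cases is correct and more explicit than the paper's (which folds them into one sentence); your pigeonhole argument for the converse direction is also fine, though the paper simply takes that direction from the general classification. Your argument that WZ is weaker than WI is correct and matches the paper's ``obtained in the same way''.

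The gap you flag in the two-element case is genuine. The paper handles this case by asserting that $p$ ``remains in that state for the rest of $\pi$'' and concluding that $p$'s send instruction is relentlessly enabled yet never engaged. This sentence is lifted from the SG/SC proof, where it \emph{is} justified: every transition out of $p$'s committed-send state has component-set $\{p,q\}$, the very group assumed never engaged. The analogous justification here would require every such transition to have instruction-set $Z$ --- exactly your uniqueness-of-matching-receive hypothesis. But the calculus does not enforce this: if $q$'s external choice contains two distinct occurrences of $\recv{p}{\lambda}$ (the syntax permits this), then from $p$'s committed-send state there are distinct communication transitions with different synchronisations, and $p$ can move without engaging $Z$. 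In fact, with $p$ cycling through two send occurrences and $q$ offering two matching receive occurrences, one can schedule the path so that every instruction is engaged infinitely often while one particular pairing $Z$ never is, leaving no instruction at all to witness an SI-violation. So the step you flagged cannot be completed as written; the paper's own proof passes over the same point without justification.
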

\begin{proof}
Let $\pi$ be an infinite path in our network that is not SZ-fair.  One case is that a
synchronisation between $p$ and $q$ is infinitely often enabled, but never taken. By taking a
suffix, we may assume this interaction is enabled in the first state of $\pi$. W.l.o.g., let $p$ be
the sending party. Then process $p$ must be in a state of the form $\send{q}{\lambda}; P$, and it
remains in that state for the rest of $\pi$.  It follows that also this specific instruction of $p$
is infinitely often enabled, but never taken.  Hence $\pi$ is not SI-fair.

The other case is that a single-instruction task (thus consisting of $\tau$-transitions)
is infinitely often enabled, but never taken. Also in this case it follows that $\pi$ is not SI-fair.

The other statement is obtained in the same way.
\pagebreak[3]
\end{proof}

\begin{proposition}{SA}
SA is weaker than SC.
WA is weaker than WC.
\end{proposition}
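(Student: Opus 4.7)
The plan is to establish both halves by contrapositive: a path that is not $\St\A$-fair (resp.\ not $\W\A$-fair) must also fail $\St\C$ (resp.\ $\W\C$). Both arguments share a single structural observation, read straight off Figure~\ref{fig:red}: if a component $p$ is in an internal choice state $\bigoplus_{i\in I}\, \send{p_i}{\lambda_i}; \PP_i$, in a recursion state $\rec{X}\PP$, or in a committed-send state $\chosen{\send{q}{\lambda}}; \PP$, then the only transition of which $p$ is a participant is, respectively, the $\tau$ that resolves the internal choice, the $\tau$ that unfolds the recursion, or the communication $\comm{p}{\lambda}{q}$. Moreover, no rule of Figure~\ref{fig:red} allows another component to rewrite $p$'s local thread state, so once $p$ is in such a state and the relevant action is never taken, $p$ remains there forever.

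For $\St\A$: assume $\pi$ fails $\St\A$ because some action $a$ is relentlessly enabled but taken only finitely often, and restrict attention to the suffix $\pi'$ of $\pi$ after the last occurrence of $a$. If $a = \tau$, pick any state of $\pi'$ at which a $\tau$-transition is enabled (one exists because $\tau$ is relentlessly enabled on $\pi'$); the participating component $p$ is then stuck by the structural observation, hence perpetually enabled on the tail of $\pi$ and never engaging, which contradicts $\St\C$. If $a = \comm{p}{\lambda}{q}$, pick a state of $\pi'$ at which $a$ is enabled; then $p$ sits in a committed-send state for $q$ and stays there. Enabledness of $p$ thereafter depends on $q$, but relentless enablement of $a$ on $\pi$ forces $p$ to be enabled in some state of every suffix of $\pi$. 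Thus $p$ is relentlessly enabled yet never engages, contradicting $\St\C$.

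For $\W\A$ the same case split applies, but the witnessing suffix $\pi'$ is now one on which $a$ is perpetually enabled. In Case~1 the argument above pins a component $p$ in a $\tau$-enabled state throughout $\pi'$; in Case~2 perpetual enablement of $a$ means that at every state of $\pi'$ the action $a$, and hence $p$, is enabled. In either case $p$ is perpetually enabled on $\pi'$ and never engages, contradicting $\W\C$.

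The main obstacle I foresee is Case~2 of the strong version: because $p$'s enabledness in the committed-send state requires $q$'s cooperation, the structural observation alone does not deliver relentless enablement of $p$, and we must feed back the hypothesis that $a$ itself is relentlessly enabled to recover it. Everything else reduces to inspection of the rules of Figure~\ref{fig:red}.
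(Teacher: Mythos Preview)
Your proof is correct and follows the same overall strategy as the paper: argue by contrapositive, and for a visible action $a = \comm{p}{\lambda}{q}$ pin the sender $p$ in its committed-send state, then transfer relentless (resp.\ perpetual) enabledness of the action $a$ to the component $p$ to obtain an $\St\C$- (resp.\ $\W\C$-) violation.

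The one genuine difference is the handling of $a=\tau$. The paper disposes of this case in one line by a counting observation specific to the calculus: on any infinite path, $\tau$-transitions make up at least half of all transitions (each communication consumes a committed-send state that had to be produced by a preceding $\tau$), so $\tau$ is taken infinitely often and can never witness an $\St\A$/$\W\A$ failure. You instead treat $a=\tau$ head-on with the same stuck-component mechanism, finding a component frozen in an internal-choice or recursion state. Your route trades the counting argument for an extra (but easy) case; both are sound, and your version has the mild advantage of not relying on the ``$\tau$'s are at least half'' fact.
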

\begin{proof}
Let $\pi$ be an infinite path in our network that is not SA-fair. So on $\pi$ a transition label
$\alpha$ is infinitely often enabled, but never taken. Then $\alpha\neq\tau$, because it is easy to
show that each infinite path contains infinitely many $\tau$-transitions; in fact, on any 
path from a network state $\tau$-transitions make for at least half of all transitions. So $\alpha$ has the form
$\comm{p}{\lambda}{q}$.  In the first state of $\pi$ on which $\alpha$ is enabled, the process $p$
must be in a state of the form $\send{q}{\lambda}; P$, and it remains in that state for the rest of
$\pi$. For simplicity, we may assume that $\alpha$ is enabled in the first state of $\pi$; otherwise
we simply take a suffix.  Hence the instruction $\send{q}{\lambda}$ is perpetually requested, yet
never taken. Moreover, since $\alpha$ is infinitely often enabled, so is component $p$. Yet no
action from this component occurs on $\pi$. Hence $\pi$ is not SC-fair.

The other statement is obtained in the same way.
\end{proof}

\begin{proposition}{WC appendix}
WC coincides with J (and thus also with JC and JI).
\end{proposition}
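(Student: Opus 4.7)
The first conjunct, that WC coincides with J, is exactly Proposition~\ref{pr:WC} from the main body, whose proof performs a three-way case analysis on the stuck state of a perpetually enabled component; I would invoke it directly. It remains to show that JC and JI also coincide with J.

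For JC, I would argue by sandwiching. From the general classification in~\cite{GH19} (visualised in Figure~\ref{full taxonomy}) we have two relevant inequalities: $\J \weaker \textrm{JC}$, since justness is the weakest J-fairness assumption in the survey; and $\textrm{JC} \weaker \W\C$, because a continuously enabled task is, in particular, perpetually enabled, so J-fairness of components makes strictly fewer demands on paths than weak fairness of components. Chaining these with the established identity $\W\C = \J$ gives $\J \weaker \textrm{JC} \weaker \W\C = \J$, which forces JC to coincide with J.

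For JI, the analogous chain $\J \weaker \textrm{JI} \weaker \textrm{WI}$ reduces the task to establishing $\textrm{WI} = \J$. I would do this by the same stuck-state case analysis as in Proposition~\ref{pr:WC}: if a component $p$ is perpetually enabled but never engages, its thread is frozen at a particular sub-expression (a $\tau$-instruction, a send, or a receive), so the single instruction witnessing the enabling of $p$ is itself perpetually enabled; conversely, a perpetually enabled instruction pins its host component in the same frozen state. Hence WI and WC coincide on our calculus, and the chain collapses to JI = J.

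The anticipated obstacle is purely definitional: one must verify carefully that ``continuously enabled'' in the sense of~\cite{GH19} is strictly stronger than ``perpetually enabled'', so that $\textrm{JC} \weaker \W\C$ and $\textrm{JI} \weaker \textrm{WI}$ are oriented as used above. Once this check is made, both results reduce to a routine collapse of short chains in the fairness lattice.
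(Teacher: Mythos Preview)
Your proposal is correct and follows the same sandwiching strategy the paper intends. The paper's own proof only spells out $\W\C = \J$ (identical to Proposition~\ref{pr:WC}) and leaves the parenthetical ``and thus also with JC and JI'' to the lattice of Figure~\ref{full taxonomy}.

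One minor difference: for JI you route through WI, i.e.\ $\J \weaker \textrm{JI} \weaker \textrm{WI}$ and then argue $\textrm{WI}=\J$ separately. This works (and is in fact Proposition~\ref{pr:WI}), but it is more than needed. The lattice already gives $\J \weaker \textrm{JI} \weaker \textrm{JC} \weaker \W\C$ (using conditions~(\ref{cmp}) and~(\ref{conc}) for the lower links and $\textrm{J}y \weaker \textrm{W}y$ for the upper one), so once $\W\C=\J$ both JI and JC collapse in one stroke without invoking WI at all. Also, your justification ``justness is the weakest J-fairness assumption'' is not literally true in general (JT sits below J, and JA is incomparable), though the specific inequalities $\J\weaker\textrm{JC}$ and $\J\weaker\textrm{JI}$ you need do hold.
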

\begin{proof}
Let $\pi$ be an infinite path in our network that is not WC-fair. So on $\pi$ a component $p$ is
perpetually enabled, but never taken. In case $p$ is stuck in a state where its next transition is a
$\tau$, then $\pi$ is not just.

In case $p$ is stuck in a state $\send{q}{\lambda}; \PP$, then, for component $p$ to be perpetually
enabled, $q$ must always be in a state $\sum_{i \in I} \recv{p_i}{\lambda_i}; \PP_i$ with $p=p_k$ and
$\lambda=\lambda_k$ for some $k\in I$. Also process $q$ must get stuck in such a state, for if $q$
keeps moving, it will at some point reach a state $\mu X. \PQ$, which is not of the above form.
Consequently, $\pi$ is not just.

The remaining case is that $p$ is stuck in a state of the form $\sum_{i \in I} \recv{p_i}{\lambda_i}; \PP_i$.
For component $p$ to be enabled, a component $p_k$ with $k\in I$ must be in a state
$\send{p}{\lambda_k}; \PP$. Again it follows that $\pi$ is not just.
\end{proof}

\begin{proposition}{WI}
WI coincides with J.
\end{proposition}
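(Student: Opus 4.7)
The approach is to prove both inclusions, paralleling the proof of Proposition~\ref{pr:WC appendix}. For WI implies J I work by contrapositive. Assume $\pi$ is not J: fix a suffix $\pi'$ beginning in some state $s$ and a transition $t$ enabled at $s$ such that every transition occurring in $\pi'$ is concurrent with $t$. Then no component of $t$ appears in any transition of $\pi'$, so the states of those components do not change along $\pi'$, which keeps $t$ itself enabled at every state of $\pi'$. Hence each instruction in $\instr(t)$ is perpetually enabled on $\pi'$, and none of them is engaged (engaging such an instruction would involve a component of $t$), so $\pi$ is not WI.

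For J implies WI I would appeal to Proposition~\ref{pr:WC appendix} and then show that WC implies WI for our calculus. Assume $\pi$ is WC-fair, let $\pi'$ be a suffix, and let $I$ be an instruction at $p := \cmp(I)$ that is perpetually enabled on $\pi'$. Any witness transition for enabledness of $I$ also witnesses enabledness of $p$, so $p$ is perpetually enabled on $\pi'$; by WC, some transition $u$ in $\pi'$ engages $p$, and I pick the first such $u$.

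The key claim is that this first $u$ engages $I$. Otherwise, the state of $p$ just before $u$ admits both $I$ (since $I$ is perpetually enabled) and $u$. A case analysis on this state handles all possibilities: if it is $\rec{X}\PP$ with $I$ the corresponding recursion occurrence, or a committed send state $\send{\chosen{q}}{\lambda};\PP$, then the only outgoing transition of $p$ engages $I$ directly. If it is $\bigoplus_{i\in J}\send{q_i}{\lambda_i};\PP_i$ with $I$ the $k$-th send occurrence, then $u$ is a $\tau$-commit to some branch $j$; if $j = k$ then $u$ engages $I$, while if $j \neq k$ then $p$ moves to the committed state of branch $j$, which no longer contains the syntactic occurrence of $I$, so $I$ is not enabled just after $u$, contradicting perpetual enabledness. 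The receive-choice case $\sum_{i\in J}\recv{p_i}{\lambda_i};\PP_i$ is symmetric: $u$ is a communication, which either engages $I$ directly or moves $p$ into some $\PP_j$ with $j \neq k$ and likewise eliminates the occurrence of $I$. Hence the first move of $p$ engages $I$, so $I$ is engaged on $\pi'$ and $\pi$ is WI.

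The principal obstacle will be the case analysis in the second direction: one must carefully track the fixed syntactic occurrence of $I$ across a transition of $p$ that does not engage $I$ and argue that it is purged from $p$'s state. This argument depends essentially on the design of Figure~\ref{fig:red}, where internal-choice commitment and external-choice reception immediately discard the unchosen branches, leaving no way for $p$ to move without either engaging $I$ or losing it.
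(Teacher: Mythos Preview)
Your proof is correct. For the hard direction (just implies WI-fair), the paper argues directly by contrapositive: given an instruction $I$ at $p$ that is perpetually enabled on some suffix but never engaged, a case split on the form of $I$ (recursion, send, or receive) shows that $p$---and, in the committed-send and receive cases, also its communication partner---must be stuck, so a transition involving $p$ stays enabled while only concurrent transitions fire, violating justness. The other direction (WI-fair implies just) is not re-proved; it is inherited from the general taxonomy of~\cite{GH19}, which applies because the structural conditions listed at the start of Appendix~\ref{app:A} hold. You take a different route: you prove the easy direction by an explicit self-contained argument, and for the hard direction you factor through Proposition~\ref{pr:WC appendix}, reducing the task to showing that every WC-fair path is WI-fair. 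Your case analysis on the thread state of $p$ is then the mirror image of the paper's---where the paper concludes that $p$ never moves, you use WC to guarantee that $p$ does move and argue that its first move must engage $I$ (since otherwise $I$ is immediately disabled, contradicting perpetual enabledness). Your decomposition is more modular, reusing the already-established coincidence of J and WC; the paper's argument is self-contained and uniform with the other collapse proofs in the appendix.
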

\begin{proof}
Let $\pi$ be an infinite path in our network that is not WI-fair. So on $\pi$ an instruction
$\lambda!q$, $\lambda?q$ or $\mu X$ from a process $p$ is perpetually enabled, but never taken.
In case of an instruction $\mu X$, $\pi$ is not just.

In case of an instruction $\lambda!q$, where the $\tau$-transition belonging to this transition is
never taken, $p$ must be stuck in a state $\bigoplus_{i \in I} \send{p_i}{\lambda_i}; \PP_i$ with
$q=p_k$ and $\lambda=\lambda_k$ for some $k\in I$; for if $p$ performed one of the other branches,
the instruction would (temporarily) cease to be enabled. Again $\pi$ is not just.

In case $p$ is stuck in a state $\send{q}{\lambda}; P$, then, for that instruction $\lambda!q$ to be
perpetually enabled, $q$ must always be in a state $\sum_{i \in I} \recv{p_i}{\lambda_i}; \PP_i$ with
$p=p_k$ and $\lambda=\lambda_k$ for some $k\in I$. Also process $q$ must get stuck in such a state,
for if $q$ keeps moving, it will at some point reach a state $\mu X. \PQ$, which is not of the above
form.  Consequently, $\pi$ is not just.

The remaining case is that $p$ is stuck in a state of the form $\sum_{i \in I} \recv{p_i}{\lambda_i}; \PP_i$.
For component $p$ to be enabled, a component $p_k$ with $k\in I$ must be in a state
$\send{p}{\lambda_k}; \PP$. Again it follows that $\pi$ is not just.
\end{proof}

\begin{proposition}{WA}
WA coincides with JA.
\end{proposition}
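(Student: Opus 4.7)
The direction $\mathrm{WA}\Rightarrow\mathrm{JA}$ is immediate, since any action continuously enabled on a suffix is perpetually enabled. For the reverse, I would show that any path $\pi$ which is not WA-fair is also not JA-fair. Fix a suffix $\pi'$ of $\pi$ on which some action $\alpha$ is perpetually enabled yet never engaged; this $\pi'$ must be infinite, because a maximal finite path ends in a state with no outgoing transitions while $\alpha$ is enabled in every state of $\pi'$. Reusing the argument from the proof of Proposition~\ref{pr:SA}, every infinite path contains infinitely many $\tau$-transitions, so $\alpha\neq\tau$, and hence $\alpha=\comm{p}{\lambda}{q}$ for some locations $p,q$ and label $\lambda$.

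Next, I would analyse the local states of $p$ and $q$. For $\alpha$ to be enabled in every state of $\pi'$, $p$ must always be in a state $\chosen{\send{q}{\lambda}};\PP$ and $q$ must always be in a state $\sum_{i\in I}\recv{p_i}{\lambda_i};\PP_i$ containing a matching branch with $p_k=p$ and $\lambda_k=\lambda$. Since the only outgoing transition from $\chosen{\send{q}{\lambda}};\PP$ is $\alpha$ itself, and $\alpha$ is never taken, $p$ is stuck throughout $\pi'$.

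The crux is to show that $q$ fires only finitely many transitions along $\pi'$. Firing the matching branch $k$ would engage $\alpha$ and is therefore forbidden, while firing any non-matching branch $j\neq k$ moves $q$ to a strict syntactic subterm $\PP_j$ of its current state. Perpetual enabling of $\alpha$ forces $\PP_j$ again to be of the form $\sum\recv{\cdot}{\cdot};\cdot$ with a matching branch; in particular, $q$ can never reach $\End$, an internal-choice state, a recursion state, or a variable whose implicit expansion is a recursion state, since in any of those cases $\alpha$ would cease to be enabled. Because no recursion can unfold while $q$ stays in external-choice form (unfolding fires only from a recursion state as source), $q$'s descent proceeds entirely through strict subterms of a finite expression and must terminate after boundedly many steps. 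Choosing a further suffix $\pi''$ of $\pi'$ beyond $q$'s last transition, neither $p$ nor $q$ acts on $\pi''$, so every transition $u$ on $\pi''$ satisfies $\comp(u)\cap\{p,q\}=\emptyset$ and is concurrent with the enabled $\alpha$-transition. Hence $\alpha$ is continuously enabled on $\pi''$ but never engaged, witnessing that $\pi$ is not JA-fair.

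The main obstacle is exactly this finiteness argument for $q$: one must rule out that $q$ cycles indefinitely through matching external-choice states while $\alpha$ stays enabled. The decisive observation is that any such cycle would have to fold back via a recursion unfolding, which fires only from a recursion state as source and is therefore ruled out by perpetual enabling of $\alpha$. Everything else is routine once this bound is in hand.
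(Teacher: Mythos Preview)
Your proposal is correct and follows essentially the same route as the paper's proof: both take a suffix on which an action $\alpha$ is perpetually enabled but never engaged, argue $\alpha\neq\tau$, identify $\alpha=\comm{p}{\lambda}{q}$, observe that $p$ is frozen in the committed send state, and then show that $q$ cannot move infinitely often while remaining in external-choice form. Your explicit strict-subterm descent is exactly what underlies the paper's terser remark that ``if $q$ keeps moving, it will at some point reach a state $\mu X.\PQ$''; both yield that $q$ eventually gets stuck, after which $\alpha$ is continuously enabled and $\pi$ is not JA-fair.
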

\begin{proof}
Let $\pi$ be an infinite path in our network that is not WA-fair. So on $\pi$ (possibly after taking
a suffix) a transition label $\alpha$ is perpetually enabled, but never taken. Then
$\alpha\mathbin{\neq}\tau$, as in the proof of \pr{SA}.  So $\alpha$ has the form
$\comm{p}{\lambda}{q}$.  In the first state of $\pi$, the process $p$ must be in a state of the form
$\send{q}{\lambda}; P$, and it remains in that state for the rest of $\pi$. Since $\alpha$ is
perpetually enabled, $q$ must always be in a state $\sum_{i \in I} \recv{p_i}{\lambda_i}; \PP_i$ with
$p=p_k$ and $\lambda=\lambda_k$ for some $k\in I$. Also process $q$ must get stuck in such a state,
for if $q$ keeps moving, it will at some point reach a state $\mu X. \PQ$, which is not of the above
form.  Consequently, the action $\alpha$ is continuously enabled on $\pi$, and $\pi$ is not JA-fair.
\end{proof}

\begin{proposition}{WT}
WT coincides with P.
\end{proposition}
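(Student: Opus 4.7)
The plan is to show the two nontrivial inclusions. Since every WT-fair path is a fortiori a P-fair path (WT is stronger), it suffices to show the converse: every complete path is WT-fair. This matches the informal observation made earlier in Section~\ref{sec:fairness} that ``no transition is enabled perpetually due to the $\tau$-transitions for unfolding recursion''. So the proof amounts to formalising this remark.

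For a finite complete path, the final state has no outgoing transitions by definition, so no task can be perpetually enabled on any suffix ending at that state, and WT is vacuous. The substantive case is an infinite path $\pi$. Suppose for contradiction that some transition $t$, with source state $\M$, is perpetually enabled on a suffix $\pi'$ of $\pi$ starting at some state $\N_i$. Then every state $\N_j$ with $j \geq i$ must satisfy $\N_j \equiv \M$, and in particular $\N_{i+1} \equiv \N_i$.

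The key step is to rule this out by a case analysis on the transition $\N_i \goesto{\alpha_i} \N_{i+1}$ according to the rules of Figure~\ref{fig:red}. For the recursion-unfolding rule, a thread $\loc{p}{\rec{X}\PP}$ moves to $\loc{p}{\PP\sub{X}{\rec{X}\PP}}$; since $\equiv$ only permutes parallel components and absorbs $0$, equivalence would force $\rec{X}\PP$ and $\PP\sub{X}{\rec{X}\PP}$ to be syntactically identical, and guarded recursion (excluding $\rec{X}X$ and $\rec{X}\rec{Y}\PP$) rules this out. For internal-choice resolution, the sender's top-level form changes from $\bigoplus_{i\in I}\,\send{p_i}{\lambda_i};\PP_i$ to the annotated $\chosen{\send{p_k}{\lambda_k}};\PP_k$, and these cannot be identified by $\equiv$. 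Analogously, communication strips the $\chosensymb$-annotation from the sender and advances both sender and receiver, yielding a thread syntactically distinct from the source at location $p_k$. In all cases $\N_{i+1} \not\equiv \N_i$, contradicting perpetual enablement.

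The main obstacle is merely the bookkeeping of the case analysis, specifically ensuring that the guardedness restriction on $\rec{X}\PP$ is genuinely what prevents a recursion-unfolding step from being a self-loop up to $\equiv$; the other two rules are immediate from the fact that the $\chosensymb$-annotation is introduced and later removed. With this lemma in hand, perpetual enablement on an infinite path is impossible, so every infinite path trivially satisfies the WT condition, completing the coincidence with P.
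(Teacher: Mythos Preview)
Your proof is correct and follows the same idea as the paper's one-line argument: since the rules of Figure~\ref{fig:red} admit no self-loops up to $\equiv$, no transition can be perpetually enabled on an infinite path, so WT imposes no restriction beyond P. The paper simply asserts the absence of self-loops (pointing to the fact that recursion unfolding takes a $\tau$-step), whereas you spell out the case analysis explicitly; the content is the same.
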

\begin{proof}
Since our syntax does not allow self-loops (considering that unfolding recursion takes a
$\tau$-transition) on no infinite path a transition can be perpetually enabled.
\end{proof}

\begin{proposition}{SWI}
SWI coincides with SC.
\end{proposition}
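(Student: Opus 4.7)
The plan is to prove both inclusions $\mathrm{SC}\Rightarrow\mathrm{SWI}$ and $\mathrm{SWI}\Rightarrow\mathrm{SC}$ separately, in the style of the other propositions in this appendix.

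For $\mathrm{SC}\Rightarrow\mathrm{SWI}$, I would fix an SC-fair path $\pi$ together with an instruction $I$ that is perpetually requested and relentlessly enabled on $\pi$, and set $p := \cmp(I)$. Since every transition engaging $I$ also involves $p$, relentless enablement of $I$ immediately yields relentless enablement of $p$, so SC-fairness forces $p$ to engage in transitions infinitely often. The heart of the argument is then the observation that, once $I$ is perpetually requested, every engagement of $p$ must actually fire $I$: if the current thread of $p$ is $\chosen{\send{q}{\lambda}}; \PP$ with $I$ the send, the only enabled transition is the $I$-communication; if it is an internal choice $\bigoplus_{i\in J}\send{p_i}{\lambda_i}; \PP_i$ with $I$ the $k$-th branch, committing to any other branch $j\neq k$ would produce a state $\chosen{\send{p_j}{\lambda_j}}; \PP_j$ no longer containing the syntactic occurrence $I$, contradicting perpetual request; a symmetric argument covers an external choice that contains $I$; and a state $\rec{X}\PP$ with $I$ the $\mu X$-instruction only admits the $\tau$ that fires $I$. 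So $I$ is engaged infinitely often, and since the same reasoning applies to every suffix, SWI-fairness follows.

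For the converse $\mathrm{SWI}\Rightarrow\mathrm{SC}$ I would argue contrapositively: suppose $\pi$ is not SC-fair, so some component $p$ is relentlessly enabled on $\pi$ yet engages only finitely often. After passing to a suffix $\pi'$ on which $p$ never engages, the thread of $p$ remains equal to some fixed state $s_0$ throughout $\pi'$. A case analysis on $s_0$ then exhibits an instruction $I$ at $p$ that is both perpetually requested and relentlessly enabled on $\pi'$ yet never engaged, violating SWI: for a committed state $\chosen{\send{q}{\lambda}}; \PP$ take $I$ to be the send instruction; for an internal choice take any branch, whose resolving $\tau$ is perpetually enabled; for $\rec{X}\PP$ take $I$ to be the $\mu X$-instruction; and for an external choice $\sum_{i\in J} \recv{p_i}{\lambda_i}; \PP_i$ use finiteness of $J$ and pigeonhole on the transitions relentlessly enabled at $p$ to pick a single branch $k$ whose matching communication is enabled infinitely often, and let $I$ be that receive.

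The main obstacle I expect is the careful bookkeeping of instructions as syntactic occurrences: one must verify that a $\tau$ committing to a branch other than $I$, or a communication resolving an external choice on a branch other than $I$, really does erase the occurrence $I$ from the successor thread state, so that perpetual request is broken. The external-choice subcase of the converse is the most delicate, since only the finiteness of the branch set combined with a pigeonhole argument lets us name a single relentlessly enabled receive instruction on which to invoke SWI.
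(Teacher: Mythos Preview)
Your proposal is correct, and for the direction $\mathrm{SC}\Rightarrow\mathrm{SWI}$ it is essentially the contrapositive of the paper's argument: where the paper assumes an instruction $I$ is perpetually requested, relentlessly enabled and never taken, and then shows by the same case analysis that $p=\cmp(I)$ is stuck yet relentlessly enabled (hence $\pi$ is not SC-fair), you assume SC-fairness, obtain an engagement of $p$, and use the identical case analysis to argue that this engagement must fire $I$. The two arguments pivot on the same observation, namely that any transition of $p$ other than one firing $I$ would move $p$ to a thread state in which $I$ is no longer requested, contradicting perpetual request.

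The genuine difference is in the direction $\mathrm{SWI}\Rightarrow\mathrm{SC}$. The paper does not prove this direction at all; it takes $\mathrm{SC}\preceq\mathrm{SWI}$ directly from the general classification of~\cite{GH19}, which holds under conditions~(\ref{finite}), (\ref{cmp}), (\ref{swi1}) and~(\ref{swi2}), verified earlier in the appendix. Your self-contained argument---fixing the stuck state $s_0$ of the starved component and, in the external-choice case, using finiteness of the branch set and pigeonhole to single out one relentlessly enabled receive---is a valid replacement for that citation. So your proof is slightly more work but stands on its own, whereas the paper's proof is shorter at the cost of appealing to the survey; both are sound.
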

\begin{proof}
Let $\pi$ be an infinite path in our network that is not SWI-fair. So on $\pi$ an instruction
$\lambda!q$, $\lambda?q$ or $\mu X$ from a process $p$ is perpetually requested and infinitely often
enabled, but never taken. In case of an instruction $\mu X$, $\pi$ is not just, and thus certainly
not SC-fair.

In case of an instruction $\lambda!q$, where the $\tau$-transition belonging to this transition is
never taken, $p$ must be stuck in a state $\bigoplus_{i \in I} \send{p_i}{\lambda_i}; \PP_i$ with
$q=p_k$ and $\lambda=\lambda_k$ for some $k\in I$; for if $p$ performed one of the other branches,
the instruction would (temporarily) cease to be enabled. Again $\pi$ is not just.

If  $p$ is stuck in a state $\send{q}{\lambda}; P$, then component $p$ is infinitely often
enabled, but never taken. Hence $\pi$ is not SC-fair.

In case of an instruction $\lambda?q$, $p$ must be stuck in a state
$\sum_{i \in I} \recv{p_i}{\lambda_i}; \PP_i$ with $q=p_k$ and $\lambda=\lambda_k$ for some $k\in I$;
if it leaves this state, it reaches a state in which that very instruction $\lambda?q$ is no longer
requested. Again  component $p$ is infinitely often
enabled, but never taken. Hence $\pi$ is not SC-fair.
\end{proof}

\begin{figure}[t]
\hspace{-10ex}\input{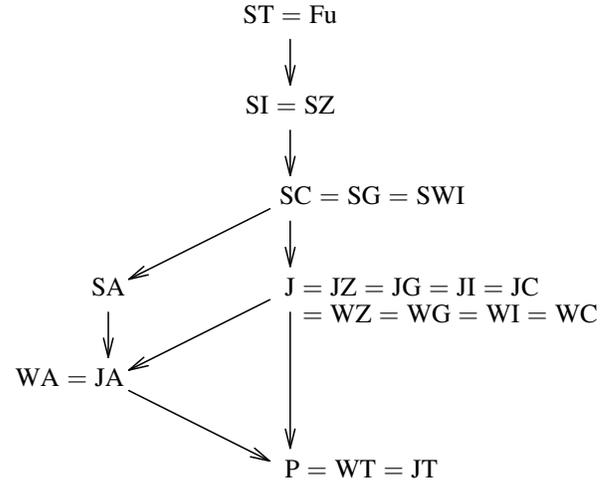}
\centerline{\raisebox{1ex}{\box\graph}}
\vspace{2ex}
\caption{
{ A classification of fairness assumptions for our session calculus }
}
\label{appendix taxonomy}
\end{figure}

\header{\texorpdfstring{$\Pr$, $\W\T$ and $\St\A$}{P, WT and SA} are not as strong as J}
In the following network the accountant terminates when assuming $\J$,
but when merely assuming $\Pr$ or $\W\T$ this is not guaranteed, since there is no single
\accountant-transition that is perpetually enabled.
It does not surely terminate when assuming $\St\A$ either, because $\accountant$ may be stuck before
doing its initial $\tau$-transition.\vspace{-2ex}

\begin{equation*}\label{eg:jt-j}
\begin{array}{rl}
&
\loc{\trader1}{
 \rec{X}
  \send{\trader2}{\order}; \recv{\trader2}{\order}; X
}
\\
\pipar
&
\loc{\trader2}{
 \rec{Y}
  \recv{\trader1}{\order} ;   \send{\trader1}{\order} ;  Y
}
\\
\pipar
&
\loc{\accountant}{
 \send{\auditor}{\report}
}
\\
\pipar
&
\loc{\auditor}{
 \recv{\accountant}{\report} 
}
\end{array}
\end{equation*}
The following network shows exactly the same, due to the initial $\tau$-transition of $\buyer$.

\begin{equation}\label{eg:p-jt}
\begin{array}{rl}
&
\loc{\seller}{
 \rec{X}
  \recv{\buyer}{\order};X
}
\\
\pipar
&
\loc{\buyer}{
 \rec{Y}
  \send{\seller}{\order} ; Y
}
\\
\pipar
&
\loc{\accountant}{
 \send{\auditor}{\report}
}
\\
\pipar
&
\loc{\auditor}{
 \recv{\accountant}{\report} 
}
\end{array}
\end{equation}
\vspace{1ex} % The style does not include enough separating space

\header{P is not as strong as JA}
In Network (\ref{eg:p-jt}), the path in which the $\tau$-action of $\accountant$ occurs, but the
$\report$-action does not, is progressing, but not JA-fair.
\vspace{1ex} % The style does not include enough separating space

\header{J and WA are not as strong as SA}
In Network (\ref{eg:sc-j}), the path in which the $\tau$-action of $\buyer2$ occurs, but the
$\order2$-action does not, is just, as well as WA-fair, but not SA-fair.%
\vspace{1ex} % The style does not include enough separating space

\header{SZ is not as strong as ST}
The network in \ex{p-jt:early} from the introduction has 4 states and 8 transitions.
ST insists that in a fair run each of these transitions occurs,
whereas SZ allows a run that alternates regularly between a $\buyer1/\seller1$- and an
$\buyer2/\seller2$-interaction.
\vspace{1ex} % The style does not include enough separating space

It follows that our classification collapses as indicated in Figure~\ref{appendix taxonomy}.

% !TEX root = main.tex
\subsection{Collapsing notions of lock-freedom}\label{app:B}
\begin{proposition}{SA=P live}
$\Live{\St\A}$ coincides with $\Live{\textrm{P}}$.
\end{proposition}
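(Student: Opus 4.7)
The plan is to argue the two directions of the coincidence separately. The easy direction $\Live{\Pr} \Rightarrow \Live{\St\A}$ is immediate from Proposition~\ref{pr:prop2} together with $\Pr \weaker \St\A$. For the converse $\Live{\St\A} \Rightarrow \Live{\Pr}$, I would argue contrapositively: assume $\N \not\models \Live{\Pr}$, witnessed by a complete path $\pi$ from $\N$ along which some location is stuck (non-terminated but taking only finitely many transitions), and construct an SA-fair path from $\N$ with the same defect.

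Finite deadlock paths are vacuously SA-fair, so the interesting case is when $\pi$ is infinite. If $\pi$ itself is SA-fair, there is nothing to do. Otherwise some action label is relentlessly enabled on $\pi$ but never taken. By the same argument as in the proof of Proposition~\ref{pr:SA} in the appendix (every infinite path carries infinitely many $\tau$-transitions, so the label $\tau$ is always engaged), the offending label must be a COMM $\comm{r}{\lambda}{s}$, which in turn forces $r$ to have committed and then remained forever in a state $\chosen{\send{s}{\lambda}}; P$. Let $\N_{k-1}$ be the state of $\pi$ just before $r$'s committing $\tau$-step, so that $\proc(r,\N_{k-1})$ has the form $\bigoplus_{i\in I} \send{q_i}{\lambda_i}; P_i$ with $s$ and $\lambda$ among the branches. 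The key idea is to keep $r$ stuck in its \emph{uncommitted} $\bigoplus$-state instead. Consider the subsystem obtained by discarding all transitions involving $r$; since the tail of $\pi$ after the commit consists of infinitely many non-$r$ transitions and these do not depend on $r$'s local state, this subsystem admits an infinite path from $\N_{k-1}$.

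Using feasibility of SA-fairness for finite-state systems (e.g.\ by picking a bottom SCC reachable in the subsystem and traversing it in round-robin fashion through its relentlessly enabled labels), extract an SA-fair infinite subsystem path $\sigma$ starting at $\N_{k-1}$, and set $\pi^{*} := \pi[0..k{-}1]\cdot\sigma$. Then $\pi^{*}$ witnesses $\N \not\models \Live{\St\A}$: since $r$ sits in its $\bigoplus$-state throughout $\sigma$, it never terminates and takes no transition, so liveness fails. SA-fairness of $\pi^{*}$ holds because (i)~no COMM involving $r$ is ever enabled in $\sigma$, since $r$ is in neither $\chosen$- nor $\sum$-form; (ii)~$r$'s perpetually enabled commit-$\tau$ carries label $\tau$, which is engaged infinitely by the subsystem as every infinite path in the calculus carries infinitely many $\tau$-transitions; and (iii)~all remaining labels are covered by SA-fairness of $\sigma$ within the subsystem. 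The main obstacle I foresee is cleanly executing this lifting step, both in extracting an SA-fair infinite subsystem path and in verifying that subsystem SA-fairness upgrades to full-system SA-fairness in the presence of $r$'s perpetually enabled $\tau$; both are routine finite-state arguments but deserve careful bookkeeping.
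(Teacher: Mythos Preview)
Your argument is correct, but the paper proceeds differently. Both proofs share the easy direction and, for the hard one, the observation that any offending action label must be some $\comm{r}{\lambda}{s}$ with $r$ eventually stuck forever in the committed state $\chosen{\send{s}{\lambda}};P$. From there the paper does \emph{not} pass to a subsystem or invoke feasibility. Instead it introduces an integer measure on paths, the \emph{SA-unfairness} of $\pi$ (the number of labels that are relentlessly enabled yet from some point onwards never taken), and performs a local surgery: delete $r$'s last committing $\tau$-step and leave the rest of $\pi$ untouched. After this edit $r$ sits in the uncommitted $\bigoplus$-state throughout the tail, so $\comm{r}{\lambda}{s}$ is no longer relentlessly enabled, no new label becomes unfair, and the \emph{original} stuck witness is preserved; hence the measure strictly drops. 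Iterating finitely many times yields an SA-fair path still violating liveness.

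Your one-shot construction trades that iteration for an appeal to feasibility of SA on the $r$-frozen subsystem, and swaps the witness of the liveness failure from the original stuck location to $r$ itself. This is sound --- the tail of $\pi$ shows the subsystem admits an infinite run from $\N_{k-1}$, hence a non-trivial bottom SCC is reachable, and your points (i)--(iii) correctly lift subsystem SA-fairness to the full system because freezing $r$ in $\bigoplus$-form disables every communication label touching $r$ while $\tau$ remains infinitely engaged. The paper's approach is more elementary and self-contained (small edits to the given path, no black-box use of feasibility, original witness retained); yours is conceptually shorter once feasibility is granted, at the cost of the lifting bookkeeping you flag.
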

\begin{proof}
Using the results depicted in Figure~\ref{appendix taxonomy} 
and Proposition~\ref{pr:prop2}, $\Live{P}\Rightarrow\Live{\St\A}$.

Hence it suffices to show that if a network has a progressing path $\pi$ that lacks the property of
\df{live}, then it has an $\St\A$-fair path $\rho$ that lacks this property.
In case $\pi$ is finite, we choose $\rho$ as $\pi$. 

In case $\pi$ is infinite, we define the $\St\A$-unfairness of $\pi$ as the number of different labels $\alpha$ such that 
label $\alpha$ is infinitely often enabled on $\pi$, but from some point onwards never taken.
This must be a finite number,  and if it is 0 then $\pi$ is $\St\A$-fair.
It now suffices to show that if the $\St\A$-unfairness of $\pi$ is positive, then we can modify
$\pi$ into a path $\pi'$ whose $\St\A$-unfairness is strictly smaller, and that still lacks the
property of \df{live}.

Let $\alpha$ be infinitely often enabled on $\pi$, but from some point onwards never taken.
As pointed out in the proof of \pr{SA}, $\alpha\mathbin{\neq}\tau$.
So $\alpha$ has the form $\comm{p}{\lambda}{q}$. 
In the first state of $\pi$ on which $\alpha$ is enabled, but not taken past that state, the
process $p$ must be in a state of the form $\send{q}{\lambda}; P$, and it remains in that state for
the rest of $\pi$. Now $\pi$ can be modified into $\pi'$ by skipping the last $\tau$-transition
belonging to the instruction $\send{q}{\lambda}; P$ of $p$. This strictly decreases its $\St\A$-unfairness.

Since $\pi$ fails the property of \df{live}, there must be a location $p$ such that $p$ does not
terminate on $\pi$, and $\pi$ contains only finitely many  transitions that stem from component $p$.
Now $p$ does not terminate on $\pi'$ either, and also $\pi'$ contains only finitely many  transitions
that stem from component $p$.
\end{proof}

\begin{proposition}{ST live}
$\Live{\St\T}$ coincides with $\Live{\St\In}$.
\vspace{1ex} % The style does not include enough separating space
\end{proposition}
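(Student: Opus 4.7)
The plan is to establish both directions separately. The direction $\Live{\St\In} \Rightarrow \Live{\St\T}$ is immediate: Figure~\ref{appendix taxonomy} shows $\St\In \weaker \St\T$, so Proposition~\ref{pr:prop2} yields the implication at the level of liveness properties. For the converse, I will mirror the proof of Proposition~\ref{pr:SA=P live}, the only substantive change being that the unit of accounting shifts from labelled actions to individual transitions.

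More precisely, to show $\Live{\St\T} \Rightarrow \Live{\St\In}$ it suffices, by contraposition, to take any $\St\In$-fair path $\pi$ that lacks the liveness property of Definition~\ref{df:live} and exhibit an $\St\T$-fair path $\rho$ also lacking it. If $\pi$ is finite, take $\rho := \pi$. For infinite $\pi$, I will define the $\St\T$-\emph{unfairness} of $\pi$ as the number of distinct transitions $t$ that are relentlessly enabled on $\pi$ but from some point onwards never taken. This count is finite because the set of reachable transitions in our session calculus is finite. The proof then proceeds by induction on this quantity: if it is zero, $\pi$ is already $\St\T$-fair and we take $\rho := \pi$; otherwise I modify $\pi$ into a path $\pi'$ of strictly smaller $\St\T$-unfairness that is still $\St\In$-fair and still violates the liveness property, and conclude by the induction hypothesis.

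For the inductive step, I pick a witnessing transition $t = (\N, \alpha, \M)$ and mimic the case analysis from Proposition~\ref{pr:SA=P live}. Since $\pi$ is $\St\In$-fair, every instruction in $\instr(t)$ that is relentlessly enabled on $\pi$ is engaged in infinitely often; combined with the fact that $t$ itself is ultimately never taken, this pinpoints a component $p$ and the $\tau$-transition on $\pi$ that placed $p$ into the local state required for $t$ to fire. Skipping the last occurrence of this $\tau$-transition leaves $p$ in its pre-commitment state for the remainder of $\pi$, so the source $\N$ of $t$ is no longer visited past the modification point, and $t$ ceases to be relentlessly enabled on $\pi'$.

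The main obstacle, exactly as in Proposition~\ref{pr:SA=P live}, is verifying that the modification is well-behaved: $\pi'$ must remain a valid path (the skipped $\tau$ acts on $p$ alone and does not disable transitions of other components), it must still lack the liveness property (unchanged if the stuck component differs from the violating one; otherwise the violating location is still non-terminating with only finitely many involving transitions), and the $\St\T$-unfairness must strictly decrease without introducing new offenders (the states visited by $\pi'$ past the modification form a subset of those visited by $\pi$, thanks to the choice of the \emph{last} such $\tau$-transition). I expect these verifications to go through by the same reasoning as in the SA=P argument, with no genuinely new difficulty beyond adapting the bookkeeping from labels to transitions.
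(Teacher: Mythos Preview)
Your easy direction is correct. The converse, however, has a genuine gap: the template of Proposition~\ref{pr:SA=P live} does not transfer from actions to transitions.

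The crux of the SA=P argument is that an SA-unfair action $\comm{p}{\lambda}{q}$ forces location $p$ to be \emph{stuck} in a committed state $\chosen[2pt]{\send{q}{\lambda}};\PP$ for the rest of the path; skipping the last commit-$\tau$ then removes that action from the enabled set without introducing new non-$\tau$ actions. For $\St\T$ this reasoning breaks at the very first step. A transition $t$ with source $\N$ can be relentlessly enabled yet never taken even though \emph{no} component is stuck: by $\St\In$-fairness the synchronisation $\instr(t)$ is performed infinitely often, but possibly always from a different global state $\N'\neq\N$. In that situation there is no ``last $\tau$-transition that placed $p$ into the local state required for $t$'' to skip, because $p$ enters and leaves that local state infinitely often. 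A concrete instance is two independent communicating pairs run in strict lockstep alternation: every instruction fires infinitely often, yet many global-state-specific transitions on the product cycle are never taken.

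Even in cases where a component \emph{is} stuck and your skip is well-defined, the claimed monotonicity fails. After the skip, $p$ sits in its $\bigoplus$ state rather than its $\chosen{}$ state, so the global states visited past the modification point are \emph{different} from those visited by $\pi$, not a subset of them. Every recurrent global state now enables $|I|$ fresh $\tau$-transitions of $p$ that are never taken; these are new $\St\T$-offenders, and the unfairness count need not decrease.

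The paper's proof takes a different route: it converts an $\St\In$-fair path into an $\St\T$-fair one by \emph{swapping concurrent transitions}. Since swapping independent steps preserves, for each location, the sequence of transitions it participates in, the liveness violation is preserved; and because a synchronisation together with a source state determines the transition (property~(\ref{unique synchronisation})), one can reshuffle so that each relentlessly enabled transition is actually taken. Your induction-on-unfairness idea could perhaps be made to work, but the modification step must rearrange transitions rather than delete them.
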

\emph{Proof sketch:} 
Proposition~\ref{pr:prop2} implies $\Live{\St\In} \Rightarrow \Live{\St\T}$.

For the other direction it suffices to show that any $\St\In$-fair path $\pi$ that lacks the property of
\df{live} can be converted into an $\St\T$-fair path $\rho$ that lacks this property.
This can be achieved by swapping concurrent transitions. 
\hfill \rule{7pt}{7pt}
\vspace{1ex} % The style does not include enough separating space

Consequently, the 7 different fairness assumptions
collapse to 4 different liveness properties, displayed in Figure~\ref{liveness taxonomy}.

\begin{figure}[t]
\input{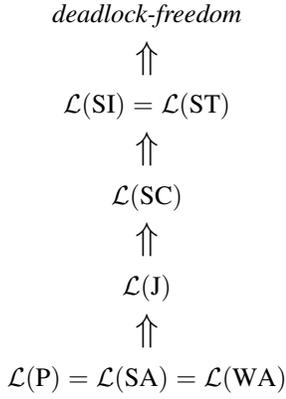}
\centerline{\raisebox{1ex}{\box\graph}}
\vspace{2ex}
\caption{A classification of liveness properties}
\label{liveness taxonomy}
\end{figure}

% !TEX root = main.tex
\subsection[Padovani's lock-freedom coincides with  L(ST)]
           {Padovani's lock-freedom coincides with  $\Live{\St\T}$}\label{app:C}

This appendix contains a proof of Theorem~\ref{thm:lock-free}: a network is Padovani lock-free iff it satisfies $\Live{\St\T}$.
\begin{proof}
Suppose $\N\models\Live{\St\T}$. We show it is lock-free. Let $\M$ be a reachable state of $\N$.
Take a path from $\N$ to $\M$, and extend it to an $\St\T$-fair path $\pi$. This is possible by
Theorem 6.1 of \cite{GH19}, saying that $\St\T$-fairness is feasible.
The suffix $\pi'$  of $\pi$ starting at $\M$ satisfies the property required by
Padovani's lock-freedom (\df{lock-free}), that is, for a given location $p$ of $\M$
  such that $\proc(p,\M)\neq\End$, $\pi'$ contains a transition involving $p$. 
Interestingly, the choice of the path required by \df{lock-free} turns out to be
  independent of $p$.

Now suppose $\N$ is lock-free, and let $\pi$ be an $\St\T$-fair path.
Let $p$ be a location of $\N$ such that $\pi$ does not contain a state of the form $\N' \pipar \loc{p}{\End}$.
We need to show that $\pi$ contains infinitely many transitions that stems from component $p$.

For each state $\M$ on $\pi$, let $d(\M)>0$ be the length of the shortest path from $\M$ that contains
a transition from component $p$; such a shortest path exists since $\N$ is lock-free.
Since $\N$ is a finite-state system, there is a state $\M$ that occurs infinitely often on $\pi$.
In case $d(\M)>1$ there must be a transition $\M \goesto\pi \M'$ such that $d(\M')<d(\M)$.
Since this transition is enabled on $\pi$ infinitely often, and $\pi$ is ST-fair, this transition
must be taken infinitely often, and hence also $\M'$ occurs infinitely often in $\pi'$. So by a
trivial induction there is a state $\M''$ with $d(\M'')=1$ that occurs infinitely often in $\pi$.
This state has an outgoing transition that stems from component $p$. Since this transition is
enabled on $\pi$ infinitely often, and $\pi$ is ST-fair, it must be taken infinitely often.
\end{proof}

% !TEX root = main.tex
\subsection{Proof of completeness\label{app:E}}
This appendix contains a proof of Theorem~\ref{thm:guarded typing completeness}: 
if $\N\models\Live{\J}$, then $\N$ is guardedly well-typed.

\begin{proof} The proof is staged as a series of claims.
\vspace{1ex plus 2pt} 

\noindent
\textit{Claim 1:} Let $\M$ be unfolded.\footnote{See Page~\pageref{pg:netunfolded} for a definition.}
If $\N \dgoesto{\comm{p}{\lambda}{q}~} \M$ then $\proc(q,\N) = \sum_{i \in I} \recv{p_i}{\lambda_i} ; \PQ_i$
with $p=p_j$, $\proc(q,\M) = \PQ_j$ and $\lambda=\lambda_j$ for some $j \in I$.
Moreover, if $r \notin \{p,q\}$ then $\proc(r,\N) = \proc(r,\M)$.
\vspace{1ex plus 2pt}

\noindent
\textit{Proof:} Directly from the definition of the reactive semantics (Figure \ref{fig:reactive}).
\hfill \rule{7pt}{7pt}
\vspace{1ex plus 2pt}

\noindent
\textit{Claim 2:} The algorithm $\gt$ (Figure~\ref{fig:algorithm}) always terminates.
\vspace{1ex plus 2pt}

\noindent
\textit{Proof:} In a run on which $\gt$ does not terminate, along at least one branch an unbounded
history $h$ is created. Since there are only finitely many reachable states, some state $\M$ must
occur unboundedly in $h$. Each time this state is encountered, except for the first time, the
fifth clause of $\gt$ applies. However, each time a different location $p$ that did not already
occur in $h \upharpoonright \M$ is added to $h$. Since there are only finitely many locations, this cannot go on
forever. \hfill \rule{7pt}{7pt}
\vspace{1ex plus 2pt}

An expression $\gt(h,\N)$ may have free occurrences of variables $X_\M$.
It is easy to check that if $X_\M$ has a free occurrence\pagebreak[4] in $\gt(h,\N)$, then $\M$ occurs in $h$.
We define a closed version $\gt^*(h,\N)$ of $\gt(h,\N)$,
obtained from $\gt(h,\N)$ by unfolding recursion.
The definition proceeds by induction on the length of $h$. 
Here, $\textrm{fv}(\G)$ denotes
the set of free recursion variables in a global type expression $\G$.
\[
\gt^*(h,\N) 
:= \gt(h,\N)\msub{X_\M}{\gt^*(h \upharpoonleft \M,\M)}{\!X_\M \in \textrm{fv}( \gt(h,\N) )\! }\]
Note that $\gt(\varepsilon,\N) = \gt^*(\varepsilon,\N)$.
By induction, $\gt^*(h,\N)$ is a closed session type expression.

$\gt(h,\N)$, and hence also $\gt^*(h,\N)$, always yields a valid global session type expression,
except that it may contain the constant \textsc{deadlock}.
\vspace{1ex plus 2pt}

\noindent
\textit{Claim 3:} If $\gt(h,\N)$ contains the constant \textsc{deadlock}, then $\N$ is not deadlock-free.
\vspace{1ex plus 2pt}

\noindent
\textit{Proof:} If $\gt(h,\N)$ contains the constant \textsc{deadlock}, then for some
unfolded network $\M$
reachable from $\N$ and for some extension $h'$ of $h$ we have $\gt(h',\M) = \mbox{\sc deadlock}$.
It suffices to show that $\M$ has a deadlock.
Since no location is ready in $\M$, for each location
$p$ of $\M$ with $\proc(p,\M) = \bigoplus_{i\in I}\send{q_i}{\lambda_i} ; \PP_i$ -- let us call such a
location \emph{active} in $\M$ -- there exists an \mbox{$i \mathbin\in I$} such that
there is no transition $\M \mathbin{\dgoesto{\comm{p}{\lambda_i}{q_i}~}} \M_i$. Let $i_p$ be this $i$.
Now $\M$ admits a sequence of $\tau$-transitions to a state $\M'$ in which
$\proc(p,\M') = \chosen[3pt]{\send{q_{i_p}}{\lambda_{i_p}} ; \PP_{i_p}}$ for each $p$ active in $\M$,
and no further $\tau$-transitions are possible from $\M'$. 
The only transitions that $\M'$ could
possibly do must have a label $\comm{p}{\lambda_{i_p}}{q_{i_p}}$ for some $p$ active in $\M$, yet none of
these transitions are actually possible. Hence $\M'$ is a deadlock. \hfill \rule{7pt}{7pt}
\vspace{1ex plus 2pt}

\noindent
Call a history $h$ \emph{reachable} from $\N$ iff all networks $\M$ that occur in $h$ are reachable
from $\N$.
\vspace{1ex plus 2pt}

\noindent
\textit{Claim 4:} If $\N$ is deadlock-free, and $h$ and $\M$ are reachable from $\N$, then
$\gt^*(h,\M)$ does not contain \textsc{deadlock}.
\vspace{1ex plus 2pt}

\noindent
\textit{Proof:} If $\gt^*(h,\M)$ contains the constant \textsc{deadlock}, then either $\gt(h,\M)$ contains \textsc{deadlock}, or $\gt^*(h',\M')$ does, for a proper prefix $h'$ of $h$ and a network $\M'$
that occurs in $h$. The previous claim and a simple induction on the length of $h$ finish the proof.
 \hfill \rule{7pt}{7pt}
\vspace{1ex plus 2pt}

\noindent
\textit{Claim 5:}  Let $\N \models\Live{\J}$, let $r$ be a location of $\N$, and $h$ and $\M$ be
reachable from $\N$, with $\M$ unfolded. \\
If $r \notin \participants{\gt^*(h,\M)}$, then $\proc(r,\M) = \End$.
\vspace{1ex plus 2pt}

\noindent
\textit{Proof:} For each pair $(\ell,\NL)$ of a history $\ell$ and a network state $\NL$, both reachable
from $\N$, such that $\proc(r,\NL) \mathbin{\neq} \End$, we select a unique successor pair $(\ell',\NL')$ as follows,
inspired by the definition of $\gt(\ell,\NL)$.
In case $\NL$ is not unfolded, we pick a network $\NL'$ with $\NL \goesto{\tau}\NL'$ and take $\ell' := \ell$.
Otherwise, in case $\NL$ does not occur in $\ell$ or $\ell\upharpoonright \NL$ is incomplete for $\NL$, let $p:=\ch(\ell,\NL)$ and 
$\proc(p,\NL) = \bigoplus_{i\in I}\send{q_i}{\lambda_i} ; \PP_i$. Now pick a $k\in I$
and take $\ell' := \ell_k$ and $\NL' := \NL^p_k$ (as in the definition of $\gt(\ell,\NL)$).
Finally, if $\NL$ is unfolded, $\NL$ occurs in $\ell$, and $\ell\!\upharpoonright\! \NL$ is\\
complete for $\NL$, we take the unique successor pair of $(\ell\!\upharpoonleft\! \NL, \NL)$.

If $(\ell',\NL')$ is the successor of $(\ell,\NL)$ then surely there is a transition $\NL \goesto\tau\NL'$
or $\NL \dgoesto{\comm{p}{\lambda_k}{q_k}~} \NL'$ with $p=\ch(\ell,\NL)$ or $p=\ch(\ell\upharpoonleft \NL,\NL)$
and $k$ as chosen above.\footnote{Each transition $\dgoesto{\comm{p}{\lambda}{q}~}$ can be split
into two transitions $\goesto{\tau}\goesto{\comm{p}{\lambda}{q}~}$. }
Combining those transitions yields for each pair $(\ell,\NL)$ a unique path $\pi(\ell,\NL)$
starting from $\NL$, which is either infinite or ends in a state $\NL'$ with $\proc(r,\NL') = \End$.
Here we use Claim~4.
Moreover, in case $\pi(\ell,\NL)$ is infinite, by (the proof of) Claim 2 it must have a suffix $\pi(\ell',\NL')$
such that $\NL'$ is an unfolded network expression, $\NL'$ occurs in $\ell'$, and
$\ell'\upharpoonright \NL'$ is complete for $\NL'$. Hence that suffix is a simple loop.
By construction, the pair $(\ell',\NL')$ must be unique; call $\pi(\ell',\NL')$ the \emph{loop suffix} of $\pi(\ell,\NL)$. 

First assume that $\pi(h,\M)$ does contain a transition that involves component $r$.
Considering that $\M$ is unfolded, we have $\proc(r,\M) \neq \mu X. \PP$. Hence, this transition must
have the label $\comm{p}{\lambda}{q}$ with $p=r$ or $q=r$.
A simple induction shows that $r \in \participants{\gt^*(h,\M)}$.

Henceforth, we assume that $\pi(h,\M)$ contains no transition involving component $r$.
First assume that $\pi(h,\M)$ ends in a state $\M'$ with $\proc(r,\M') \mathbin= \End$.
Since $\pi(h,\M)$ contains no transition involving $r$,
$\proc(r,\M) \mathbin= \End$.
Finally, assume that $\pi(h,\M)$ is infinite. It suffices to derive a contradiction.

Let $\pi(\ell,\NL)$ be the loop suffix of $\pi(h,\M)$.
Now $\pi(\ell,\NL)$ is infinite and contains no transition that involves component $r$.
Moreover, $\NL$ is reachable from $\N$.

Suppose that there is a location $p$ with $\proc(p,\NL)= \bigoplus_{i\in I}\send{q_i}{\lambda_i} ; \PP_i$,
no transition in $\pi(\ell,\NL)$ involves component $p$ from one of the components $q_i$ for
$i \in I$, and $\NL \dgoesto{\comm{p}{\lambda_i}{q_i}~}$ for all $i \in I$.
In that case $p$ is ready in $\NL$, and $\ell \upharpoonright \NL$ is incomplete for $\NL$.
This contradicts the definition of the loop suffix. 

It follows that for each location $p$ with
$\proc(p,\NL)= \bigoplus_{i\in I}\send{q_i}{\lambda_i} ; \PP_i$,
and such that no transition in $\pi(\ell,\NL)$ involves component $p$,
there exists an $k \in I$ such that
either infinitely many transitions in $\pi(\ell,\NL)$ involve component $q_k$,
or $\NL \;\not\!\!\!\dgoesto{\comm{p}{\lambda_k}{q_k}~}$.
Let $\pi'$ be the infinite path obtained from $\pi(\ell,\NL)$ by transforming all states $\NL'$ in
this path in the same way, namely by replacing, for all locations $p$ as above,
$\proc(p,\NL)= \bigoplus_{i\in I}\send{q_i}{\lambda_i} ; \PP_i$ by the appropriate
$\chosen[2pt]{\send{q_k}{\lambda_k} ; \PP_k}$. By construction this path is just.

The path $\pi'$ is the suffix of a path $\pi''$ that starts in $\N$.
This path contains only finitely many transitions that involve component $r$, and no
state $\N' \pipar \loc{r}{\End}$. Consequently, $\N$ does not satisfies $\Live{\J}$.
\hfill \rule{7pt}{7pt}
\vspace{1ex plus 2pt}

\noindent
Now assume that $\N$ satisfies $\Live{\J}$. Then $\N$ is deadlock-free, and hence $\gt(\varepsilon,\N)$
yields a valid global session type, by Claim~3. To prove that $\N$ is well-typed w.r.t.\ $\gt(\varepsilon,\N)$,
it suffices to show that $\N \vdash \gt(\varepsilon,\N)$.
In fact, we prove a stronger claim, namely that for all histories $h$ and networks $\M$
that are both reachable from $\N$ we have $\M \vdash \gt^*(h,\M)$.

By construction, $\participants{\gt(h,\M)}$ contains only locations of $\M$, and hence of $\N$.
Therefore, the same holds for $\participants{\gt^*(h,\M)}$.
Thus, it remains to establish that $\proc(r,\M) \vdash \proj{r}{\gt^*(h,\M)}$
for all $h$ and $\M$ reachable from $\N$, and all locations $r$ of $\N$.
We do this by coinduction \cite{coinduction}.
We make a case distinction on the shape of $\gt(h,\M)$, and apply induction on $h$, and a nested
induction on the number of recursion-unfolding $\goesto\tau$-transitions possible from $\M$.
Pick $h$, $\M$ and $r$ in the following.
\begin{itemize}
\item Suppose that $\M \goesto\tau \M'$ for a network $M'$, that is,
  $\proc(p,\M)=\rec{X}\PP$ and $\proc(p,\M')= \PP\sub{X}{\rec{X}\PP}$
  for some location $p$ of $\N$. 
  
  In case $r\mathop=p$, using the first rule for $\vdash$, we derive
  $\proc(r,\M) \mathbin= \rec{X}\PP \mathbin\vdash \proj{r}{\gt^*(h,\M)}$ from
  $P\sub{X}{\rec{X}\PP} \mathbin= \proc(r,\M') \mathbin\vdash \proj{r}{\gt^*(h,\M')} \mathbin= \proj{r}{\gt^*(h,\M)}$,
  and the latter is a coinduction hypothesis.

  In case $r \mathbin{\neq} p$, then $\proc(r,\M) \mathbin= \proc(r,\M')$ and
  $\proj{r}{\gt^*(h,\M)} \mathbin= \proj{r}{\gt^*(h,\M')}$.
  Since $\M'$ admits fewer recursion-unfolding $\goesto\tau$-transitions than $\M$, by induction, 
  $\proc(r,\M') \vdash \proj{r}{\gt^*(h,\M')}$.
\end{itemize}
  In the remainder, we assume that $\M$ is already unfolded.
  \begin{itemize}
\item Let $\gt(h,\M) = \End$. Then $\gt^*(h,\M) = \proc(r,\M) =\End$.
  Now the third rule for $\vdash$ yields $\proc(r,\M) \vdash \proj{r}{\gt^*(h,\M)}$.
\item Let $\gt(h,\M) = X_\M$. Then $\M$ occurs in $h$
  and we have $\gt^*(h,\M) \mathbin= \gt^*(h \upharpoonleft \M,\M)$.
  By induction, since $h \upharpoonleft \M$ is strictly shorter than $h$,
  $\proc(r,\M) \vdash \proj{r}{\gt^*(h,\M)}$.
\item Let 
$\begin{array}[t]{@{}r@{}l@{}}
\gt(h,\M)   &= \bigboxplus_{i\in I} \comm{p}{\lambda_i}{q_i} ; \gt(h_i,\M^p_i), \text{ so}\\
\gt^*(h,\M) &= \bigboxplus_{i\in I} \comm{p}{\lambda_i}{q_i} ; \gt^*(h_i,\M^p_i).
\end{array}$\\
  If $p\mathbin=r$, then
  $\proj{r}{\gt^*(h,\M)} \mathbin= \bigoplus_{i\in I} \send{q_i}{\lambda_i} ; (\proj{r}{\gt^*(h_i,\M^p_i)})$.
  By the coinduction hypothesis, for each $i \in I$ we may assume
  $\proc(p,\M^p_i) \vdash \proj{p}{\gt^*(h_i,\M^p_i)}$.
  Moreover, $\proc(p,\M) = \bigoplus_{i\in I}\send{q_i}{\lambda_i} ; \PP_i$ with $\PP_i = \proc(p,\M^p_i)$.
  Now apply the fifth proof rule for $\vdash$.

  If $p \mathbin{\neq} r$, $\proj{r}{\gt^*(h,\M)} \mathbin=\merge_{i\in I} \proj{r}{ (\comm{p}{\lambda_i}{q_i} ;\gt(h_i,\M^p_i)) }$.
  Pick $k \in I$. Applying the last rule for $\vdash$, we need to show that
  $\proc(r,\M) \vdash \proj{r}{ (\comm{p}{\lambda_k}{q_k} ;\gt(h_k,\M^p_k)) }$.

  If $r \neq q_k$, 
  $\proj{r}{ (\comm{p}{\lambda_k}{q_k} ;\gt(h_k,\M^p_k)) } = \proj{r}{\gt(h_k,\M^p_k)}$,
  and $\proc(r,\M) = \proc(r,\M^p_k)$ by Claim~1. Moreover, $\proc(r,\M^p_k) \vdash \proj{r}{\gt(h_i,\M^p_i)}$
  can be used as coinduction hypothesis.

  If $r=q_k$ then
  $$\proj{r}{ (\comm{p}{\lambda_k}{q_k} ;\gt(h_k,\M^p_k)) } = \recv{p}{\lambda_k}; \proj{r}{\gt(h_k,\M^p_k)}.$$
  Moreover, by Claim~1,
  $\proc(r,\M) = \sum_{h \in H} \recv{p_h}{\lambda_h} ; \PQ_h$
  with $p=p_j$, $\proc(r,\M^p_k) = \PQ_h$ and $\lambda_k=\lambda_j$ for some $j \in H$.
  Using $\proc(r,\M^p_k) \vdash \gt(h_k,\M^p_k)$ as coinduction hypothesis,
  $\proc(r,\M) \vdash \recv{p}{\lambda_k}; \proj{r}{\gt(h_k,\M^p_k)}$
  follows by application of the fourth rule for $\vdash$.

\item Let $\gt(h,\M) = \mu X_\M. \bigboxplus_{i\in I} \comm{p}{\lambda_i}{q_i} ; \gt(h_i,\M^p_i)$.
  Let $\gt^*(h,\N,X_\M)$ be defined as $\gt^*(h,\N)$, except that the free variable $X_\M$ does not
  get unfolded. Then $\gt^*(h,\N) = \gt^*(h,\N,X_\M)\sub{X_\M}{\gt^*(h \upharpoonleft \M,\M)}$.
  Hence $\gt^*(h,\M) = \mu X. \bigboxplus_{i\in I} \comm{p}{\lambda_i}{q_i} ; \gt^*(h_i,\M^p_i,X_\M)$.

  First let $r \notin \participants{\gt^*(h,\M)}$. Then $\proj{r}{\gt^*(h,\M)} = \End$.
  By Claim~5, $\proc(r,\M) = \End$. Consequently, $\proc(r,\M) \vdash \proj{r}{\gt^*(h,\M)}$ via the
  third rule for $\vdash$.

  Now if $r \in \participants{\gt^*(h,\M)}$
  then $\proj{r}{\gt^*(h,\M)} = \mu X. \big(\proj{r}{(\bigboxplus_{i\in I} \comm{p}{\lambda_i}{q_i} ; \gt^*(h_i,\M^p_i,X_\M))}\big)$.
  Since $\M$ does not occur in $h$ we have $h_i \upharpoonleft \M = h$.
  Hence \[\proj{r}{\gt^*(h_i,\M_i^p)}
  \begin{array}[t]{@{\,=\,}l@{}}
  \proj{r}{\left(\gt^*(h_i,\M^p_i,X_\M)\sub{X_\M}{\gt^*(h_i \!\upharpoonleft \M,\M)\!}\!\right)\!} \\
  \proj{r}{\left(\gt^*(h_i,\M^p_i,X_\M)\sub{X_\M}{\gt^*(h,\M)}\right)} \\
  \proj{r}{\gt^*(h_i,\M^p_i,X_\M)}\sub{X_\M}{\proj{r}{\gt^*(h,\M)}}.
  \end{array}\]
  In order to obtain $\proc(r,\M) \vdash \proj{r}{\gt^*(h,\M)}$, by the second rule for $\vdash$ it
  suffices to establish $\proc(r,\M)\vdash$
  \[
  \begin{array}[t]{@{}l@{}}
  \proj{r}{(\bigboxplus_{i\in I} \comm{p}{\lambda_i}{q_i} ; \gt^*(h_i,\M^p_i,X_\M))}\sub{X_\M}{\proj{r}{\gt^*(h,\M)}} \\[3pt]
  \mbox{} = \proj{r}{(\bigboxplus_{i\in I} \comm{p}{\lambda_i}{q_i} ; \gt^*(h_i,\M^p_i))}.
  \end{array}\]
  This proceeds exactly as in the previous case.
\end{itemize}

\noindent
This shows that $\N$ is well-typed w.r.t.\ $\gt(\varepsilon,\N)$.
It remains to show that all projections $\proj{p}{\gt(\varepsilon,\N)}$ are guarded.
\vspace{1ex plus 2pt}

\noindent
\textit{Claim 6:}  Let $\N \models\Live{\J}$ and let $r$ be a location of $\N$.
Then $\proj{r}{\gt(\varepsilon,\N)}$ is guarded, \ie it does not have a subexpression of the form $\rec{X} \PQ$ such that $X$ occurs in
$\PQ$ outside the scope of any subexpression $\send{p}{\lambda}; \PP$ or $\recv{p}{\lambda}; \PP$.
\vspace{1ex plus 2pt}

\noindent
\textit{Proof:}
Suppose, towards a contradiction, that $\proj{r}{\gt(\varepsilon,\N)}$ does have a subexpression of
the form $\rec{X} \PQ$ such that $X$ occurs in $\PQ$ outside the scope of any subexpression
$\send{p}{\lambda}; \PP$ or $\recv{p}{\lambda}; \PP$. By the definition of projection,
this subexpression must have the form $\proj{r}{\G}$, with $\G = \rec{X}\G'$ a subexpression of
$\gt(\varepsilon,\N)$.
Given the algorithm of Figure~\ref{fig:algorithm}, $\G$ must have the form $\gt(h,\M)$ for a history $h$ and
network $\M$ reachable from $\N$. Moreover, $X=X_\M$.

There must be a path in the parse tree of $\PQ$ towards the unguarded occurrence of $X$.
Since the occurrence is unguarded, this path passes only through operators $\mu Y$ and $\merge_{i\in I}$.
Backtracking this path through the projection from $\gt(h,\M)$
yields a path $\rho$ in the parse tree of $\gt(h,\M)$ to a subexpression $\gt(\ell,\M) = X_\M$, with $\ell$ an
extension of $h$. This path $\rho$ passes merely through operators $\bigboxplus_{i\in I} \comm{p}{\lambda_i}{q_i}$
with $r$ not being among $p$ and the $q_i$.

The syntactic path $\rho$ induces a path $\pi'$ in the transition system from $\M$ to $\M$.

Suppose that there is a location $p$ with $\proc(p,\M)= \bigoplus_{i\in I}\send{q_i}{\lambda_i} ; \PP_i$,
no transition in $\pi'$ involves component $p$ or from one of the components $q_i$ for
$i \in I$, and $\M \dgoesto{\comm{p}{\lambda_i}{q_i}~}$ for all $i \in I$.
In that case $p$ is ready in $\M$, and $\ell \upharpoonright \M$ is incomplete for $\M$.
This contradicts the definition of $\gt(\ell,\M)$.

It follows that for each location $p$ with
$\proc(p,\M)= \bigoplus_{i\in I}\send{q_i}{\lambda_i} ; \PP_i$,
and such that no transition in $\pi'$ involves component $p$,
there exists a $k \in I$ such that
either some transitions in $\pi'$ stem from component $q_k$,
or $\M \;\not\!\!\!\dgoesto{\comm{p}{\lambda_k}{q_k}~}$.
Let $\pi''$ be the infinite path obtained from $\pi'$ by transforming all states $\NL$ in
this path in the same way, namely by replacing, for all locations $p$ as above,
$\proc(p,\NL)= \bigoplus_{i\in I}\send{q_i}{\lambda_i} ; \PP_i$ by the appropriate
$\chosen[2pt]{\send{q_k}{\lambda_k} ; \PP_k}$.

Let $\pi$ be the path from $\N$ to $\M$, followed by infinitely many repetitions of the loop $\pi''$.
By construction this path is just.
Past $\M$, $\pi$ contains no transitions involving location $r$.
  Thus, invoking the assumption that $\N \models\Live{\J}$, it follows that $r$ successfully
  terminates on $\pi$, that is, $\proc(r,\M)=\End$.
  The algorithm of Figure~\ref{fig:algorithm} implies that $r \notin \participants{\gt(h,\M)}$.
 Now $\gt(h,\M) = \G = \rec{X}{\G'}$ is not closed, for it it were, that would imply that
  $\proj{r}\G \mathbin= \End$, contradicting the assumption that $\proj{r}\G \mathbin= \rec{X}\PQ$.
So, $\gt(h,\M)$ occurs within a subexpression
$\gt(h',\NL)=\rec{Y}\GH$ of $\gt(\varepsilon,\N)$, with $h'$ a strict prefix of $h$, and such that $Y$
occurs freely in $\gt(h,\M)$. Here $Y$ must have the form $\gt(h'',\NL)$, with $h''$ an extension of $h$.
Thus $\NL$ is reachable from $\M$ and hence $\proc(r,\NL)=\End$.
Again, it follows that $r \notin \participants{\gt(h',\NL)}$ and also $\gt(h',\NL)$ is not closed.
Going on this way, we eventually find a subexpression $\rec{Z}\GH'$ of $\gt(\varepsilon,\N)$ that
is is not closed, but also not inside another expression $\rec{W}\GH''$. This contradicts with
$\gt(\varepsilon,\N)$ being closed.
\hfill \rule{7pt}{7pt}
\end{proof}

\newcommand{\residual}[2]{{ _{#1}\!\!\setminus\!\! \left( {#2} \right) }}
\newcommand{\psize}[2]{ \left\| #1 \right\|_{#2} }
\newcommand{\psizeSmall}[2]{\| #1 \|_{#2} }
\newcommand{\IN}{\mbox{I\hspace{-1pt}N}}             % set of natural numbers

% !TEX root = main.tex
\subsection{Proof of Soundness}\label{app:F}

\begin{figure*}
\begin{gather*}
\begin{prooftree}
k\in I
\justifies
\bigboxplus_{i \in I}\comm{p}{\lambda_i}{q_i}; \G_i \dgoesto{\comm{p}{\lambda_k}{q_k}} \G_k
\end{prooftree}
\qquad
\begin{prooftree}
\G_i \dgoesto{\comm{p}{\lambda}{q}} \GH_i ~~\mbox{for}~ i \in I \subseteq J, ~~p,q \notin\{ r,s_i \mid i\in I\}
\justifies
\bigboxplus_{i \in J}\comm{r}{\lambda_i}{s_i}; \G_i \dgoesto{\comm{p}{\lambda}{q}} \bigboxplus_{i \in I}\comm{r}{\lambda_i}{s_i}; \GH_i
\end{prooftree}
\qquad
\begin{prooftree}
\GH\sub{X}{\rec{X}\GH} \dgoesto\alpha \G
\justifies
\rec{X}\GH \dgoesto\alpha \G
\end{prooftree}
\end{gather*}
\vspace{-11pt}
\caption{A transition relation between global types}\label{gt trans}
\end{figure*}

This appendix contains the proof of soundness for guardedly well-typed and race-free networks (Theorem~\ref{thm:guarded typing}).
\vspace{1ex}

\begin{definition}{states}
To type network states we extend our type system with the following rule.
\[
\begin{prooftree}
k \in I \qquad  \PP_k \vdash \PQ_k
\justifies
\chosen[3pt]{\send{q_k}{\lambda_k}}; \PP_{k} \vdash \textstyle{\bigoplus_{i \in I}}\, \send{q_i}{\lambda_i} ; \PQ_i
\end{prooftree}
\vspace{1ex}
\]
\end{definition}

\newcommand{\locs}{\textit{loc}}
\vspace{1ex plus 2pt}

Session fidelity for recursion and internal choice can be proven independently.
These lemmas show that $\tau$--transitions preserve the type of a network.
\vspace{1ex}

{
\begin{lemma}{SR recursion}
If $\N \goesto{\tau} \M$ with $\proc(p, \N) = \rec{X} \PP$ and $\proc(p , \M) = \PP\sub{X}{\rec{X} \PP}$,
and $\N \vdashg \G$ then $\M \vdashg \G$.
\end{lemma}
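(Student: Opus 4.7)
The plan is to isolate the only thread that changes, namely the one at location $p$, and reduce the whole statement to a single coinductive fact about the judgement $\vdash$. First I would observe that, since the only rule generating $\N \goesto\tau \M$ here is the recursion-unfolding rule (possibly composed with structural congruence $\equiv$), we may assume $\N$ and $\M$ are syntactically of the form $\loc{p}{\rec{X}\PP} \pipar \N_0$ and $\loc{p}{\PP\sub{X}{\rec{X}\PP}} \pipar \N_0$ for the same residual $\N_0$. Structural congruence preserves the set of located threads together with their assignments, so it preserves well-typedness. In particular, guardedness of the projections $\proj{q}{\G}$ is a property of $\G$ alone, and $\G$ is unchanged by the transition; thus it suffices to show $\M \vdash \G$.

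Second, since $\M$ and $\N$ share the same set of locations and the same threads everywhere except at $p$, and since $\N \vdash \G$ implies $\proc(q,\N) \vdash \proj{q}{\G}$ for every location $q$, the judgement $\proc(q,\M) \vdash \proj{q}{\G}$ is immediate for $q \neq p$. What remains is the single judgement $\PP\sub{X}{\rec{X}\PP} \vdash \proj{p}{\G}$, given $\rec{X}\PP \vdash \proj{p}{\G}$.

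The heart of the proof is therefore the general claim that whenever $\rec{X}\PP' \vdash \PQ$, then also $\PP'\sub{X}{\rec{X}\PP'} \vdash \PQ$. This is precisely the \emph{converse} of the first rule of Figure~\ref{fig:types}. Here the main obstacle lies: the rule as written only gives us that direction backwards, and because $\vdash$ is defined coinductively we cannot simply invert it by inspection, since $\rec{X}\PP'$ on the left could in principle be derived by rule 1, by rule 2 (when $\PQ = \rec{Y}\PQ'$), or by the merge rule (when $\PQ = \merge_{i \in I} \PQ_i$). I would overcome this by coinduction: define
\[
R := {\vdash} \cup \{(\PP'\sub{X}{\rec{X}\PP'}, \PQ) \mid \rec{X}\PP' \vdash \PQ\}
\]
and show that $R$ is consistent with the rules of Figure~\ref{fig:types}, whence $R \subseteq {\vdash}$ by the coinductive proof principle. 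For a pair of the second kind, case-analyse on the derivation of $\rec{X}\PP' \vdash \PQ$: if it concludes by rule 1, the premise $\PP'\sub{X}{\rec{X}\PP'} \vdash \PQ$ already lies in $\vdash \subseteq R$; if by rule 2 with $\PQ = \rec{Y}\PQ'$, the premise $\rec{X}\PP' \vdash \PQ'\sub{Y}{\rec{Y}\PQ'}$ puts $(\PP'\sub{X}{\rec{X}\PP'}, \PQ'\sub{Y}{\rec{Y}\PQ'})$ in $R$, and rule 2 reproduces $\PQ$ from it; and if by the merge rule, each premise $\rec{X}\PP' \vdash \PQ_i$ puts $(\PP'\sub{X}{\rec{X}\PP'}, \PQ_i)$ in $R$, whence the merge rule applies again. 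In all cases the judgement for $\PP'\sub{X}{\rec{X}\PP'} \vdash \PQ$ is supported by premises in $R$, closing the coinduction. Applying this general fact at $\PP' := \PP$ and $\PQ := \proj{p}{\G}$ finishes the proof.
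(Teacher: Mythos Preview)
Your proof is correct and follows the paper's approach: both reduce to showing that $\rec{X}\PP \vdash \proj{p}{\G}$ implies $\PP\sub{X}{\rec{X}\PP} \vdash \proj{p}{\G}$, with the other locations and the guardedness of projections carried over unchanged. The paper dispatches this inversion in a single sentence (``by the type rules, this can hold only if $\PP\sub{X}{\rec{X}\PP} \vdash \proj{p}{\G}$''), whereas you spell out the coinductive argument that makes this inversion valid despite Rules~2 and the merge rule also being applicable; your version is therefore the more rigorous of the two.
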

\begin{proof}
If $\N \vdashg \G$ then  $\rec{X} \PP \vdash \proj{p}{\G}$ and $\proj{p}{\G}$ is guarded.
By the type rules, this can hold only if $\PP\sub{X}{\rec{X} \PP} \vdash \proj{p}{\G}$.
Therefore, using that $\proc(q, \N) = \proc(q, \M)$ for all locations $q \neq p$, $\M \vdashg \G$.
\end{proof}
}

{
Let $\locs(\N)$ denote the set of locations of a network state $\N$.%
\vspace{1ex plus 2pt}

\begin{lemma}{SR internal}
If $\N \goesto{\tau} \M$ with
 $\proc(p, \N) = \bigoplus_{i \in I} \send{q_i}{\lambda_i} ; \PP$, and $\proc(p , \M) = \chosen[3pt]{\send{q_i}{\lambda_i}} ; \PP_i$
for some $i \in I$
and $\N \vdashg \G$ then $\M \vdashg \G$.
\end{lemma}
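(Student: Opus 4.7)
The plan is to show that the $\tau$-transition resolving the internal choice at location $p$ preserves well-typedness of the network against the same global type $\G$. Since $\proc(q, \N) = \proc(q, \M)$ for every location $q \neq p$, the typing judgements for the other components carry over verbatim from $\N \vdashg \G$, and $\G$ itself is unchanged, so guardedness of each projection is preserved. It therefore remains to derive $\chosen[3pt]{\send{q_i}{\lambda_i}}; \PP_i \vdash \proj{p}{\G}$, given $\bigoplus_{j \in I} \send{q_j}{\lambda_j}; \PP_j \vdash \proj{p}{\G}$ and the guardedness of $\proj{p}{\G}$.

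I would analyse the coinductive derivation of the latter judgement. Since its left-hand side has internal-choice shape, the only rules of Figure~\ref{fig:types} that can apply at any node are those acting on the right---unfolding a $\rec{X}\PQ$ or distributing a merge $\merge_{k \in K}\PQ_k$---together with the internal-choice rule itself. Guardedness of $\proj{p}{\G}$ rules out any infinite chain of right-unfoldings and merge-distributions without an intervening prefix constructor, so along each branch of the derivation tree the internal-choice rule must eventually be applied. Each such application furnishes an index set $J \supseteq I$, a right-hand side of the form $\bigoplus_{j \in J} \send{q_j}{\lambda_j}; \PQ_j$, and sub-derivations of $\PP_j \vdash \PQ_j$ for every $j \in I$.

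From this derivation I would then build a new one witnessing $\chosen[3pt]{\send{q_i}{\lambda_i}}; \PP_i \vdash \proj{p}{\G}$ by keeping the same right-acting structural steps and replacing each leaf application of the internal-choice rule with the rule from Definition~\ref{df:states}. At every such leaf, the required premise $\PP_i \vdash \PQ_i$ is already available as a sub-derivation, because $i \in I \subseteq J$. The main obstacle is the coinductive bookkeeping: we must argue that the transformation of a potentially infinite derivation tree yields a valid coinductive derivation. Guardedness of $\proj{p}{\G}$ is the crucial ingredient, bounding the local depth of right-side rules between successive prefix constructors and making the transformation well-defined.
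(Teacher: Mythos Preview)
Your proposal is correct and follows essentially the same approach as the paper: both argue that whenever $\bigoplus_{i\in I}\send{q_i}{\lambda_i};\PP_i \vdash \GH$ holds, so does $\chosen[3pt]{\send{q_i}{\lambda_i}};\PP_i \vdash \GH$, by replacing each application of the internal-choice rule with the rule of Definition~\ref{df:states}, while the other locations and the guardedness of projections carry over unchanged. The paper states this in one sentence; your version unpacks the derivation structure more explicitly, but note that your appeal to guardedness is not actually needed---the argument goes through purely by coinduction (showing backward closure of the augmented relation), without requiring that the internal-choice rule be reached after finitely many right-side steps.
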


\begin{proof}
Assume $\N \vdashg \G$. So, $\G$ is closed, $\participants{\G} \subseteq \locs(\G)$,
and, for all $p \in \locs(\N)$, $\proc(p,\N) \vdash \proj{p}\G$ and $\proj{p}\G$ is guarded.
The rules of Figure~\ref{fig:types} imply that when $\bigoplus_{i \in I} \send{q_i}{\lambda_i}; \PP \vdash \GH$,
and $i\mathbin\in I$, certainly also $\chosen[3pt]{\send{q_i}{\lambda_i}} ; \PP_i \vdash \GH$.
As $\proc(q, \N) = \proc(q, \M)$ for all locations $q \neq p$, this implies $\M \vdashg \G$.
\end{proof}
}

{
For transitions of the form $\N \goesto{\comm{p}{\lambda}{q}\,} \M$ we target a session fidelity
result, which is stronger than subject reduction, since it constructs a type for $\M$ 
 from the type of $\N$ that reflects the network transition.
For this, we need a few auxiliary concepts.%
\vspace{1ex plus 2pt}

\begin{definition}{psize}
The \emph{maximum depth} $\psize{\G}{p}$  in the abstract syntax tree of $\G$ of
a communication involving location $p$ is defined as follows:
\[
\begin{array}{@{}r@{\;}l@{}}
\psizeSmall{ \End }{p} &= 0
\\
\psizeSmall{ X }{p} &= \infty
\\
\psizeSmall{ \rec{X} \G }{p} &= \left\{
\begin{array}{@{}ll@{}}
  0 & \begin{array}[t]{@{}l@{}}\mbox{if $p \notin \participants{\G}$}\\
      \mbox{{and $\rec{X}\G$ is closed}}
      \end{array}
\\
  1+ \psizeSmall{ \G }{p} & \mbox{otherwise}
\\
\end{array}
\right.
\\
\psizeSmall{ \bigboxplus_{i \in I} \comm{r}{\lambda_i}{q_i} ; \G_i }{p} &= 
\max\left\{ \psize{ \comm{r}{\lambda_i}{q_i} ; \G_i }{p} \colon i \in I \right\}
\\
\psizeSmall{ \comm{r}{\lambda}{q} ; \G }{p} &= 
{\left\{
\begin{array}{@{}lr@{}}
1 & \mbox{if $p \mathbin= r \vee p \mathbin= q$}
\\
1 {+} 
 \psize{ \G }{p}
&
\mbox{if $p \mathbin{\neq} r \wedge p \mathbin{\neq} q$}
\end{array}
\right.}
\end{array}
\]
\end{definition}

\noindent
Call a projection type $\PP$ \emph{fully guarded}, if it is guarded, and each
occurrence of a variable $X$ within $\PP$ occurs within a subexpression  
$\send{p}{\lambda}; \PQ$ or $\recv{p}{\lambda}; \PQ$ of $\PP$.
\vspace{1ex}
}

\begin{lemma}{measure}
  If $\proj{p}{\G}$ is fully guarded then $\psize{\G}{p}$ is finite.
\end{lemma}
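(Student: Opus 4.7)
The plan is to prove the contrapositive by structural induction on $\G$: if $\psize{\G}{p} = \infty$, then $\proj{p}{\G}$ contains a variable occurrence that lies outside the scope of every $\send{q}{\lambda};(\cdot)$ and $\recv{q}{\lambda};(\cdot)$ prefix, i.e.\ $\proj{p}{\G}$ is not fully guarded. The key observation motivating the induction is the tight correspondence between the two definitions: the syntactic positions where $\psize{\cdot}{p}$ is incremented (communications $\comm{r}{\lambda}{q}$ with $p\notin\{r,q\}$, recursion binders that are not short-circuited, and branches of a choice) are exactly the positions in the projection where no $\send{}{}$ or $\recv{}{}$ guard is introduced, while the positions where $\psize{\cdot}{p}$ is capped at a finite value (a communication involving $p$) are exactly the positions where the projection introduces such a guard.

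The base cases are immediate: for $\G = \End$ we have $\psize{\End}{p}=0$, so the hypothesis is vacuous; for $\G = X$ we have $\proj{p}{X}=X$, an obviously unguarded variable occurrence. For the inductive case $\G = \rec{X}\G'$, the assumption $\psize{\G}{p}=\infty$ forces the second clause of $\psize{\cdot}{p}$ on recursion to apply (the first clause yields $0$), so in particular either $p\in\participants{\G'}$ or $\rec{X}\G'$ is not closed, and $\psize{\G'}{p}=\infty$; this same condition also rules out the $\End$-branch of projection, giving $\proj{p}{\G} = \rec{X}(\proj{p}{\G'})$. Wrapping a $\rec{X}$ around does not introduce any send/recv guard, so the unguarded occurrence supplied by the induction hypothesis applied to $\G'$ persists in $\proj{p}{\G}$.

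For the inductive case $\G = \bigboxplus_{i\in I}\comm{r}{\lambda_i}{q_i};\G_i$, we split on whether $p=r$. If $p=r$, then every branch $\comm{r}{\lambda_i}{q_i};\G_i$ has $\psize{\cdot}{p}=1$, so $\psize{\G}{p}=1$, contradicting infinitude. Hence $p\neq r$, and $\psize{\G}{p}$ is the maximum of $\psize{\comm{r}{\lambda_i}{q_i};\G_i}{p}$; infinitude requires some index $i$ with $p\neq q_i$ (otherwise that branch contributes only $1$) and $\psize{\G_i}{p}=\infty$. The projection is $\proj{p}{\G} = \merge_{j\in I}\proj{p}{(\comm{r}{\lambda_j}{q_j};\G_j)}$, and the $i$th argument reduces to $\proj{p}{\G_i}$ because $p\notin\{r,q_i\}$. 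By the induction hypothesis $\proj{p}{\G_i}$ has an unguarded variable occurrence, and since neither the $\merge$-combinator nor any enclosing structure adds a send/recv prefix above that position, the occurrence remains unguarded in $\proj{p}{\G}$.

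The main subtlety, and the step to guard most carefully, is the alignment of the recursion clause: we need that whenever $\psize{\rec{X}\G'}{p}$ falls into its ``non-zero'' clause, the projection also falls into its ``non-$\End$'' clause, so that the projected type actually descends into $\proj{p}{\G'}$ rather than collapsing to $\End$ and hiding the problematic variable. This is immediate once one observes that the side conditions in the two definitions are literally the same, namely $p\notin\participants{\G'}$ together with closedness of $\rec{X}\G'$; apart from this bookkeeping, the induction goes through by a direct case analysis.
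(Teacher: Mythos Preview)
Your proof is correct and follows the same structural-induction approach as the paper, just phrased as the contrapositive and written out in more detail. The paper's proof is a two-line sketch listing the key observations (that the recursion clauses of $\proj{p}{\cdot}$ and $\psize{\cdot}{p}$ use identical side conditions, and that $\proj{p}{(\comm{r}{\lambda}{q};\G)}=\proj{p}{\G}$ when $p\notin\{r,q\}$); you unfold exactly these observations in your case analysis.
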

\begin{proof}
  A straightforward structural induction on $\G$, using that
  \begin{itemize}
    \item $\proj{p}{(\rec{X} \G)}$ is fully guarded iff $\proj{p}{\G}$ is fully guarded; 
    \item if $p\neq r,q$ then
           $\proj{p}{(\comm{r}{\lambda}{q} ; \G)} = \proj{p}{\G}$.
\qed
  \end{itemize}
\end{proof}

\begin{corollary}{measure}
If $\N\vdashg\G$ and $p \in \locs(\N)$ then $\psize{\G}{p}$ is finite.
\end{corollary}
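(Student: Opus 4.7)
The plan is to derive the corollary directly from Lemma~\ref{lem:measure} by verifying that, under the hypothesis $\N\vdashg\G$, the projection $\proj{p}{\G}$ is not merely guarded (as Definition~\ref{df:guarded typing} already gives us) but in fact \emph{fully} guarded. The gap between the two notions is that ``guarded'' only constrains variables $X$ occurring inside a binder $\rec{X}\PQ$, while ``fully guarded'' constrains every variable occurrence in $\PP$. Closing this gap requires showing that $\proj{p}{\G}$ is a closed projection type, so that every variable occurrence is under some $\rec{X}$ binder, to which the guardedness condition can then be applied.

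First, I would argue that $\proj{p}{\G}$ is closed whenever $\G$ is. This is a straightforward induction on $\G$ using the defining clauses of the projection (page~\pageref{def:proj}): the clause for $\rec{X}\G'$ either erases the entire subexpression (producing $\End$) or preserves the binder as $\rec{X}\proj{p}{\G'}$, and in the latter case any variable of $\G'$ that was bound by this $\rec{X}$ remains bound by the same binder in the projection; free-variable occurrences in $\G'$ map to free-variable occurrences in $\proj{p}{\G'}$ but these are bound by some enclosing $\rec{Y}$ in $\G$, and the corresponding binder is preserved in $\proj{p}{\G}$ by the same clause. Since $\G$ is closed by Definition~\ref{df:typing} (which is part of $\vdashg$), so is $\proj{p}{\G}$.

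Next, I would combine closedness with guardedness to conclude full guardedness. Let $X$ be any variable occurrence in $\proj{p}{\G}$. By closedness, $X$ lies inside some subexpression $\rec{X}\PQ$ of $\proj{p}{\G}$. By guardedness of $\proj{p}{\G}$ (which holds since $\N\vdashg\G$), that occurrence of $X$ in $\PQ$ sits inside a subexpression of the form $\send{r}{\lambda};\PP'$ or $\recv{r}{\lambda};\PP'$ of $\PQ$, and hence of $\proj{p}{\G}$. So $\proj{p}{\G}$ is fully guarded, and Lemma~\ref{lem:measure} immediately yields that $\psize{\G}{p}$ is finite.

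The only step that requires any real care is the closure-under-projection argument, since the clause $\proj{p}{(\rec{X}\G')} = \End$ requires $\rec{X}\G'$ to be closed as a standalone subexpression, whereas intermediate subexpressions encountered during the projection of a larger closed $\G$ may well have free variables; I expect this to be the main point to double-check, but the binder-preservation clause ($\rec{X}\proj{p}{\G'}$) handles precisely this case, so no real obstacle arises.
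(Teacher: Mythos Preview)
Your proposal is correct and follows essentially the same route as the paper's proof: the paper also argues that $\proj{p}{\G}$ is guarded (from Definition~\ref{df:guarded typing}) and closed (because $\G$ is closed by Definition~\ref{df:typing}), then observes that a closed guarded projection type is fully guarded, and finally invokes Lemma~\ref{lem:measure}. You simply spell out the closure-under-projection step and the ``closed $+$ guarded $\Rightarrow$ fully guarded'' step in more detail than the paper, which asserts both in a single sentence.
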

\begin{proof}
Let $\N\vdashg\G$ and $p \in \locs(\N)$. By \df{guarded typing}, $\proj{p}\G$ is guarded.
By \df{typing}, $\G$, and hence also $\proj{p}\G$, is closed.
Since a closed projection type is fully guarded iff it is guarded, the result follows from \lem{measure}.
\end{proof}

\begin{lemma}{proj}
If $\GH$ is closed then $\proj{p}{(\G\sub{X}{\GH})} = \proj{p}{\G}\sub{X}{\proj{p}{\GH}}$.
\end{lemma}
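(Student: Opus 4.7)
The plan is to proceed by structural induction on $\G$. The base cases are immediate: if $\G = \End$ both sides reduce to $\End$; if $\G = X$ both sides reduce to $\proj{p}{\GH}$; and if $\G = Y$ for $Y \neq X$ both sides reduce to $Y$.

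For the inductive case $\G = \bigboxplus_{i \in I} \comm{q}{\lambda_i}{r_i}; \G_i$, I would unfold the definitions of substitution and projection and split on whether $p = q$, $p = r_i$ for some $i$, or $p \notin \{q\}\cup\{r_i \mid i \in I\}$. These yield, respectively, an outer $\bigoplus$, an outer $\recv{\cdot}{\cdot}$-prefix at the chosen branches, and an outer merge $\merge_{i\in I}$. Since substitution commutes syntactically with each of these constructors (and with the prefixed $\send{q}{\lambda_i}$ in the first case, using that $\lambda_i$ carries no variables), the induction hypothesis applied to each $\G_i$ gives the desired equality.

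The substantive case is $\G = \rec{Y}\G'$. By $\alpha$-conversion I may assume $Y \neq X$ and $Y \notin \textrm{fv}(\GH)$, so that $(\rec{Y}\G')\sub{X}{\GH} = \rec{Y}(\G'\sub{X}{\GH})$ and the substitution on the right does not capture $Y$. To decide which clause of the projection definition applies on each side, I would first establish two bookkeeping facts by easy inductions on $\G'$: (a) $\textrm{fv}(\proj{p}{\G'}) \subseteq \textrm{fv}(\G')$; and (b) since $\GH$ is closed, $\textrm{fv}(\G'\sub{X}{\GH}) = \textrm{fv}(\G') \setminus \{X\}$, and moreover $\participants{\G'\sub{X}{\GH}} = \participants{\G'}$ when $X \notin \textrm{fv}(\G')$, otherwise $\participants{\G'\sub{X}{\GH}} = \participants{\G'} \cup \participants{\GH}$. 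With these, I would split on whether $\rec{Y}\G'$ is closed and whether $p \in \participants{\G'}$. In the two subcases where the projection of $\rec{Y}\G'$ falls under its ``otherwise'' branch, the LHS equals $\rec{Y}(\proj{p}{(\G'\sub{X}{\GH})})$, the RHS equals $\rec{Y}(\proj{p}{\G'}\sub{X}{\proj{p}{\GH}})$ (using $Y \neq X$ and $Y \notin \textrm{fv}(\proj{p}{\GH})$, which follows from (a) applied to $\GH$), and the induction hypothesis on $\G'$ closes the gap. In the subcase where $\rec{Y}\G'$ is closed and $p \notin \participants{\G'}$, the RHS evaluates to $\End$, while by (b) $\G'\sub{X}{\GH} = \G'$ and the LHS also evaluates to $\End$.

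The main obstacle is the coordination of the two projection-case splits under substitution: the ``if'' branch of the projection on $\rec{Y}(\cdot)$ depends jointly on closedness of the body and on $p$ being absent from its participants, and both conditions can shift when $X$ is replaced by $\GH$. The closedness hypothesis on $\GH$ is what keeps the free-variable bookkeeping tame (ensuring substitution can only remove free variables from $\G'$, never introduce new ones, so that the closedness status of $\rec{Y}\G'$ and $\rec{Y}(\G'\sub{X}{\GH})$ agree exactly when $X$ is not free in $\G'$); the case split on whether $X \in \textrm{fv}(\G')$ simultaneously governs whether the participant set is enlarged by $\participants{\GH}$, and these two effects line up so that the branches chosen on the two sides agree, modulo the substitution being vacuous precisely when it needs to be.
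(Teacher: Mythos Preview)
Your approach---structural induction on $\G$---is exactly the paper's; its entire proof reads ``A trivial structural induction on $\G$.'' Your base cases and the choice case are fine.

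The gap is in the recursion case $\G = \rec{Y}\G'$. You assert that the two case-splits ``line up so that the branches chosen on the two sides agree,'' but they need not. Take $X \in \textrm{fv}(\G')$ with $\textrm{fv}(\G') \subseteq \{X,Y\}$ and $p \notin \participants{\G'} \cup \participants{\GH}$. Then $\rec{Y}\G'$ is not closed, so on the right the projection of $\rec{Y}\G'$ takes the ``otherwise'' branch, giving $\rec{Y}\big((\proj{p}{\G'})\sub{X}{\proj{p}{\GH}}\big)$ after substitution; but $\rec{Y}(\G'\sub{X}{\GH})$ \emph{is} closed and $p \notin \participants{\G'\sub{X}{\GH}}$, so the left-hand side is~$\End$. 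Concretely, with $\G' = \comm{a}{\ell}{b};X$, $\GH = \rec{Z}(\comm{a}{\ell}{b};Z)$ and $p \notin \{a,b\}$, the left side is $\End$ while the right side is $\rec{Y}$ applied to a singleton merge of $\End$; these are not syntactically equal.

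This is less a defect in your decomposition than in the lemma as literally stated: as a syntactic equality for arbitrary $\G$ it appears to be too strong, and the paper's one-line proof glosses over the same point. The paper only uses the lemma to transfer guardedness and $\vdash$-judgements between the two sides, and for those purposes the two projection types above are interchangeable. But your sketch, read as establishing syntactic equality, does not close in the case above; the claim that the branches ``line up'' hides precisely this discrepancy.
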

\begin{proof}
A trivial structural induction on $\G$.
\end{proof}
\vspace{1ex plus 2pt}

{
\begin{lemma}{guarded1}
$\N \vdashg \rec{X} \G$ iff $\N \vdashg \G\sub{X}{\rec{X} \G}$.
\end{lemma}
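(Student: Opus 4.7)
The plan is to unpack both sides of the biimplication via Definitions~\ref{df:guarded typing} and~\ref{df:typing}, reducing the claim to showing that each of the four defining conjuncts of $\vdashg$ is equivalent for $\HH = \rec{X}\G$ and $\HH = \G\sub{X}{\rec{X}\G}$: namely, closedness of $\HH$, the containment $\participants{\HH} \subseteq \locs(\N)$, the typing $\proc(p,\N) \vdash \proj{p}{\HH}$ for each $p \in \locs(\N)$, and guardedness of each such projection.

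Closedness is immediate in both directions: $\rec{X}\G$ being closed means $X$ is the only possibly free variable of $\G$, so $\G\sub{X}{\rec{X}\G}$ is closed; conversely, any free variable of $\G$ other than $X$ would survive the substitution, forcing $\rec{X}\G$ closed. The participant sets coincide because $\participants$ ignores variables, so $\participants{\rec{X}\G} = \participants{\G}$, and the substitutent $\rec{X}\G$ has the same participants as $\G$, giving $\participants{\G\sub{X}{\rec{X}\G}} = \participants{\G}$ as well. For the remaining two conjuncts, I would apply Lemma~\ref{lem:proj} to obtain $\proj{p}{\G\sub{X}{\rec{X}\G}} = (\proj{p}\G)\sub{X}{\proj{p}{\rec{X}\G}}$, and then split on whether $p \in \participants{\G}$: if $p$ participates then $\proj{p}{\rec{X}\G} = \rec{X}(\proj{p}\G)$; otherwise $\proj{p}{\rec{X}\G} = \End$ and $(\proj{p}\G)\sub{X}{\End}$ contains no communications involving $p$.

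The typing conjunct in the participating case reduces to the key equivalence $\proc(p,\N) \vdash \rec{X}\PP$ iff $\proc(p,\N) \vdash \PP\sub{X}{\rec{X}\PP}$ with $\PP = \proj{p}\G$, while the non-participating case reduces to showing that types obtained by substituting $\End$ for $X$ in a $p$-free projection are $\vdash$-interchangeable with $\End$, which is the same style of equivalence applied inductively to residual recursion binders. Guardedness is preserved in both cases because substituting a closed, guarded term for a variable that occurs only in guarded positions yields another guarded term; in particular, $\rec{X}(\proj{p}\G)$ is guarded exactly when every free $X$ in $\proj{p}\G$ sits under some $\send{}{}$ or $\recv{}{}$, which is precisely the condition ensuring guardedness of $(\proj{p}\G)\sub{X}{\rec{X}(\proj{p}\G)}$.

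The main obstacle is the forward direction of the key equivalence, $\PP \vdash \rec{X}\PQ \Rightarrow \PP \vdash \PQ\sub{X}{\rec{X}\PQ}$. The reverse is a single application of Rule 2 of Figure~\ref{fig:types}. The forward direction requires a small coinductive analysis, since when $\PP$ is of recursive form the judgment $\PP \vdash \rec{X}\PQ$ could in principle be witnessed by Rule 1 (unfolding on the left) without ever firing Rule 2. The resolution exploits that the thread $\PP$ is guarded: by the exclusion of $\rec{X}X$ and $\rec{X}\rec{Y}\PP$, finitely many applications of Rule 1 reduce the left-hand side to a non-recursion-headed form, at which point Rule 2 is the only applicable witness and exposes the required premise $\PP \vdash \PQ\sub{X}{\rec{X}\PQ}$, which one then re-folds via Rule 1 to restore any left-hand recursions that were unfolded.
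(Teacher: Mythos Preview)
Your proposal is correct and follows essentially the same decomposition as the paper's proof: both reduce $\vdashg$ to its four defining conjuncts, split on whether $p \in \participants{\G}$, and invoke \lem{proj} in the participating case to identify $\proj{p}{(\G\sub{X}{\rec{X}\G})}$ with $(\proj{p}\G)\sub{X}{\rec{X}(\proj{p}\G)}$.

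One minor divergence: the paper does \emph{not} route the non-participating case through \lem{proj}; instead it observes directly that, since $\G\sub{X}{\rec{X}\G}$ is closed and $p$-free, the projection $\proj{p}{(\G\sub{X}{\rec{X}\G})}$ is simply $\End$ nested under merge operators. This sidesteps the residual $\rec{Y}$ binders that can appear in your expression $(\proj{p}\G)\sub{X}{\End}$ and makes the equivalence with $\proc(p,\N)\vdash\End$ immediate from the merge rule alone. Your route works too, but is slightly more laborious. Conversely, your final paragraph on the forward direction $\PP \vdash \rec{X}\PQ \Rightarrow \PP \vdash \PQ\sub{X}{\rec{X}\PQ}$ spells out a detail the paper leaves implicit (it just says ``by the second rule for $\vdash$''); your argument that the syntactic restrictions on threads force Rule~2 after at most one application of Rule~1 is exactly how one discharges this inversion in the coinductive setting.
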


\begin{proof}
$\rec{X} \G$ is closed iff $\G\sub{X}{\rec{X} \G}$ is closed.
Moreover, $\participants{\rec{X} \G} = \participants{\G} = \participants{\G\sub{X}{\rec{X} \G}}$.
Pick $p \in \locs(\N)$.
it remains to show that $\proj{p}{(\rec{X} \G)}$ is guarded iff $\proj{p}{\G\sub{X}{\rec{X} \G}}$ is guarded,
and $\proc(p,\N)\vdash\proj{p}{(\rec{X} \G)}$ iff $\proc(p,\N)\vdash\proj{p}{\G\sub{X}{\rec{X} \G}}$.

If $p \notin \participants{\G}$ then $\proj{p}{(\rec{X} \G)} = \End$ and $\proj{p}{(\G\sub{X}{\rec{X} \G})}$
must be $\End$ in the scope of some merge operators only. Both are guarded.
Moreover, $\proc(p,\N) \vdash \End$ iff $\proc(p,\N) \vdash \proj{p}{(\G\sub{X}{\rec{X} \G})}$.

If $p \in \participants{\G}$ then we have $\proj{p}{(\rec{X} \G)} = \rec{X}(\proj{p}{\G})$ and
$\proj{p}{(\G\sub{X}{\rec{X} \G})} = \proj{p}{\G}\sub{X}{\rec{X} (\proj{p}{\G})}$ by \lem{proj}.
Now $\proj{p}{\G}\sub{X}{\rec{X} (\proj{p}{\G})}$ is guarded iff $\rec{X}(\proj{p}{\G})$ is guarded.
Moreover, by the second rule for $\vdash$,
$\proc(p, \N) \vdash\rec{X}(\proj{p}{\G})$ iff $\proc(p, \N) \vdash\proj{p}{\G}\sub{X}{\rec{X} (\proj{p}{\G})}$.
\end{proof}
}

{
\begin{lemma}{guarded2}
If $\G$ is closed then $\psize{\rec{X} \G}{p} > \psize{\G\sub{X}{\rec{X} \G}}{p}$ for all
locations $p\in\participants\G$.
\end{lemma}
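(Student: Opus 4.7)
The plan is to reduce the claim to a one-line application of the definition of $\psize{\cdot}{p}$. First, since $\G$ is closed the variable $X$ has no free occurrence in $\G$, so the substitution is idempotent: $\G\sub{X}{\rec{X}\G} = \G$. Hence the inequality to prove collapses to $\psize{\rec{X}\G}{p} > \psize{\G}{p}$.

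Next I would observe that $\G$ being closed forces $\rec{X}\G$ to be closed too, and by definition $\participants{\rec{X}\G} = \participants{\G}$; so the assumption $p\in\participants{\G}$ rules out the special case ``$p\notin\participants{\G}$ and $\rec{X}\G$ closed'' in the clause of Definition~\ref{df:psize} for the recursion case. The ``otherwise'' branch therefore fires and yields $\psize{\rec{X}\G}{p} = 1 + \psize{\G}{p}$. Combining the two observations gives $\psize{\rec{X}\G}{p} = 1 + \psize{\G}{p} > \psize{\G}{p} = \psize{\G\sub{X}{\rec{X}\G}}{p}$, as required.

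The only conceptual subtlety is the strict inequality $1 + \psize{\G}{p} > \psize{\G}{p}$, which requires $\psize{\G}{p}$ to be finite. Finiteness is not an explicit hypothesis, but is guaranteed in every intended use of the lemma: $\G$ arises as (a sub-)type of a guardedly well-typed network, so the relevant projection is fully guarded and Lemma~\ref{lem:measure} (equivalently, Corollary~\ref{cor:measure}) gives $\psize{\G}{p} < \infty$. No structural induction on $\G$ is needed, so the main obstacle is really only this bookkeeping around finiteness; the rest of the argument is just unfolding definitions.
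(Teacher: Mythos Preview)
Your argument is valid for the lemma exactly as printed, and it is indeed shorter than the paper's. But there is a reason the paper does not take this shortcut: the hypothesis ``$\G$ is closed'' is almost certainly a slip for ``$\rec{X}\G$ is closed''. In the lemma's sole use---inside the proof of Lemma~\ref{lem:session fidelity}---one has a closed type $\rec{X}\GH$ and invokes the present lemma with $\GH$ playing the role of $\G$; but $\GH$ will in general contain $X$ free (otherwise the recursion is vacuous), so $\GH$ is \emph{not} closed and your identity $\G\sub{X}{\rec{X}\G}=\G$ fails there. Your proof therefore establishes a version of the lemma that does not cover its intended application.

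The paper's proof is written so that only closedness of $\rec{X}\G$ is used. It first shows, by a structural induction on $\G$, that $\psize{\G}{p}\geq\psize{\G\sub{X}{\GH}}{p}$ for \emph{arbitrary} $\G$ and closed $\GH$; the base case $\psize{X}{p}=\infty\geq\psize{\GH}{p}$ is what makes this go through. Instantiating $\GH:=\rec{X}\G$ then gives
\[
\psize{\rec{X}\G}{p}=1+\psize{\G}{p}\ \geq\ 1+\psize{\G\sub{X}{\rec{X}\G}}{p}\ >\ \psize{\G\sub{X}{\rec{X}\G}}{p}.
\]
So the paper's route proves what is actually needed, while yours proves only the (stronger-hypothesis, hence weaker-in-applicability) literal statement.

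Your remark about the finiteness caveat in the last strict inequality is apt and applies equally to the paper's chain; in the application it is discharged by Corollary~\ref{cor:measure}.
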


\begin{proof}
For any $\G$ and closed $\GH$ we have $\psize{\G}{p} \geq \psize{\G\sub{X}{\GH}}{p}$,
by a trivial induction on the structure of $\G$. Hence\\
$\psize{\rec{X}\G}{p}
     \mathop= 1{+} \psize{\G}{p} 
     \mathop\geq 1 {+} \psize{\G\sub{X}{\rec{X}\G}}{p} 
     \mathop> \psize{\G\sub{X}{\rec{X}\G}}{p}$.\\\mbox{}
     \hfill
\end{proof}
}

\begin{lemma}{participants}
  If $\proc(p,\N) \vdash \proj{p}{\G}$ with $p \in \locs(\N)\setminus \participants{\G}$
  and $\G$ is closed, then $\proc(p,\N) \vdash \End$.
\end{lemma}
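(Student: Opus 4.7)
The plan is to factor the argument into two independent steps and then combine them. The first step is structural: I will prove by induction on $\G$ that, under the hypotheses $\G$ closed and $p \notin \participants{\G}$, the projection $\proj{p}{\G}$ is built exclusively from the constant $\End$ and the merge operator $\merge_{i \in I}$ with non-empty $I$; in particular it contains no prefix, no recursion operator, and no variable. The case $\G = \End$ is immediate, and $\G = X$ is excluded by closedness. For $\G = \rec{X}\G'$, the first clause of the projection of recursion fires: $\rec{X}\G'$ is closed (as $\G$ is) and $\participants{\G'} = \participants{\G}$ does not contain $p$, so $\proj{p}{\G} = \End$. For $\G = \bigboxplus_{i\in I} \comm{q}{\lambda_i}{r_i}; \G_i$, the hypothesis forces $p \notin \{q\} \cup \{r_i \mid i\in I\}$, so the projection reduces to $\merge_{i\in I} \proj{p}{\G_i}$; each $\G_i$ is still closed (choice introduces no binder) and satisfies $p \notin \participants{\G_i}$, so the induction hypothesis applies. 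The merge index $I$ is non-empty because global choice is non-empty by definition.

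Next I would prove, as a separate lemma, that whenever $\PP \vdash \PQ$ with $\PQ$ built only from $\End$ and non-empty merges, one also has $\PP \vdash \End$. Since $\vdash$ is defined coinductively, I will introduce the relation
\[
R = \{(\PP,\End) \mid \PP \vdash \PQ \text{ for some } \PQ \text{ built from } \End \text{ and non-empty merges}\}
\]
and establish $R \subseteq \vdash$ by showing, for each $(\PP,\End) \in R$, that some rule with conclusion $\PP \vdash \End$ applies with premises in $R \cup \vdash$. A strong induction on the syntactic size of the witnessing $\PQ$ drives the argument. If $\PQ = \End$, the pair $(\PP,\End)$ is already in $\vdash$, so whatever rule derived it will do. If $\PQ = \merge_{i \in I} \PQ_i$, I inspect the rule that derived $\PP \vdash \PQ$: if it was the left-recursion rule, then $\PP = \rec{X}\PP_0$ and the same rule reconstructs $\PP \vdash \End$ with premise $(\PP_0\sub{X}{\rec{X}\PP_0},\End) \in R$, witnessed by the unchanged $\PQ$; if it was the merge rule, then $\PP \vdash \PQ_i$ holds in $\vdash$ for every $i \in I$, and picking any $i \in I$ (possible because $I$ is non-empty) provides a strictly smaller witness to which the induction hypothesis applies.

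The two steps combine immediately: Step~1 shows that $\proj{p}{\G}$ has exactly the shape required by Step~2, which then delivers $\proc(p,\N) \vdash \End$. The main obstacle I anticipate is the careful interplay in Step~2 between coinductive unfolding and induction on $|\PQ|$: rule~1 leaves $\PQ$ unchanged and must be absorbed coinductively into $R$, whereas the merge rule strictly shrinks $\PQ$ and must be dispatched inductively; the invariant $R$ and the induction measure therefore have to be arranged so that each coinductive step keeps the same witness while each inductive step reduces it.
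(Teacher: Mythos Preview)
Your proof is correct and is essentially a careful unpacking of the paper's own proof, which reads in its entirety ``A trivial structural induction on $\G$.'' Your Step~1 is exactly that structural induction, and your Step~2 makes explicit the coinductive reasoning about $\vdash$ (handling the interplay between left-recursion unfolding and merge elimination) that the paper leaves implicit in the word ``trivial''.
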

\begin{proof}
  A trivial structural induction on $\G$.
\end{proof}

Let $\dgoesto{}$ be the transition relation between global session types defined in Figure~\ref{gt trans}.
In combination with Lemmas~\ref{lem:SR recursion} and~\ref{lem:SR internal}, the following session fidelity result
shows how race-free networks evolve according to the global type.
\vspace{1ex}

\begin{lemma}{session fidelity}
For race-free network states $\N$, if $\N\goesto{\comm{p}{\lambda}{q}\,}\M$
and $\N \vdashg \G$ then there exists $\G'$ such that
$\G \dgoesto{\comm{p}{\lambda}{q}} \G'$ and $\M \vdashg \G'$.
\end{lemma}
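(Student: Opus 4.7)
The plan is induction on $\psize{\G}{p}$, which is finite by Corollary~\ref{cor:measure} (the premise forces $p\in\participants{\G}$, since $\proc(p,\N)=\chosen[3pt]{\send{q}{\lambda};P_Q}$ cannot be typed against $\End$). Reading off Figure~\ref{fig:red}, $\proc(p,\N)=\chosen[3pt]{\send{q}{\lambda};P_Q}$, $\proc(q,\N)=\sum_{j\in J}\recv{p_j}{\lambda_j};P'_j$ with $p=p_\ell,\lambda=\lambda_\ell$ for some $\ell\in J$, while $\proc(p,\M)=P_Q$, $\proc(q,\M)=P'_\ell$, and the other threads are untouched. A $\rec{X}$-head is dispatched by Lemmas~\ref{lem:guarded1} and~\ref{lem:guarded2} (unfold; the measure drops strictly) together with the third rule of Figure~\ref{gt trans}; so we may assume $\G=\bigboxplus_{i\in I}\comm{r}{\lambda_i}{s_i};\G_i$, and split on whether $r=p$.

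If $r=p$, the rule of Definition~\ref{df:states} applied to $\chosen[3pt]{\send{q}{\lambda};P_Q}\vdash\bigoplus_{i\in I}\send{s_i}{\lambda_i};\proj{p}{\G_i}$ pins down a $k\in I$ with $s_k=q$, $\lambda_k=\lambda$, and $P_Q\vdash\proj{p}{\G_k}$; take $\G'=\G_k$ and invoke the first rule of Figure~\ref{gt trans}. The verification of $\M\vdashg\G_k$ proceeds location-by-location: for $p$ it is immediate; for $q$, pulling the branch $\recv{p}{\lambda};\proj{q}{\G_k}$ out of the merge $\proj{q}{\G}$ and applying the external-choice rule produces some branch of $\proc(q,\N)$ typing $\proj{q}{\G_k}$, which race-freedom identifies with the continuation $P'_\ell$ actually consumed; for $\ell'\notin\{p,q\}$ the merge decomposition of $\proj{\ell'}{\G}$ suffices. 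Guardedness is inherited as a subexpression of $\proj{\ell'}{\G}$.

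The delicate case is $r\neq p$. The merge rule forces $s_i\neq p$ for all $i\in I$, and $r\neq q$ (else $\proc(q,\N)$ would have to type an internal choice). Let $K\subseteq I$ index the active send-branches of $\proc(r,\N)$ as exposed by the internal-choice rule of Figure~\ref{fig:types} or Definition~\ref{df:states} (after absorbing any recursion on $r$'s side via Lemma~\ref{lem:SR recursion}). Race-freedom forces $s_k\neq q$ for all $k\in K$: otherwise a few $\tau$-steps from $\N$ commit $r$ to $k$ and expose a race on receiver $q$ between $\comm{p}{\lambda}{q}$ and $\comm{r}{\lambda_k}{q}$. For each $i\in K$ I build an auxiliary $\N^{(i)}$, identical to $\N$ except that $\proc(r,\cdot)$ is advanced to the branch continuation $P_i$ and $\proc(s_i,\cdot)$ to $P''_{j_i}$, the continuation of $s_i$'s external-choice branch matching $(r,\lambda_i)$. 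This $\N^{(i)}$ is reachable from $\N$ by $\tau$-steps and one $\comm{r}{\lambda_i}{s_i}$, hence race-free, and $\N^{(i)}\vdashg\G_i$ can be read off projection-by-projection from the merge and internal-choice structure of the derivation of $\N\vdashg\G$. Since $\psize{\G_i}{p}<\psize{\G}{p}$, the induction hypothesis provides $\GH_i$ with $\G_i\dgoesto{\comm{p}{\lambda}{q}}\GH_i$ and $\M^{(i)}\vdashg\GH_i$.

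Setting $\G'=\bigboxplus_{i\in K}\comm{r}{\lambda_i}{s_i};\GH_i$ and applying the second rule of Figure~\ref{gt trans} (whose side condition $p,q\notin\{r,s_i:i\in K\}$ has been established) delivers the required transition; $\M\vdashg\G'$ is then assembled by combining the $\M^{(i)}\vdashg\GH_i$ through the merge rule at $p$ and $q$, the internal-choice rule at $r$, and the external-choice rule at each $s_i$, with guardedness inherited along the way. The principal obstacle is precisely this $r\neq p$ subcase: one has to design race-free hypothetical intermediates $\N^{(i)}$ to which the induction applies, and then reassemble the outputs $\GH_i$ into a single $\G'$ typing $\M$ via the general merge primitive. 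Race-freedom of $\N$ plays a double role throughout---it forbids the pathological configuration in which every branch of $r$'s choice targets $q$, and, in the $r=p$ subcase, it pins down which input branch of $q$ was actually consumed by the transition.
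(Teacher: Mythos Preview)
Your proposal is correct and follows essentially the same route as the paper's proof: induction on $\psize{\G}{p}$, dispatch of the recursion head via Lemmas~\ref{lem:guarded1} and~\ref{lem:guarded2}, and a split on $r=p$ versus $r\neq p$, with the latter handled by constructing auxiliary race-free networks $\N^{(i)}$ (the paper's $\N_i$) for the active branches $K$ (the paper's $I_0$) and reassembling via the second rule of Figure~\ref{gt trans}. You are in fact slightly more explicit than the paper in arguing $q\notin\{r\}\cup\{s_k:k\in K\}$ from typing and race-freedom; the paper simply asserts $p,q\notin\{r,s_i\mid i\in I_0\}$ ``by race-freedom of $\N$'' without spelling out the race that would arise if some $s_k=q$.
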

\renewcommand{\xi}{\lambda}

\begin{proof}
{By \cor{measure}, $\psize\G{p}$ is finite. The proof proceeds by induction on $\psize\G{p}$.
Note that $\proc(p,\N)$ has the form $\chosen[3pt]{\send{q}{\lambda}} ; \PP$.
Since $\chosen[3pt]{\send{q}{\lambda}} ; \PP \nvdash\End$, we can rule out that $\G=\End$.
}

Let $\G = \rec{X} \GH$.
Since $\N \vdashg \G$, one has $\proc(p,\N) \vdash \proj{p}{\G}$ and $\G$ is closed.
By \lem{participants}, {since $\chosen[3pt]{\send{q}{\lambda}} ; \PP \nvdash\End$, we have }$p \in \participants{\G}$.
By \lem{guarded1}, $\N \vdashg \GH\sub{X}{\rec{X}\GH}$.
By \lem{guarded2}, induction may be applied, so $\GH\sub{X}{\rec{X}\GH}\dgoesto{\comm{p}{\lambda}{q}} \G'$ and $\M \vdashg \G'$.
By the third rule of Figure~\ref{gt trans}, $\G\dgoesto{\comm{p}{\lambda}{q}} \G'$.

The remaining case is that $\G = \bigboxplus_{i\in I}\, \comm{r}{\xi_i}{s_i} ; \G_i$.
By unfolding the rules for transitions, we have 
\begin{itemize}
\item $\proc(p, \N) = \chosen[3pt]{ \send{q}{\lambda} } ; \PP$.
\item $\proc(q, \N) = \sum_{j\in J} \recv{p_j}{\lambda_j} ; \PQ_j$,
where $p = p_h$ and $\lambda \mathbin= \lambda_h$ for some $h \in J$.
\end{itemize}
Furthermore, 
$\proc(p, \M) = \PP$,
$\proc(q, \M) = \PQ_h$ 
and $\proc(u, \M) = \proc(u, \N)$ otherwise.

\vspace{1ex plus 2pt}

First assume that $p=r$.  Then
\(
\proj{p}{\G} = \textstyle{\bigoplus_{i\in I}}\, \send{s_i}{\lambda_i} ; (\proj{p}{\G_i})
\).
Hence, by the type rule for $\bigoplus$, since 
$\chosen[3pt]{\send{q}{\lambda} ; \PP} \vdash \proj{p}{\G}$,
there is a $k\in I$ with $s_k = q$, $\lambda_k = \lambda$ and $\PP \vdash \proj{p}{\G_k}$.
We have that $\G \dgoesto{\comm{p}{\lambda}{q}} \G_k$.
It remains to show that $\M \vdashg \G_k$.

Since $\G$ is closed, so is $\G_k$.
Moreover, $\participants{\G_k} \subseteq \participants{\G} \subseteq \locs (\N)=\locs(\M)$.
Thus it remains to show that for each $u\in\locs(\M)$ one has $\proc(u,\M) \vdash \proj{u}{\G_k}$
and $\proj{u}{\G_k}$ is guarded.
When $u=p$, we have $\proc(p,\M)=\PP \vdash \proj{p}{\G_k}$.

When $u\neq p,q$, we have 
\(
\proj{u}{\G} = \textstyle{\merge_{i\in I}}\, \proj{u}{(\comm{p}{\lambda_i}{s_i} ; \G_i)}
\).
Since $\proc(u,\M) = \proc(u,\N) \vdash \proj{u}{\G}$, by the rule for the merge in Figure~\ref{fig:types},
$\proc(u,\M) \vdash \proj{u}{(\comm{p}{\lambda}{q} ; \G_k)} = \proj{u}{\G_k}$.

Similarly, when $u=q$,  $\proc(q,\N) \vdash \proj{q}{(\comm{p}{\lambda}{q} ; \G_k)} = \recv{p}{\lambda}; \proj{q}{\G_k}$.
As $\proc(q, \N) = \sum_{j\in J} \recv{p_j}{\lambda_j} ; \PQ_j$,\vspace{-2pt} there must be an $l\mathbin\in J$ with
$p_l\mathbin=p$, $\lambda_l\mathbin=\lambda$ and $\PQ_l \vdash \proj{q}{\G_k}$.
Now $\N\goesto{\comm{p}{\lambda}{q}\,}\M'$, where $\proc(q,\M')=\PQ_l$.
Since $\N$ is race-free, $\M'=\M$ and thus $\PQ_l=\PQ_h=\proc(q,\M)$.
Hence $\proc(q,\M) \vdash \proj{q}{\G_k}$.

In all these cases $\proj{u}{\G_k}$ is a simple subterm of $\proj{u}{\G}$, not within a recursion
construct, so $\proj{u}{\G_k}$ is guarded because $\proj{u}{\G}$ is guarded.
\vspace{1ex plus 2pt}

Next assume that $p\neq r$. 
Observe that, for all $i \in I$, we have $s_i \neq p$; otherwise the first actions in $\proc(p, \N)$
would be an external choice of receive actions, which is impossible.

{
\renewcommand{\PR}{{{{\rm V}\hspace{-1pt}_i}}}
\newcommand{\PQi}{{{{\rm U}_i}}}
The thread $\proc(r,\N)$, possibly after unfolding recursion, must have the form
$\bigoplus_{i\in I_0} \send{s_i}{\lambda_i}; \PP_i$ with $\PP_i \vdash \proj{r}{\G_i}$; here $I_0 \mathbin\subseteq I$.
For each $i \mathbin\in I_0$ we define a network state $\N_i$ such that $\N_i \vdashg \G_i$.
Take $\proc(r,\N_i):= \PP_i$.
The thread $\proc(s_i,\N)$, possibly after unfolding recursion, must be of the form $r?\lambda_i; \PR + \PQi$, where
$\PR \vdash \proj{s_i}{\G_i}$; we take $\proc(s_i,\N_i) := \PR$.
For $u\mathop{\neq} r,s_i$ take $\proc(u,\N_i) := \proc(u,\N)$.
Since $\proj{u}\G \mathbin= \merge_{i\in I} \proj{u}{\G_i}$ and $\proc(u,\N_i) \vdash \proj{u}\G$, we have
$\proc(u,\N_i) \vdash \proj{u}{\G_i}$.
Note that $\G_i$ is closed since $\G$ is closed, and
$\participants{\G_i} \subseteq \participants{\G} \subseteq \locs(\N) =: \locs(\N_i)$.
Moreover, for $u\in\locs(\N)$, $\proj{u}{\G_i}$ is guarded since $\proj{u}{\G}$ is guarded.
It follows that indeed $\N_i \vdashg \G_i$.

For each $i \in I_0$ we have $\N\mathrel{\raisebox{0pt}[4pt]{$\goesto\tau$}^*} \goesto{\comm{r}{\lambda_i}{s_i}} \N_i$.
Hence $\N_i$ is race-free. As $p,q \notin \{r,s_i \mid i \mathbin\in I_0\}$, by 
race-freedom of $\N$,
it follows that $\N_i\goesto{\comm{p}{\lambda}{q}\,}\M_i$, where
$\proc(p,\M_i) = \PP$,
$\proc(q,\M_i) \mathbin= \PQ_h$,
$\proc(r,\M_i) \mathbin= \PP_i$,
$\proc(s_i,\M_i) \mathbin= \PR$,
and $\proc(u,\M_i) = \proc(u,\N)$ for all $u \notin \{p,q,r,s_i\}$.
Furthermore $\psize{\G_i}{p} < \psize{\G}{p}$.

By the induction hypothesis there are $\G'_i$, for $i \in I_0$, with
$\G_i \dgoesto{\comm{p}{\lambda}{q}} \G'_i$ and $\M_i \vdashg \G'_i$.
Thus, by the second rule of Figure~\ref{gt trans},
$\G \dgoesto{\comm{p}{\lambda}{q}} \G' := \bigboxplus_{i\in I_0}\, \comm{r}{\xi_i}{s_i} ; \G'_i$.
Trivially, $\G'$ is closed and $\participants{\G'}\subseteq \locs(\M)$.
Moreover, $\proj{u}{\G'}$ is guarded for all $u\in\locs(\N)$.
It remains to show that $\proc(u,\M) \vdash \proj{u}{\G'}$ for all $u \in \locs(\M)$.

We have $\proj{p}{\G'} = \merge_{i\in I_0} \proj{p}{\G'_i}$.
Since $\M_i \vdashg \G'_i$, we have $\proc(p,\M_i) = \PP \vdash \proj{p}{\G'_i}$ for all $i \in I_0$.
Thus, by the typing rule for merge, $\proc(p,\M) = \PP \vdash\proj{p}{\G'}$.

Likewise, $\proj{q}{\G'} = \merge_{i\in I_0} \proj{q}{\G'_i}$,
$\proc(q,\M_i) = \PQ_h \vdash \proj{q}{\G'_i}$ for all $i \in I_0$,
and $\proc(q,\M) = \PQ_h \vdash\proj{q}{\G'}$.

We have $\proj{r}{\G'} = \bigoplus_{i\in I_0} \send{s_i}{\lambda_i}; (\proj{r}{\G'_i})$.
Since $\M_i \vdashg \G'_i$, we have $\proc(r,\M_i) = \PP_i \vdash \proj{r}{\G'_i}$ for all $i \in I_0$.
Thus, by the typing rule for internal choice,
$\bigoplus_{i\in I_0} \send{s_i}{\lambda_i};\PP_i \vdash\proj{r}{\G'}$.
Thus $\proc(r,\M) = \proc(r,\N) \vdash \proj{r}{\G'}$.

For $u \neq p,q,r$ we have $\proj{u}{\G'} = \merge_{i\in I_0} \proj{u}{(\comm{r}{\lambda_i}{s_i}; \G'_i)}$.
Hence we need to show that $\proc(u,\M) \vdash  \proj{u}{(\comm{r}{\lambda_i}{s_i}; \G'_i)}$ for all $i \in I_0$.
So, pick $i\in I_0\subseteq I$.

First suppose $u\neq s_i$.
Since $\proc(u,\M) = \proc(u,\N) = \proc(u,\M_i)$ and $\M_i \vdashg \G_i'$,
we have $\proc(u,\M)\vdash \proj{u}{\G'_i} = \proj{u}{(\comm{r}{\lambda_i}{s_i}; \G'_i)}$.

Finally, suppose $u = s_i$.
As $\M_i \vdashg \G_i'$, we have $\proc(s_i,\M_i) = \PR \vdash \proj{s_i}{\G'_i}$.
By the typing rule for external choice,
$r?\lambda_i; \PR + \PQi \vdash r?\lambda_i; (\proj{s_i}{\G'_i}) = \proj{s_i}{(\comm{r}{\lambda_i}{s_i}; \G'_i)}$.
Hence $\proc(s_i,\M) = \proc(s_i,\N) \vdash \proj{s_i}{(\comm{r}{\lambda_i}{s_i}; \G'_i)}$.
}\end{proof}

{\newcommand{\lambdai}{{a}}
\begin{observation}{fidelity}
The second rule in Figure~\ref{gt trans} allows the index set to be narrowed. 
To understand why,
consider the following global type.
\[
\G \triangleq
\begin{array}[t]{l}
 \left(   \comm{r}{\lambdai}{t} ; \comm{p}{\lambdai}{q} ; \comm{q}{\lambdai}{r} ; \comm{r}{\lambdai}{s} ; \comm{s}{\lambdai}{q}  \right)
\\
    \boxplus
~
     \comm{r}{\lambdai}{s} ; \comm{s}{\lambdai}{q} ; \comm{p}{\lambdai}{q} ; \comm{q}{\lambdai}{r} ; \comm{r}{\lambdai}{t}
\end{array}
\]
Global type $\G$ guardedly types the following network $\N$.
\[
\begin{array}{rl}
& 
\loc{p}{ \send{q}{\lambdai} }
\\
\pipar
&
\loc{q}{(\recv{p}{\lambdai} ; \send{r}{\lambdai} ; \recv{s}{\lambdai})  +  (\recv{s}{\lambdai} ; \recv{p}{\lambdai} ; \send{r}{\lambdai})}
\\
\pipar
&
\loc{r}{ \send{t}{\lambdai} ; \recv{q}{\lambdai} ; \send{s}{\lambdai}  }
\\
\pipar
&
\loc{s}{ \recv{r}{\lambdai} ; \send{q}{\lambdai}  }
\\
\pipar
&
\loc{t}{ \recv{r}{\lambdai}  }
\end{array}
\]
Network $\N$ is race-free and $\N \goesto{\comm{p}{\lambdai}{q}} \M$.
If we insisted that $I = J$ in Fig.~\ref{gt trans}, then there would be no $\G'$ such that 
$\G \dgoesto{\comm{p}{\lambdai}{q}} \G'$
and $\G' \vdashg \M$, as required for session fidelity.
Narrowing of the global type by hiding branches of a choice, as permitted by $I \subseteq J$, is required when we have a race-free network, but the global type is not race-free, as in the above example.
\vspace{1ex plus 2pt}
\end{observation}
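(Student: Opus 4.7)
The plan is to verify each of the four claims in the observation by direct calculation and then to isolate the structural reason behind the negative claim.

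First, to establish $\N \vdashg \G$, I would compute the five projections $\proj{u}{\G}$ for $u \in \{p,q,r,s,t\}$. The projections to $p$, $s$ and $t$ are identical on the two branches of $\G$ and collapse under merge to sequential types that match the corresponding threads directly. The projection $\proj{r}{\G}$, where $r$ is the choice leader, is the internal choice $\send{t}{a}; \recv{q}{a}; \send{s}{a} \oplus \send{s}{a}; \recv{q}{a}; \send{t}{a}$; the thread $\proc(r,\N) = \send{t}{a}; \recv{q}{a}; \send{s}{a}$ relates to it via the narrowing condition $I \subseteq J$ in the rule for $\oplus$ of Figure~\ref{fig:types}. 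The projection $\proj{q}{\G}$ is the merge of two receive-send sequences differing in the order of the receives from $p$ and $s$; the external choice at $q$ matches because each summand satisfies $\vdash$ against the corresponding branch of the merge. All projections are guarded since $\G$ contains no recursion. Race-freedom of $\N$ follows from a short enumeration of reachable states, using that $q$'s external choice commits to one summand upon its first receive. The transition $\N \goesto{\comm{p}{a}{q}} \M$ yields $\proc(p,\M) = \End$ and $\proc(q,\M) = \send{r}{a}; \recv{s}{a}$, with the other threads unchanged. The candidate type obtained by discarding branch~$2$ and stripping $\comm{p}{a}{q}$ from branch~$1$ is $\G' = \comm{r}{a}{t}; \comm{q}{a}{r}; \comm{r}{a}{s}; \comm{s}{a}{q}$; a routine recomputation of projections confirms $\M \vdashg \G'$, and the derivation $\G \dgoesto{\comm{p}{a}{q}} \G'$ proceeds by the second rule of Figure~\ref{gt trans} with $I = \{1\} \subsetneq J = \{1,2\}$, invoking the first rule on branch~$\G_1$.

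The substantive part of the argument is the negative claim. Suppose the second rule of Figure~\ref{gt trans} were restricted to $I = J$. Then any derivation of $\G \dgoesto{\comm{p}{a}{q}} \G'$ would require both branches to admit a $\comm{p}{a}{q}$-transition. Branch $\G_1 = \comm{p}{a}{q}; \comm{q}{a}{r}; \comm{r}{a}{s}; \comm{s}{a}{q}$ does, via the first rule. Branch $\G_2 = \comm{r}{a}{s}; \comm{s}{a}{q}; \comm{p}{a}{q}; \comm{q}{a}{r}; \comm{r}{a}{t}$ does not: by case analysis on the three rules of Figure~\ref{gt trans}, only the second is applicable. It can commute $\comm{p}{a}{q}$ past the head $\comm{r}{a}{s}$, since $p,q \notin \{r,s\}$, leaving the subterm $\comm{s}{a}{q}; \comm{p}{a}{q}; \comm{q}{a}{r}; \comm{r}{a}{t}$. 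A further application of the second rule would then require $p,q \notin \{s,q\}$, which fails because $q \in \{s,q\}$; the first rule does not match the head $\comm{s}{a}{q}$; and there is no recursion. The derivation blocks, so no such $\G'$ exists under the strict $I = J$ formulation.

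The main obstacle to executing the plan is pure bookkeeping: the projection and typing calculations are routine, and the only mildly subtle positive step is matching the external-choice thread at $q$ against the two-branch merge in $\proj{q}{\G}$, which is exactly the scenario the general merge judgement of Figure~\ref{fig:types} is designed to accommodate. The genuine content of the observation, namely the commutation failure of $\comm{p}{a}{q}$ across the receive-partner $\comm{s}{a}{q}$ at $q$ in branch~$2$, is isolated in the middle paragraph.
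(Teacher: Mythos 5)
Your proposal is correct and follows the same (indeed the only) route as the paper: the paper merely asserts the facts about this example, and you supply the routine verifications -- the projection/typing computations, race-freedom, the candidate $\G' = \comm{r}{\lambdai}{t}; \comm{q}{\lambdai}{r}; \comm{r}{\lambdai}{s}; \comm{s}{\lambdai}{q}$ under the $I \subseteq J$ rule, and the case analysis showing the second branch's continuation blocks because the side condition $p,q \notin \{s,q\}$ fails at $\comm{s}{\lambdai}{q}$, which is precisely the ``global type is not race-free'' phenomenon the observation points to. The only cosmetic slip is that you write $\G_2$ for the whole second branch including its head $\comm{r}{\lambdai}{s}$, whereas in the rule the $\G_i$ denote the continuations after the heads; your inner argument checks exactly the right subterm, so nothing is affected.
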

}

Using the above, we can prove our soundness result.
This is where we appeal to justness.

\begin{trivlist}
 \item[\hspace{\labelsep}\hspace{10pt}\textit{
{\thm{guarded typing}}:}]
\indent
If $\N$ is guardedly well-typed and race-free, then $\N\models\Live{\J}$.
\end{trivlist}

\begin{proof}
{
Let $\N=\N_0$ be race-free and assume that $\N_0 \vdashg \G_0$. 
Let \(\pi = \N_0 \mathrel{\raisebox{0pt}[4pt]{$\goesto\tau$}^*}\goesto{\comm{p_0}{\lambda_0}{q_0}} \N_1
\mathrel{\raisebox{0pt}[4pt]{$\goesto\tau$}^*}\goesto{\comm{p_1}{\lambda_1}{q_1}} \dots\)
be a path on which some location $p\in\locs(\N)=\locs(\N_i)$ does not successfully terminate, and that contains only finitely many
transitions involving $p$. We aim to show that $\pi$ is not just.

Let $\ell(\pi) \in \IN$ be the index $n$ of the last state
$\N_n$ in this path, or $\ell(\pi)=\infty$ if $\pi$ is infinite.
By Lemmas~\ref{lem:SR recursion},~\ref{lem:SR internal} and~\ref{lem:session fidelity} there is a
sequence $\G_0 \dgoesto{\comm{p_0}{\lambda_0}{q_0}} \G_1 \dgoesto{\comm{p_1}{\lambda_1}{q_1}} \dots$
of length $\ell(\pi)$ such that $\N_i \vdashg \G_i$ for all $i$.

First consider the special case that for some $\G_k$ in this sequence we have $p \notin \participants{\G_k}$.
Since $\G_k$ is closed and $\proc(p,\N_k) \vdash \proj{p}{\G_k}$, \lem{participants} yields that
$\proc(p,\N_k) \vdash \End$. This implies that $\proc(p,\N_k)$ must have the form $\End$ or $\rec{X}\End$ 
As we assumed that $p$ does not successfully terminate on $\pi$, it must stay a $\tau$-transition
away from successful termination. As this $\tau$-transition is local to $p$, it follows that $\pi$
is not just.

Thus we may assume that $p \in \participants{\G_k}$ for all $\G_k$ in the above sequence.
By \cor{measure}, $\psize{\G_i}{p}$ is finite for all $i$.
\vspace{1ex plus 2pt}

\textit{Claim:} When $\N \vdashg \G$, $\G \dgoesto{\comm{t}{\lambda}{q}} \GH$, $p \in \participants{\G}$ and $p\neq t,q$ then
$\psize{\GH}{p}\leq \psize{\G}{p}$.
Moreover, if the transition $\G \dgoesto{\comm{t}{\lambda}{q}} \GH$ is derived without using the second rule in
Figure~\ref{gt trans}, then $\psize{\GH}{p} < \psize{\G}{p}$.
\vspace{1ex plus 2pt}

\textit{Proof:} A trivial induction on the derivation of $\G \dgoesto{\comm{t}{\lambda}{q}} \GH$.
Note that the conclusion $\psize{\GH}{p} < \psize{\G}{p}$ is not warranted when\linebreak\newpage  
\noindent the second rule is
used, due to the possibility that $p =r$ or $p=s_i$, where $r$ and $s_i$ are location variables of that rule.
\hfill \rule{7pt}{7pt}
\vspace{1ex plus 2pt}

\textit{Application of the claim:}
Since we assumed that $\pi$ contains only finitely many transitions involving $p$,
by restricting attention to a suffix of $\pi$ we may just as well assume that
no transition in $\pi$ involves $p$, \ie all $p_i$ and $q_i$ differ from $p$.

As $\psize{\G_i}{p} \geq 0$ for all $i$, there must be a $\G_k$ in the above sequence\vspace{-2pt}
such that $\psize{\G_l}{p} = \psize{\G_k}{p}$ for all $k \leq l \leq \ell(\pi)$, with $l\in\IN$. So, past $\G_k$, all transitions
$\G_l \dgoesto{\comm{p_l}{\lambda_l}{q_l}} \G_{l+1}$ are derived by means of the second rule of Figure~\ref{gt trans}.
Note that $\G_k\neq\End$ since $p \in \participants{\G}$, and $\G_k\neq X$ since $\G_k$ is closed.
Thus, possibly after unfolding recursion, $\G_k$ must have the form
$\bigboxplus_{j\in J}\, \comm{r}{\mu_j}{s_j} ; \GH_j$.
Since $\N_k \vdashg \G_k$, we have $\proc(r,\N_k) \vdash \bigoplus_{j\in J}\, \send{s_j}{\mu_j} ; (\proj{r}{\GH_j})$
and $\proc(s_j,\N) \vdash \recv{r}{\mu_j}; (\proj{s_j}{\GH_j})$ for each $j \in J$.
In case $\proc(r,\N_k)$ never performs the $\tau$-transitions needed to reach a thread state
$\chosen[2pt]{\send{s_j}{\mu_j}; \PP_j}$ with $j\in J$, the path
$\pi$ is not just, and we are done.
Likewise $s_j$ will reach a state where it is ready to receive $\mu_j$ from $r$.
So, for some $k \leq l \leq \ell(\pi)$,  with $l\in\IN$ and $j\in J$,   we have $\N_l \goesto{\comm{r}{\mu_j}{s_j}}$\;.

A straightforward induction on $l \leq m < \ell(\pi)$ shows that
$\G_m$ has the form $\bigboxplus_{j\in J_m}\, \comm{r}{\mu_j}{s_j} ; \GH^m_j$
with $j \in J_m \subseteq J$, so that $p_m \neq r$ and $q_m \neq s_j$.
Here $j\in J_m$ follows since $\N_m \vdashg \G_m$ and thus $\chosen[2pt]{\send{s_j}{\mu_j}; \PP_j} \vdash \proj{r}{\G_m}$,
and $p_m,q_m \neq r,s_j$ follows from the side condition in the second rule of Figure~\ref{gt trans}.
It follows that in the path $\pi$, past $\N_l$ neither location $r$ nor $s_j$ makes progress, and the
transition $\comm{r}{\mu_j}{s_j}$ remains enabled. Hence $\pi$ is not just.
}
\end{proof}

}{}
\end{document}